
\documentclass[format=acmsmall, review=false,screen=true]{acmart}

\startPage{1}
\settopmatter{printacmref=false, printccs=true, printfolios=false}

\fancypagestyle{firstpagestyle}{
	\fancyhf{}
}
\fancypagestyle{acmec}{
	\fancyhf{}
}
\pagestyle{acmec}

\renewcommand{\footnotetextcopyrightpermission}[1]{\thankses}

\usepackage{booktabs} 

\usepackage[ruled]{algorithm2e} 

\SetAlFnt{\small}
\SetAlCapFnt{\small}
\SetAlCapNameFnt{\small}
\SetAlCapHSkip{0pt}
\IncMargin{-\parindent}

\acmISBN{978-1-4503-5829-3/18/06}


\setcopyright{acmlicensed}

\acmDOI{10.1145/3219166.3219177}


\copyrightyear{2018}
\acmYear{2018} 

\acmConference[ACM EC '18]{2018 ACM Conference on Economics and Computation}{June 18--22, 2018}{Ithaca, NY, USA}
\acmPrice{15.00}

\usepackage{graphicx} 
\usepackage{color}
\usepackage{amsthm}
\usepackage{amssymb}
\usepackage{tikz}
\usepackage{enumitem}
\usepackage[]{url} 

\newtheorem{theorem}{Theorem}
\newtheorem{prop}[theorem]{Proposition}
\newtheorem{lemma}[theorem]{Lemma}
\newtheorem{cor}[theorem]{Corollary}

\theoremstyle{definition}
\newtheorem{example}[theorem]{Example}
\theoremstyle{remark}

\numberwithin{theorem}{section}

\newenvironment{proofsketch}{%
  \proof}{\endproof}
  
\newcommand{\eps}{\varepsilon}
\newcommand{\paren}[1]{\left(#1\right)}
\newcommand{\bracket}[1]{\left[#1\right]}
\newcommand{\set}[1]{\left\{#1\right\}}

\newcommand{\abs}[1]{\left\lvert#1\right\rvert}
\newcommand{\norm}[1]{\left\lVert#1\right\rVert}
\newcommand{\mb}{\mathbb}
\newcommand{\mc}{\mathcal}

\newcommand{\R}{\mathbb{R}}
\newcommand{\Z}{\mathbb{Z}}
\newcommand{\N}{\mathcal{N}}

\DeclareMathOperator*{\esssup}{ess\,sup}

\newcommand{\new}[1]{#1}

\usepackage{multiaudience}

\SetNewAudience{full}
\SetNewAudience{abb}
\DefCurrentAudience{full}

\begin{document}

\title[Naive Bayesian Learning in Social Networks]{Naive Bayesian Learning in Social Networks}

\thanks{\emph{Full version of the paper published in Proceedings of the 19th ACM Conference on Economics and Computation (EC'18) \citep{proceedingsversion}.}}

\author{Jerry Anunrojwong}
\affiliation{%
  \institution{Harvard University}
  \department{Department of Statistics}
  \city{Cambridge}
  \state{MA}
  \postcode{01238}
  \country{USA}
}
\email{jerryanunroj@gmail.com}
\author{Nat Sothanaphan}
\affiliation{%
  \institution{Massachusetts Institute of Technology}
  \department{Department of Mathematics}
  \city{Cambridge}
  \state{MA}
  \postcode{02139}
  \country{USA}
}
\email{natsothanaphan@gmail.com}

\begin{abstract}
The DeGroot model of naive social learning assumes that agents only communicate scalar opinions. 
In practice, agents communicate not only their opinions, but their confidence in such opinions. We propose a model that captures this aspect of communication by incorporating signal informativeness into the naive social learning scenario. Our proposed model captures aspects of both Bayesian and naive learning. Agents in our model combine their neighbors' beliefs using Bayes' rule, but the agents naively assume that their neighbors' beliefs are independent. Depending on the initial beliefs, agents in our model may not reach a consensus, but we show that the agents will reach a consensus under mild continuity and boundedness assumptions on initial beliefs. This eventual consensus can be explicitly computed in terms of each agent's centrality and signal informativeness, allowing joint effects to be precisely understood. We apply our theory to adoption of new technology. In contrast to \citet{BBCM}, we show that information about a new technology can be seeded initially in a tightly clustered group without information loss, but only if agents can expressively communicate their beliefs.
\end{abstract}

\maketitle

\section{Introduction}

This paper introduces an intuitive naive social learning model that extends the DeGroot model when agents are differently informed. In standard naive learning, each agent's belief is a scalar, and each agent updates her beliefs by taking the weighted averages of her neighbors' beliefs with fixed weights. This implicitly assumes that every agent's signal quality is the same, which is often not true in practice. We propose a solution to this problem by representing each agent's belief as a probability distribution over possible states of the world.
Signal quality can then be naturally communicated by using Bayes' rule to combine one's belief with the beliefs of one's neighbors.
For example, if an agent has a scalar opinion $\mu$ and a scalar confidence $\tau$ indicating how sure she is of that scalar opinion, her belief can be represented by the Gaussian $\N(\mu,1/\tau)$.
Our model can also accommodate other forms of information.
For example, an agent may believe that either one of the two values is very likely, while other values are less likely. This belief can be represented
as a double-peaked distribution, but it cannot be represented in standard DeGroot models.

It is known separately in Bayesian learning and naive learning literatures that higher quality signals and more centrally located agents, respectively, have more influence in social learning. The innovation in our model is to introduce  signal quality into the naive learning setting, allowing us to precisely investigate how signal quality and agent centrality interact.
The two notions are intertwined in the \textit{weighted likelihood function}, where the formal expression is at equation (\ref{eqn:weightedlik}) in Section \ref{subsec:contribute}. Informally, the value of this function at $\theta$ is the product of the votes of each agent, where an agent's vote is how likely the state value $\theta$ is after the agent received the signal compared to before (the signal quality part), and the number of votes is the agent's eigenvector centrality (the centrality part). 

Our main theorem is that, under some conditions, long-run beliefs are concentrated near the maximizers of the weighted likelihood function. If we know the probability densities of initial beliefs, we can explicitly calculate the consensus belief. This allows us to analyze the interaction; for example, we can calculate how much more informative a signal has to be to offset the centrality disadvantage. This formula can have practical applications in the analysis of information spread in decentralized networks, such as propagation of rumors or adoption of new technologies. Practitioners interested in empirical results can also use our formula without having to fully grasp the theory in the paper. Some examples of applications are discussed in Section \ref{sec:apps}.

\citet{rj} propose essentially the same model in this paper in the case where the state space is finite, and they have derived the same concentration result as ours in that case. Our novel contribution is in the analysis of feasibility and convergence of beliefs when the state space is infinite discrete or continuous. We also draw two policy implications from the model with Gaussian beliefs: how to seed opinion leaders for technology adoption, and how to solve the clustered seeding problem introduced by \citet{BBCM}. We discuss the relationship between our work and \citet{rj} in Subsection \ref{subsec:relatedlit}.


Finally, we contrast the results of our model with that of \citet{BBCM}.
Their model tries to solve the issue of differently informed agents in DeGroot learning by assuming that each agent can either hold a scalar belief or be uninformed. While their model can capture the spread of awareness of new things, it does not have the notion of signal informativeness that our model provides.


Our model also gives a different prediction than that of \citet{BBCM} 
on how to seed a new technology for widespread adoption.
Their model has a \textit{clustered seeding} problem: there is substantial information loss if the initially informed agents (``seeds'') are clustered in a small group of communities. This is a problem because a social planner who wants to maximize the impact of a newly introduced technology should seed it to centrally located opinion leaders, but these leaders are often closely connected to each other. Our model does not have this problem; the influence of each agent depends only on the agent's signal informativeness and centrality. The key modeling difference is that agents in our model can communicate confidence of their beliefs that their neighbors can take into account. Therefore, to solve the clustered seeding problem, the social planner should encourage open and expressive communication between agents.


\subsection{Contributions}
\label{subsec:contribute}

Our agents are in a social network, represented as a strongly connected directed graph.
Assume that there is a true state of the world $\theta^*$. Each agent gets a signal drawn from a distribution that depends on $\theta^*$.
She then forms a \emph{belief}, the posterior probability distribution of $\theta^*$ given the signal.
In each round, each agent has access to beliefs of her neighbors and naively assumes that these all come from \textit{independent} signals. She then forms the Bayesian posterior of $\theta^*$ given the signals of her own and her neighbors as her new belief. This gives rise to an update rule that combines her belief with her neighbors' beliefs that she uses in every round.

We prove that our learning process can be viewed as follows.
At every time step, each agent passes copies of signals she has to her neighbors, so copies of signals ``flow'' through the network. Every copy is taken as independent and no copy is discarded, so the number of copies of agent $j$'s signal that agent $i$ has at round $t$ is the number of paths from $j$ to $i$ of length $t$. Agent $i$'s belief at time $t$ is then the posterior belief conditional on these repeated signals (Prop. \ref{prop:underlyingfeasible}).

We then tackle the important question of what happens to agents' beliefs in the long run. The answer depends on the following \textit{weighted likelihood function} $L$ (Def. \ref{def:weightedlikelihood}).
Let $f_i^{(t)}(\theta)$ be agent $i$'s belief of $\theta^*$ in round $t$ and $f_*(\theta)$ be the common prior of $\theta^*$. 
Then
\begin{align}\label{eqn:weightedlik}
   L(\theta) = \prod_{i=1}^{N} \paren{\frac{f_i^{(0)}(\theta)}{f_*(\theta)}}^{v_i}, 
\end{align}
where $v_i$ is the eigenvector centrality of agent $i$.
We show that if $L(\theta_1)<L(\theta_2)$, then as $t\to\infty$,
\begin{equation}
\label{eq:ratio0}
\frac{f_i^{(t)}(\theta_1)}{f_i^{(t)}(\theta_2)}\to 0
\end{equation}
(Prop. \ref{prop:maxdominate}).
In other words, agents' beliefs that the true parameter is $\theta_2$ are much stronger than the corresponding beliefs for $\theta_1$ as time goes on.
Thus, in the long run, agents' beliefs should concentrate at values $\theta$ that maximize or nearly maximize $L$.

The form of the weighted likelihood function highlights two factors that determine the quality of learning. An agent who is more \textit{centrally located} (high $v_i$) and has a more \textit{informative signal} (high or low $f_i^{(0)}(\theta)/f_*(\theta)$) has more influence over the consensus belief.
Note that our consensus belief
is fundamentally different from that of DeGroot.
In our model, the limit belief is the weighted likelihood maximizer, while in DeGroot, it is a weighted average of beliefs.

The fact that beliefs tend to converge to a point distribution agrees with the following intuition:
agents take a growing number of repeated signals as independent, so they become increasingly confident in their beliefs. 
The fact that the limit belief is tractable and interpretable comes as a surprise, and we believe this to be an attractive feature of our model.

In spite of this intuition and much to our astonishment, the concentration previously discussed turns out not to hold in general if $\Theta$, the state space, is infinite.
We discover that even with a single $\theta_{\max}$ uniquely maximizing $L$, agents' beliefs may not converge to the point distribution at $\theta_{\max}$
(Ex. \ref{ex:countercountable}).
We resolve this difficulty by introducing an extra condition: if $\theta_{\max}$ still maximizes a ``perturbed'' version of $L$, then the concentration result holds.
We can think of this as $\theta_{\max}$ being a sufficiently robust maximizer of $L$.
This result requires careful analysis of convergences of infinite sequences, and the generalization of this idea for an arbitrary state space $\Theta$ can be found in Proposition \ref{prop:lessthanapoint}.

When the state space $\Theta = \R$ is a continuum, for example in the case of Gaussian beliefs, this perturbed $L$ condition is still too strong: perturbing $L$ will almost always shift the maximizer due to the continuity of the space.
The correct condition needs to allow the maximizer to shift, but not too much. Furthermore, it is insufficient to consider the density at a single point, since this tells us nothing about densities at nearby points and so cannot establish concentration. Instead, we need to show that for any neighborhood of $\theta_{\max}$, agents' beliefs that the true state lies in that neighborhood tend to 1.
Our Master Theorem \ref{thm:master} pulls together all of these ideas.

With the Master Theorem \ref{thm:master}, we derive a convergence theorem when the state space is a continuum,
provided that the initial beliefs and priors are sufficiently continuous (Def. \ref{def:piecewise}). Our main result in this most difficult case is Theorem \ref{thm:Rkmain}. The following is a simplified version of this theorem, simplified from Corollary \ref{cor:Rkcontinuous}.

\begin{theorem}
\label{thm:mainsimplified}
Assume that the state space $\Theta = \R^k$.
Suppose that the normalized beliefs $f_i^{(0)}(\theta)/f_*(\theta)$ are bounded continuous functions; the weighted likelihood function $L$ attains the maximum $L_{\max}$ uniquely at the point $\theta_{\max}$; and $L$ decays in the sense that there is a bounded set $S$ such that $\sup_{\theta \notin S} L(\theta) < L_{\max}$.
Then, as $t\to\infty$, beliefs $f_i^{(t)}$ converge to the point distribution at $\theta_{\max}$.
\end{theorem}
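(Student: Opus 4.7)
The plan is to derive this as a corollary of the Master Theorem \ref{thm:master}, whose hypotheses I need to verify in this continuous-state setting. The Master Theorem already handles the technical passage from a sharp-maximum condition on the weighted likelihood $L$ to the concentration statement on the beliefs $f_i^{(t)}$; my job is therefore to check that the assumptions of Theorem \ref{thm:mainsimplified} imply that sharp-maximum condition.

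First I would observe that $L$ is bounded and continuous on $\R^k$: each factor $f_i^{(0)}/f_*$ is bounded continuous by hypothesis, and the exponents $v_i$ are positive, so $L$ is a finite product of bounded nonnegative continuous functions raised to positive powers. Boundedness is crucial later because it controls how large the densities inside the integral defining $f_i^{(t)}$ can grow.

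The key step is then the \emph{separation gap}: for every open neighborhood $U$ of $\theta_{\max}$, I will show $\sup_{\theta\notin U} L(\theta) < L_{\max}$. I split $\R^k\setminus U$ into two pieces. On $\R^k\setminus S$ the decay hypothesis directly gives $\sup_{\theta\notin S} L(\theta)=:L_1<L_{\max}$. On $S\setminus U$, the set $S$ is bounded, so (after taking closures and, if necessary, shrinking $U$ to make sure $\theta_{\max}$ remains in its interior) this is a compact subset of $\R^k$ on which the continuous function $L$ attains its supremum $L_2$; since $\theta_{\max}$ is the unique maximizer and lies outside this compact set, $L_2<L_{\max}$. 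Combining the two pieces, $\sup_{\theta\notin U} L(\theta)\le\max(L_1,L_2)<L_{\max}$, which is exactly the ``robust maximizer'' condition expected by the Master Theorem, expressible in the paper's perturbation language because shrinking $U$ can absorb a small perturbation of $L$ without destroying the gap.

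The main obstacle will be ensuring that this separation gap, together with the boundedness and continuity of the normalized initial beliefs, plugs cleanly into the precise hypotheses of Theorem \ref{thm:master} (via Corollary \ref{cor:Rkcontinuous}). The subtlety is that the Master Theorem must promote the pointwise ratio decay of Proposition \ref{prop:maxdominate} to the integral statement $\int_{U^c} f_i^{(t)}\to 0$, which requires uniform tail control: the decay hypothesis on $L$ supplies this at infinity, while continuity plus uniqueness of the maximizer supplies it on the bounded middle region, and boundedness of $f_i^{(0)}/f_*$ lets a dominated-convergence argument convert the exponential gap in $L$ into an integrable bound. Given that, the bulk of the analytical work lives inside the Master Theorem, and the corollary reduces to the elementary compactness/continuity argument sketched above.
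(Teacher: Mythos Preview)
Your proposal is correct and follows essentially the same route as the paper: Theorem~\ref{thm:mainsimplified} is obtained directly from Corollary~\ref{cor:Rkcontinuous}, which in turn packages the compactness/separation-gap argument you sketch inside Theorem~\ref{thm:Rkmain} and Proposition~\ref{prop:nearessup} before invoking the Master Theorem. The only small omission is that feasibility of the initial condition must be checked first, which the paper does via Corollary~\ref{cor:suffcientfeas} using the boundedness of the $g_i^{(0)}$.
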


This result states that whenever the initial beliefs are continuous functions that are bounded with respect to the prior and they decay sufficiently fast, agents' beliefs converge to the maximizer of the weighted likelihood function if this maximizer is unique.
This result encompasses a wide range of probability distributions in practice.
For example, the theorem applies when beliefs are multivariate Gaussian. Therefore, Theorem \ref{thm:mainsimplified} is widely applicable.

The rest of the paper is organized as follows. We discuss 
related literature in the remaining of this section. Section \ref{sec:model} describes our model. Section \ref{sec:mathprelim} discusses mathematical preliminaries. Section \ref{sec:understand} gives fundamental definitions, interprets the updated beliefs, and analyzes feasibility of initial conditions. Section \ref{sec:concentrate} proves the ``Master Theorem'' that establishes concentration of beliefs near the weighted likelihood maximizers. Section \ref{sec:converge} uses the Master Theorem to prove convergence of beliefs in the finite, infinite discrete, and $\R^k$ cases. Section \ref{sec:apps} gives applications in the cases of binary, Poisson, and Gaussian beliefs.

\subsection{Related Literature}
\label{subsec:relatedlit}

Our paper is closely related to a growing literature on learning and belief formation in social networks \cite{banerjee-herding-1992, bikhchandani-hirshleifer-welch-1992-jpe,pathological-outcomes-observational-learning-smith-sorensen,bala-goyal-2000,gale-kariv-2003-bayesian-learning-in-social-networks,rosenberg-solan-vieille-2009-geb}. Much of this literature focuses on Bayesian learning \cite{celen-kariv-2004-geb,general-framework-rational-learning-networks-mueller-frank,lobel-sadler-information-diffusion-networks,bayesian-learning-acemoglu-dahleh-lobel-ozdaglar} and DeGroot learning \cite{demarzo-vayanos-zwiebel,golub-jackson-wisdom-of-crowds}. Comparable to \citet{golub-jackson-wisdom-of-crowds}, our model shows that centrally located agents, as measured by eigenvector centralities, have more influence on the consensus belief. However, in our model, eigenvector centralities appear as duplicate agents rather than linear weights in the limit belief.

Specifically, our paper belongs to the quasi-Bayesian learning literature.
\citet{eyster-rabin-2010,eyster-rabin-2014} and \citet{rahimian-molavi-jadbabaie-2014} study long-run aggregation of information when agents ignore repetition. \citet{li-tan-local-networks} study learning where each agent updates both her belief on the state of the world and her belief on her neighbors' beliefs.
\citet{jadbabaie-non-bayesian-social-learning} also combine DeGroot learning with Bayesian updates.
Features that distinguish their model from ours are
learning objectives and arrivals of information. Agents in our model receive signals only once at the beginning, rather than at every time step, and Bayesian-update based on their neighbors' beliefs rather than taking weighted averages of distributions. Agents in our model can form and update beliefs even if the true state of the world does not exist, and our limit beliefs are biased by influential agents which do not exist in their model. Lastly, likelihood ratios play a central role in our model; they appear in \citet{rahimian-jadbabaie-loglinear} and \citet{theory-non-bayesian-social-learning} in a different context.

\citet{rj} propose a model where agents use Bayes' rule to come up with a belief update rule that is optimal for the first round and naively continue to use the same rule for subsequent rounds. This naive agent behavior is the same as ours, and resembles the microfoundation of the DeGroot update rule proposed in \citet{demarzo-vayanos-zwiebel}. Agents in their model have utilities; they take actions to maximize their utilities and only observe their neighbors' actions. In contrast, agents in our models do not have utilities, do not take actions, and have direct access to their neighbors' beliefs. When the state space is finite, the action space is a probability simplex over the state space, and each agent's utility is the negative Euclidean distance between her action and the point mass at the true state, each agent's action corresponds to her belief. Under these assumptions, their model becomes equivalent to ours when the state space is finite, and they independently derive what we call the naive Bayesian update rule (Def \ref{def:update}) in their Section 4. They also independently prove the concentration result in this special case when the state space is finite (Thm. \ref{thm:mainfinite}) in their Theorem 4.

Finally, our work is related to \citet{BBCM}. In their model,
the informedness of agents is a binary state, informed individuals hold scalar beliefs, and the learning rule is to take the average belief of informed neighbors.
This contrasts with our model
where informedness is a continuous attribute.
Their work identifies the clustered seeding problem discussed in the introduction, which does not occur in our model.
See further discussion of this problem in Section \ref{subsec:gaussian}.

\section{Model}\label{sec:model}

We discuss the assumptions of our model in this section.

\subsection{Social Network}
Our agents are in a social network, modeled as a directed graph $G$ with $N \geq 2$ nodes, labeled $1,2,\dots,N$ corresponding to $N$ agents.
An edge from $i$ to $j$ is denoted $(i,j)$ and means that there is
information flow from agent $i$ to agent $j$. Assume that $G$ is strongly connected, that is, for every $i$ and $j$ there is a path from $i$ to $j$.
Suppose that there is a self-loop $(i,i)$ at every node,
meaning that information flows from every agent to herself.
If there is an edge $(i,j)$, we say that $i$ is an \emph{in-neighbor} of $j$.
Let $N(i)$ denote the set of in-neighbors of $i$.

\subsection{Beliefs}

Let $\Theta$ be the space of parameters $\theta$, modeled as a measure space.
We model beliefs as \emph{probability density functions} on $\Theta$: nonnegative measurable functions $f:\Theta \to \R_{\geq 0}$ such that the integral
$\int_{\theta \in \Theta} f(\theta)=1.$
These reduce to probability mass functions in a discrete setting, i.e. when $\Theta$ is discrete with the counting measure.
Note that we model beliefs as probability densities, which is a special case of general probability distributions. For example, a point distribution on $\R$ cannot be represented as a density, but most probability distributions in practice are densities.
A \emph{prior} on $\Theta$ is any nonnegative measurable function $f:\Theta \to \R_{\geq 0}$. If the integral $\int_{\theta \in \Theta} f(\theta)$
diverges, we call the prior $f$ \emph{improper}.
Note that our improper priors are not necessarily limits of proper priors,
but simply densities whose integrals diverge and so cannot be normalized.
An example of an improper prior is the uninformative prior, $f(\theta)\equiv 1$,
where $\Theta$ is an infinite discrete set or an unbounded interval of $\R$.

Suppose that each agent has a common (possibly improper) prior  $f_*$ for the true state of the world $\theta^*$ on $\Theta$. Assume that $f_*(\theta)>0$ for all $\theta \in \Theta$.
This assumption is important in order to state the update rule for beliefs (e.g. in Definition \ref{def:update}).

\subsection{Belief Updates}
\label{subsec:modelupdates}
Agents update their beliefs based on the beliefs of their in-neighbors using the following rule, which will be justified shortly.

\begin{definition}
\label{def:update}
The \emph{naive Bayesian updated belief} for agent $i$
with respect to beliefs $f_1,\dots,f_N$ is the belief
$$f(\theta) = \frac{f_*(\theta) \prod_{j \in N(i)}{\paren{f_j(\theta)/f_*(\theta)}}}
{\int_{\theta' \in \Theta} f_*(\theta') \prod_{j \in N(i)}{\paren{f_j(\theta')/f_*(\theta')}}},$$
which is well-defined whenever the integral in the denominator is nonzero and finite.
\end{definition}

It is easily checked that this definition defines a belief. We now provide a justification for this definition. In the justification, we will assume that $f_*$ is a probability density function, but
the formula in Definition \ref{def:update} also makes sense for improper priors.

Consider the following scenario called an \emph{underlying scenario}. Suppose that agent $i$ knows a family of probability density functions $(p_{\theta,i})_{\theta \in\Theta}$, indexed by the parameter $\theta$ and the agent $i$, over a common sample space $\Omega_i$.
We make the following marginalizability assumption, \new{which is important in order to reason consistently about marginal densities}.
For any \new{integers} $n_1,\dots,n_N \geq 0$, 
let $X_{i,1},\dots,X_{i,n_i}$ be \new{random variables} drawn independently from $p_{\theta,i}$.
\new{We thus have a joint probability density for these random variables}
$$p\big( (X_{i,j})_{1\leq i \leq n, 1 \leq j \leq n_i},\theta\big) =p(\theta) \prod_{i=1}^N \prod_{j=1}^{n_i} p(X_{i,j}|\theta).$$
Assume that this distribution can be marginalized over $\theta$ with positive density, that is
$$p\big( (X_{i,j})_{1\leq i \leq n, 1 \leq j \leq n_i}\big)=\int_{\theta \in \Theta} p(\theta) \prod_{i=1}^N \prod_{j=1}^{n_i} p(X_{i,j}|\theta)$$
is nonzero and finite.

Agent $i$ receives a signal $X_i$ drawn from $p_{\theta^*,i}$, \new{the distribution corresponding to the true parameter $\theta^*$},
without knowing $\theta^*$.
She then forms a belief $f_i(\theta)$ for $\theta^*$
conditional on \new{the signal} $X_i$.
Suppose that agent $i$ learns about beliefs $f_j(\theta)$ of all her in-neighbors $j \in N(i)$, \new{formed in the same way}, and wishes
to perform a Bayesian update on her own belief. The result \new{will be} the belief $f(\theta)$
for $\theta^*$ conditional on \new{signals} $X_j$ for every $j \in N(i)$.
\new{Even though the resulting density seems to depend on $X_j$, we will show in the next proposition} that this update can be performed knowing only the beliefs $f_j$ but not the private signals $X_j$,
with the formula as in Definition \ref{def:update}.

\begin{prop}
\label{prop:justifyrule}
In an underlying scenario \new{described above}, the updated belief for agent $i$ is the naive Bayesian updated belief with respect to beliefs $f_1,\dots,f_N$ as in Definition \ref{def:update}.
\end{prop}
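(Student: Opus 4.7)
The plan is to compute agent $i$'s Bayesian posterior $p(\theta \mid (X_j)_{j \in N(i)})$ directly from first principles and match it against Definition \ref{def:update}. Because the signals $X_j$ are, conditional on $\theta^* = \theta$, drawn independently from the respective distributions $p_{\theta,j}$, the joint conditional likelihood factors as $\prod_{j \in N(i)} p(X_j \mid \theta)$. A single application of Bayes' rule then gives
$$f(\theta) \;=\; \frac{f_*(\theta)\prod_{j \in N(i)} p(X_j \mid \theta)}{\int_{\Theta} f_*(\theta')\prod_{j \in N(i)} p(X_j \mid \theta')\,d\theta'},$$
and the marginalizability hypothesis in the underlying scenario guarantees that the denominator is finite and positive.

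The next step is to eliminate the private signals $X_j$ by rewriting each conditional likelihood in terms of the belief $f_j$ that agent $i$ actually has access to. Applying Bayes' rule to agent $j$'s single-signal update gives $f_j(\theta) = f_*(\theta)\,p(X_j \mid \theta)/p(X_j)$, which inverts to $p(X_j \mid \theta) = p(X_j)\,f_j(\theta)/f_*(\theta)$; this inversion is exactly what the standing assumption $f_*(\theta) > 0$ is for. Substituting into the previous display, both the numerator and every integrand in the denominator acquire the same $\theta$-independent factor $\prod_{j \in N(i)} p(X_j)$.

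These factors cancel, leaving precisely the expression in Definition \ref{def:update}. The whole argument is essentially a bookkeeping calculation: a conditional-independence factorization of the joint likelihood, one Bayes inversion per neighbor to write each $p(X_j \mid \theta)$ in terms of $f_j$, and cancellation of a signal-dependent but $\theta$-independent normalizing factor. I do not anticipate any serious obstacle; the only points that need care are (i) recovering $p(X_j \mid \theta)$ from $f_j$, which requires $f_*(\theta) > 0$, and (ii) knowing that the resulting normalizing integral is finite and nonzero, which is exactly the marginalizability assumption built into the definition of an underlying scenario. Both conditions are assumed in the setup, so the proof reduces to carrying out the substitutions and checking that what remains is the formula of Definition \ref{def:update}.
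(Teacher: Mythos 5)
Your proposal is correct and follows essentially the same route as the paper's proof: Bayes' rule plus conditional independence to factor the joint likelihood, a second Bayes inversion $p(X_j\mid\theta)=p(X_j)f_j(\theta)/f_*(\theta)$ for each neighbor, and cancellation of the $\theta$-independent factor $\prod_{j\in N(i)}p(X_j)$, with marginalizability guaranteeing the denominator is nonzero and finite.
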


\begin{shownto}{abb}
\begin{proofsketch}
Use Bayes' rule to compute the updated belief as
$$f(\theta)=p\paren{\theta|(X_j)_{j \in N(i)}}
= \frac{p(\theta) \prod_{j \in N(i)} p\paren{X_j |\theta} }{\int_{\theta' \in \Theta} p(\theta') \prod_{j \in N(i)} p\paren{X_j |\theta'} }.$$
Another application of Bayes' rule on $p(X_j|\theta)$ finishes the proof.
\end{proofsketch}
\end{shownto}

\begin{shownto}{full}
\begin{proof}
By definition, the updated belief is
$$f(\theta)=p\paren{\theta|(X_j)_{j \in N(i)}}
= \frac{p(\theta)p\paren{(X_j)_{j \in N(i)} |\theta} }{\int_{\theta' \in \Theta} p(\theta') p\paren{(X_j)_{j \in N(i)} |\theta'} }
= \frac{p(\theta) \prod_{j \in N(i)} p\paren{X_j |\theta} }{\int_{\theta' \in \Theta} p(\theta') \prod_{j \in N(i)} p\paren{X_j |\theta'} },$$
where the second equality is Bayes' rule and the last equality holds because $X_j$'s are conditionally independent given any $\theta$.
\new{Note that} the integral in the denominator is nonzero and finite due to the marginalizability assumption.
\new{Because} $p(\theta)=f_*(\theta)$ and
$$p(X_j|\theta) = \frac{p(\theta|X_j)}{p(\theta)} p(X_j) = \frac{f_j(\theta)}{f_*(\theta)}p(X_j),$$
substituting these into our first equation
gives the desired formula.
\end{proof}
\end{shownto}

\subsection{Belief Dynamics}
\new{Here we describe the evolution of beliefs.} Agents share a common prior $f_*$ of $\theta^*$, \new{which may be improper},
and agent $i$ starts off with \emph{initial belief} $f_i^{(0)}$ of $\theta^*$.
\new{The functions} $f_*,f_1^{(0)},\dots,f_N^{(0)}$ are the \emph{initial condition} of the model.
\new{The subtlety is that not all initial conditions lead to well-defined updates. Nevertheless, in Section \ref{subsec:feasible_initial_conditions}, we will provide simple checks that guarantee the existence of updated beliefs.}
For $t=0,1,2,\dots$,
let $f_i^{(t)}$ be agent $i$'s belief of $\theta^*$ at round $t$.
Agents update beliefs as follows. 
For $t\geq 0$, $f_i^{(t+1)}$ is the naive Bayesian updated belief (Def. \ref{def:update}) for agent $i$ with respect to beliefs $f_1^{(t)},\dots,f_N^{(t)}$.

We call an initial condition \emph{feasible} if the updated beliefs are well-defined in every round, that is, the integral in Definition \ref{def:update} is nonzero and finite.
\new{Moreover}, an initial condition \emph{has an underlying scenario} if it arises from an underlying scenario in the \new{above} justification of Definition \ref{def:update}.
Proposition \ref{prop:underlyingfeasible} shows that an initial condition with an underlying scenario is always feasible.

For an initial condition with an underlying scenario, \new{note that although the update rule is rationally justified, it is} only optimal for the first round. In later rounds, an optimal update needs to take into account that the agents' beliefs are already updated at least once and so no longer independent. Still, our naive agents use the same update rule for every round. This modeling choice formalizes what we said earlier that people use Bayes' rule to learn, but they ignore repeated information.

It may seem that using Bayes' rule on the entire distribution requires agents to communicate an infinite amount of information.
Nevertheless, if conjugate priors are used (say, Gaussian beliefs and Gaussian signals, or Beta beliefs and Binomial signals), then agents only need to keep track of a finite number of distribution parameters.
Moreover, our model is robust under \emph{discretization} \new{in the following sense}.
Suppose that agents can only learn about other agents' beliefs at a fixed \new{finite} set of points. By Proposition \ref{prop:ratio_of_f_asymptotic}, the ratio of beliefs $f_i^{(t)}(\theta_1)/f_i^{(t)}(\theta_2)$ is \new{completely} determined by the initial ratios $f_j^{(0)}(\theta_1)/f_j^{(0)}(\theta_2)$ \new{for $1\leq j \leq N$}, so \new{the ratios of beliefs at these fixed points are preserved and hence so are} the approximate shapes of beliefs.
These ratios \new{also} dictate convergence of beliefs towards the maximizer of the weighted likelihood function (Sec. \ref{sec:concentrate}). Therefore, assuming that we choose points close to the maximizer and beliefs are sufficiently continuous, the discretized beliefs will converge to the approximate maximizer.

We name our model \emph{naive Bayesian} for two reasons. First, our model combines features from naive learning and Bayesian learning. Second, our agents assume that distributions of neighbors are independent conditional on the state, which is also the assumption made in Naive Bayes classifiers in machine learning.

\section{Mathematical Preliminaries}
\label{sec:mathprelim}

\new{To analyze our model, we require the following mathematics. We review Perron-Frobenius theory, which helps us determine the behavior of large powers of matrices and establish convergence of beliefs, and the theory of $L^p$ spaces, which helps us bound functions and establish feasibility of initial conditions.}

\subsection{Perron-Frobenius Theory}
\label{subsec:perron}
Let $G$ be the graph of our model, and let $A$ be the adjacency matrix of $G$:
$A_{ij}=1$ if there is an edge $(i,j)$ and 0 otherwise.
The Perron-Frobenius theory provides us with properties of eigenvalues and eigenvectors of $A$.
\begin{shownto}{abb}
We omit the proofs of Lemmas \ref{lem:primitive} and \ref{lem:spectral_radius_greater_than_one}.
\end{shownto}

\begin{definition}
A square matrix $M$ is \emph{primitive} if its entries are nonnegative and the entries of $M^k$ are positive for some $k>0$.
\end{definition}

\new{For an adjacency matrix, a power of the matrix counts the number of paths between each pair of nodes. Hence we would expect that for a strongly connected social network with self-loops, the corresponding adjacency matrix should be primitive. This is formalized in the following lemma.}

\begin{lemma}
\label{lem:primitive}
$A$ is a primitive matrix.
\end{lemma}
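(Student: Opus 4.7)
The plan is to exhibit a single exponent $k>0$ for which every entry of $A^k$ is positive, since $A$ already has nonnegative (indeed $0/1$) entries by its definition. The starting observation is the standard combinatorial interpretation of matrix powers: $(A^k)_{ij}$ equals the number of directed walks of length exactly $k$ from $i$ to $j$ in $G$. Thus I need to produce, for each ordered pair $(i,j)$, a walk of length exactly $k$ from $i$ to $j$, for one common value of $k$.

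Strong connectivity by itself yields, for any $i,j$, the existence of a directed walk from $i$ to $j$; by cutting out repeated vertices this walk can be shortened to a simple path of length at most $N-1$. The obstruction to immediately reading off primitivity is that these shortest path lengths depend on the pair $(i,j)$. Here the self-loop hypothesis does all of the work: because $(i,i)$ is an edge for every $i$, any walk from $i$ to $j$ can be extended to a walk of any strictly greater length by inserting self-loop steps at $i$ (or at $j$) the appropriate number of times.

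The main step is therefore to fix $k=N$ (note $N\geq 2$, so $k>0$) and check that every pair $(i,j)$ admits a walk of length exactly $N$. For $i\neq j$ one takes a simple path of length $\ell\leq N-1$ from $i$ to $j$ and pads with $N-\ell$ self-loops at either endpoint; for $i=j$ one simply iterates the self-loop $(i,i)$ a total of $N$ times. In either case $(A^N)_{ij}>0$, so $A^N$ has strictly positive entries and $A$ is primitive. I do not anticipate any real technical obstacle in this proof: the only mild subtlety is the diagonal case $i=j$, where strong connectivity alone would only yield the trivial length-zero walk, and one must explicitly invoke the self-loop assumption to produce a walk of positive length.
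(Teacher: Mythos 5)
Your proof is correct and follows essentially the same route as the paper's: use strong connectivity to obtain a path between each ordered pair and then pad with self-loops to equalize all lengths to a single common exponent $k$. The only cosmetic difference is that you fix $k=N$ explicitly while the paper takes $k$ to be the maximum of the pairwise path lengths; both choices work for the same reason.
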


\begin{shownto}{full}
\begin{proof}
By strong connectedness, for every $i$ and $j$, there is a path $P_{ij}$ from $i$ to $j$. Let $k$ be the maximum length of all $P_{ij}$'s. We can extend a path to arbitrary lengths by using self-loops, so  there is a path from \new{any $i$ to any $j$} of length $k$. Because the $(i,j)$ entry of $A^k$ is the number of paths from $i$ to $j$ of length $k$ (see e.g. Lemma \ref{lem:path_recurrence}), $A^k$ has positive entries. Therefore $A$ is primitive.
\end{proof}
\end{shownto}

The following theorem from \citet[Chapter 8]{meyer}, \new{a version of Perron-Frobenius for primitive matrices}, characterizes the largest eigenvalue of a primitive matrix and the growth rate of the matrix's powers.

\begin{theorem}[Perron-Frobenius for Primitive Matrices]
\label{thm:perron}
Let $M$ be a primitive matrix. Let $r=\rho(M)$ be the spectral radius of $M$, that is, the maximum absolute value of eigenvalues of $M$. Then $r$ is an eigenvalue of $M$, any other eigenvalue of $M$ has absolute value less than $r$, and there are unique vectors $v$ and $w$ with positive entries such that
$$Mv=rv,\quad wM=rw,\quad \norm{v}_1=\norm{w}_1=1.$$
\new{Note that $\norm{\cdot}_1$ denotes the sum of entries.}
Call $v$ the \emph{Perron vector} of $M$. \new{Then} $w$ is the Perron vector of $M^\top$.

\new{Finally}, the limit $\lim_{k \to \infty} M^k/r^k$ exists and is given by
$$P:= \lim_{k \to \infty} \frac{M^k}{r^k} = \frac{vw^\top}{w^\top v}.$$
\new{This limit $P$ is called} the \emph{Perron projection} of $M$.
\end{theorem}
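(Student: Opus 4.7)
The plan is to reduce the primitive case to the simpler case of a strictly positive matrix, using the fact that $M^k$ is positive for some $k$ (by definition of primitivity), and then lift the spectral information back from $M^k$ to $M$. I would first prove the positive case as a lemma: for any strictly positive square matrix $P$, the map $x \mapsto Px/\norm{Px}_1$ sends the standard simplex into its interior and, by Brouwer's fixed point theorem, admits a fixed point $v$ with strictly positive entries, yielding a positive eigenvalue $s = \rho(P) > 0$ with positive eigenvector $v$. Uniqueness of the positive eigenvector (up to scaling) and strict dominance $|\mu| < s$ for other eigenvalues $\mu$ of $P$ follow from the standard triangle inequality argument: if $Px = \mu x$ with $|\mu| = s$, then componentwise $|Px| \leq P|x|$, and pairing with the left Perron vector forces equality everywhere, which by strict positivity of $P$ pins down the phases of the entries of $x$ and gives $\mu = s$.

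Next, I would apply this lemma to $M^k$, where $k$ is chosen so that $M^k$ has all positive entries. Let $v > 0$ be the Perron vector of $M^k$ with eigenvalue $s = \rho(M^k)$. Since $M$ and $M^k$ commute, $M v$ is also an eigenvector of $M^k$ with eigenvalue $s$, hence proportional to $v$; because $M$ has a positive diagonal (from the self-loops) and $v > 0$, the multiplier is a positive scalar, which I would call $r$, satisfying $r^k = s$. The same argument with $M^\top$ produces the left eigenvector $w$. Uniqueness of both $v$ and $w$ for $M$ follows from uniqueness of positive eigenvectors of $M^k$.

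For strict dominance of $r$ over the other eigenvalues of $M$, any eigenvalue $\lambda$ of $M$ satisfies $|\lambda^k| \leq r^k$, with equality forcing $\lambda^k = r^k$ by the positive case, so $\lambda = r\zeta$ for some $k$-th root of unity $\zeta$. This is the main obstacle: ruling out $\zeta \ne 1$. The key observation is that primitivity together with self-loops implies $M^m$ is positive for every sufficiently large $m$, not just for one distinguished $k$; applying the positive case to each such $M^m$ forces $\zeta^m = 1$ for all large $m$, which is only possible if $\zeta = 1$. Thus $r$ is the unique eigenvalue of modulus $\rho(M)$, and its geometric multiplicity is one since it inherits one-dimensionality from the $r^k$-eigenspace of $M^k$.

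Finally, for the convergence $M^k/r^k \to P$, I would use the spectral decomposition of $M$ induced by the fact that $r$ is a simple, strictly dominant eigenvalue. Writing $M = r P + R$ where $P$ is the spectral projector onto the one-dimensional $r$-eigenspace, $PR = RP = 0$, and every eigenvalue of $R$ has absolute value strictly less than $r$, one obtains $M^k = r^k P + R^k$, and since $\rho(R)/r < 1$, $R^k/r^k \to 0$. To identify $P$ as $vw^\top / (w^\top v)$, note that any rank-one projector onto $\mathrm{span}(v)$ along $\ker(w^\top)$ is determined by its range and kernel; a direct check shows that $vw^\top/(w^\top v)$ fixes $v$, annihilates vectors orthogonal to $w$, and has trace one, so it is exactly the required spectral projector. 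The one subtlety requiring care is ensuring $r$ is algebraically simple (no nontrivial Jordan block), which follows because a nontrivial Jordan block at $r$ would cause $\norm{M^k}/r^k$ to grow like $k^{m-1}$, contradicting the convergence obtained via the $M^k$ analysis.
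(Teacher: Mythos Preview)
The paper does not prove this theorem; it simply cites it as a standard result from \textsc{Meyer}, \emph{Matrix Analysis and Applied Linear Algebra}, Chapter~8. So there is no paper proof to compare against.

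Your outline is a reasonable sketch of one classical route (reduce to the strictly positive case via $M^k$, then lift back). Two cautions. First, you twice invoke ``self-loops'' (positive diagonal), but the theorem as stated is for an arbitrary primitive matrix $M$, not for the paper's adjacency matrix $A$; you are conflating the general $M$ with the specific $A$. Fortunately both claims survive without that hypothesis: if $Mv=\lambda v$ with $v>0$ and $M\ge 0$, then $Mv\ge 0$ forces $\lambda\ge 0$, and $\lambda=0$ is impossible since $M^k v=s v\ne 0$; and if $M^k>0$ then $M^m>0$ for all $m\ge k$ because primitivity forces every row of $M$ to be nonzero, so $M\cdot M^{m-1}>0$. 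Second, your argument for algebraic simplicity of $r$ is circular as written: you appeal to ``the convergence obtained via the $M^k$ analysis,'' but you have not yet established any convergence for powers of $M$ (only spectral dominance for $M^k$). A cleaner fix is to prove algebraic simplicity directly in the positive case---if $(P-sI)u=v$ were solvable, pairing with the left Perron vector $w$ gives $0=w^\top(P-sI)u=w^\top v>0$, a contradiction---and then note that algebraic simplicity of $s=r^k$ for $M^k$ forces algebraic simplicity of $r$ for $M$, since a Jordan block of size $\ge 2$ at $r$ for $M$ would yield one of size $\ge 2$ at $r^k$ for $M^k$.
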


\new{The following lemma will be useful when we reason that the entries of $A^k$ tend to infinity.}

\begin{lemma}\label{lem:spectral_radius_greater_than_one}
The spectral radius of $A$ is greater than 1, that is, $\rho(A)>1$.
\end{lemma}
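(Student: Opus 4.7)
The plan is to invoke Perron-Frobenius (Theorem \ref{thm:perron}) on $A$, which is primitive by Lemma \ref{lem:primitive}. This yields a Perron vector $v$ with all strictly positive entries satisfying $Av = rv$, where $r = \rho(A)$.

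The crucial structural observation is that, because $N \geq 2$ and $G$ is strongly connected, any fixed node $i$ must have at least one out-edge to a distinct node $j \neq i$: indeed, a directed path from $i$ to any other node (which exists by strong connectedness) must begin with such an edge. Combined with the self-loop at $i$, this means both $A_{ii} = 1$ and $A_{ij} = 1$ for some $j \neq i$.

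I would then read off a strict inequality from the $i$-th component of the eigenvalue equation $Av = rv$:
$$r v_i = (Av)_i = \sum_{k=1}^N A_{ik} v_k \geq A_{ii} v_i + A_{ij} v_j = v_i + v_j > v_i,$$
using $v_j > 0$. Dividing by $v_i > 0$ (possible since all entries of $v$ are strictly positive) gives $r > 1$, as claimed.

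There is no substantive obstacle here; the argument reduces to a single inequality once Perron-Frobenius has been applied. The only care needed is with the indexing convention, namely that $A_{ij}$ corresponds to the edge $(i,j)$, so that both the self-loop at $i$ and the out-edge from $i$ contribute to the $i$-th row of $A$ and hence to $(Av)_i$, rather than to a column/row-swapped computation that would fail to produce the desired strict inequality.
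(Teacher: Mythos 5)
Your proposal is correct and follows essentially the same route as the paper: apply Perron--Frobenius to the primitive matrix $A$ to get a positive Perron vector, then use the self-loop together with one out-edge to a distinct node to extract the strict inequality $rv_i \geq v_i + v_j > v_i$. The only difference is that you spell out why such an out-edge exists, whereas the paper simply picks an arbitrary edge $(i,j)$ with $i \neq j$.
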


\begin{shownto}{full}
\begin{proof}
Let $r=\rho(A)$. Lemma \ref{lem:primitive} implies that $A$ is primitive, so by Theorem \ref{thm:perron}, $r$ is an eigenvalue of $A$,
and there is an eigenvector $v$ associated with $r$ with positive entries. Pick an arbitrary edge $(i,j)$ with $i \neq j$. Then
$$rv_i=Av_i=\sum_{i'=1}^N A_{ii'}v_{i'} \geq v_i + v_j > v_i,$$
implying that $r>1$.
\end{proof}
\end{shownto}

\new{We now define the notion of centrality alluded to in the introduction.}

\begin{definition}
If $v$ is the Perron vector of $A$, then call $v_i$ the \emph{eigenvector centrality} of agent $i$.
\end{definition}

Eigenvector centrality is a standard notion that measures the influences of agents. To justify this, note that by definition,
$$rv_i=\sum_{j=1}^N A_{ij}v_j=\sum_{i \in N(j)} v_j.$$
Call $v_i$ the score of agent $i$. \new{Then this equation says that} one's score is large if one's information can reach many agents; \new{moreover}, reaching agents with a larger score contributes more to one's score. This exact behavior characterizes influence: one's influence is large if one talks to many people, and talking to more influential people contributes more to one's influence than talking to less influential people. So it is appropriate to say that eigenvector centrality is a measure of influence.

\subsection{$L^p$ Spaces}
Let $f:\Theta \to \R_{\geq 0}$ be a \new{nonnegative} measurable function. For any $p \in [1,\infty)$, recall that $f \in L^p$ means that the $p$th power of $f$ has finite integral, $\int_{\theta \in \Theta} f^p < \infty$.
Moreover, $f \in L^\infty$ means that $f$ is bounded almost everywhere,
$\esssup_{\theta \in \Theta} f<\infty$.
The following inequality, \new{H\"older's inequality}, is fundamental in the study of $L^p$ spaces. \new{Note that the version presented here has a different parametrization of exponents from the usual formulation.}

\begin{theorem}[H\"older's Inequality]
\label{thm:holder}
Let $f_1,\dots,f_N:\Theta \to \R_{\geq 0}$ be measurable functions.
Let $p_i \in [0,1]$ be such that $p_1+\dots+p_N=1$. Then
$$\int_{\theta \in \Theta} f_1^{p_1}\dots f_N^{p_N} \leq \paren{\int_{\theta \in \Theta} f_1}^{p_1}\dots
\paren{\int_{\theta \in \Theta} f_N}^{p_N}.$$
In particular, if $f_i \in L^1$ for all $i$, then the weighted geometric mean $f_1^{p_1}\dots f_N^{p_N} \in L^1$.
\end{theorem}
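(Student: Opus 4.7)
The plan is to reduce the inequality to the pointwise weighted arithmetic-geometric mean (AM-GM) inequality after a normalization. First I would dispose of the degenerate cases: if some $p_i = 0$, the corresponding factor $f_i^{p_i} \equiv 1$ can be dropped (together with the factor $(\int f_i)^{p_i} = 1$ on the right), reducing to the case where all $p_i > 0$. If some $\int f_i = 0$, then $f_i = 0$ almost everywhere, so the left-hand integrand vanishes a.e.\ and both sides are $0$. If some $\int f_i = \infty$, the right-hand side is $+\infty$ and there is nothing to prove. So we may assume every $c_i := \int_{\theta \in \Theta} f_i$ lies in $(0, \infty)$.

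Next I would normalize by defining $g_i := f_i / c_i$, so each $g_i$ is a probability density: $\int g_i = 1$. The key analytic input is the weighted AM-GM inequality, which states that for any nonnegative reals $a_1, \dots, a_N$ and weights $p_i \in [0,1]$ with $\sum_i p_i = 1$,
$$a_1^{p_1} \cdots a_N^{p_N} \leq p_1 a_1 + \cdots + p_N a_N.$$
This is immediate from the concavity of $\log$ via Jensen's inequality (taking logarithms of both sides, after reducing to the case $a_i > 0$). Applying it pointwise with $a_i = g_i(\theta)$ and integrating over $\Theta$ gives
$$\int_{\theta \in \Theta} g_1^{p_1} \cdots g_N^{p_N} \leq \sum_{i=1}^N p_i \int_{\theta \in \Theta} g_i = \sum_{i=1}^N p_i = 1.$$
Substituting $g_i = f_i / c_i$ back and using $c_1^{p_1} \cdots c_N^{p_N}$ as a scaling constant pulled out of the integral yields
$$\int_{\theta \in \Theta} f_1^{p_1} \cdots f_N^{p_N} = c_1^{p_1} \cdots c_N^{p_N} \int_{\theta \in \Theta} g_1^{p_1} \cdots g_N^{p_N} \leq c_1^{p_1} \cdots c_N^{p_N},$$
which is exactly the claimed inequality. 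The ``in particular'' statement is then immediate: if $f_i \in L^1$ for all $i$, then every $c_i$ is finite, so the right-hand side is finite, showing $f_1^{p_1} \cdots f_N^{p_N} \in L^1$.

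The main obstacle here is not really analytic but book-keeping: the edge cases ($c_i \in \{0, \infty\}$ and $p_i = 0$) must be handled with the conventions $0^0 = 1$ and $0 \cdot \infty = 0$, and one should briefly note that the pointwise geometric mean $f_1^{p_1} \cdots f_N^{p_N}$ is measurable (it is a continuous function of nonnegative arguments composed with measurable functions). Everything else reduces to Jensen applied to $\log$, which is the entire analytic content of Hölder in this parametrization.
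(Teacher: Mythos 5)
Your proof is correct. Note that the paper itself states this result without proof, citing it only as a standard fact about $L^p$ spaces, so there is no in-paper argument to compare against. Your route --- normalizing each $f_i$ by $c_i = \int f_i$ after disposing of the cases $c_i \in \{0,\infty\}$ and $p_i = 0$, then applying the pointwise weighted AM--GM inequality $g_1^{p_1}\cdots g_N^{p_N} \leq \sum_i p_i g_i$ and integrating --- is the canonical proof of the generalized H\"older inequality in this parametrization, and your handling of the edge cases (the conventions $0^0=1$ and $0\cdot\infty=0$, and the measurability of the geometric mean) is complete.
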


\new{We also consider the following ``weighted'' $L^p$ spaces, where the weight is the common prior, which} will be useful in the study of initial conditions in Section \ref{subsec:feasible_initial_conditions}. 

\begin{definition}
For $p \in [1,\infty]$, define the \emph{$L^p$ space weighted by the prior}, $L^p_*$, as follows. If $p \in [1,\infty)$, the space consists of measurable functions $f:\Theta \to \R_{\geq 0}$ such that
$\int_{\theta \in \Theta} f_*f^p<\infty$.
Equip these functions with the norm
$$\norm{f}_{p,*} := \paren{\int_{\theta \in \Theta} f_*f^p}^{1/p}.$$
\new{This is just like the $L^p$ norm but weighted by the prior.} The space $L^\infty_*$ is the same as $L^\infty$ and consists of measurable functions $f:\Theta \to \R_{\geq 0}$ such that $\esssup_{\theta \in \Theta} f < \infty$.
Equip these functions with the norm
$$\norm{f}_{\infty,*}=\esssup_{\theta \in \Theta} f.$$
\new{This is exactly the same as the $L^\infty$ norm.}
\end{definition}

We now present a version of H\"older's inequality on these weighted $L^p$ spaces, \new{as well as} an interpolation result.
\begin{shownto}{abb}Lemma \ref{lem:newholder} follows from Theorem \ref{thm:holder} and its proof is omitted.\end{shownto}

\begin{lemma}
\label{lem:newholder}
Let $p \in [1,\infty)$ and let $f_1,\dots,f_N \in L^p_*$. Let $p_i \in [0,1]$ be such that $p_1+\dots+p_N=1$.
Then $f_1^{p_1}\dots f_N^{p_N}\in L^p_*$, with
$$\norm{f_1^{p_1}\dots f_N^{p_N}}_{p,*}
\leq \norm{f_1}_{p,*}^{p_1}\dots \norm{f_N}_{p,*}^{p_N}.$$
\end{lemma}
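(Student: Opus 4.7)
The plan is to reduce this weighted inequality to the unweighted H\"older inequality of Theorem \ref{thm:holder} by absorbing the weight $f_*$ into the factors. Since the exponents $p_i$ sum to $1$, the weight can be split as $f_* = f_*^{p_1} \cdots f_*^{p_N}$, which is the key algebraic trick that makes the reduction work.

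More concretely, I would start by computing the $p$-th power of the left-hand side. By definition,
$$\norm{f_1^{p_1}\dots f_N^{p_N}}_{p,*}^p = \int_{\theta \in \Theta} f_* \, f_1^{p p_1}\dots f_N^{p p_N}.$$
Then I would rewrite the integrand by distributing $f_*$:
$$f_* \, f_1^{p p_1}\dots f_N^{p p_N} = (f_* f_1^p)^{p_1}\dots(f_* f_N^p)^{p_N}.$$
At this point Theorem \ref{thm:holder} applies directly to the functions $g_i := f_* f_i^p$ (which are nonnegative measurable functions in $L^1$ since each $f_i \in L^p_*$), giving
$$\int_{\theta \in \Theta} (f_* f_1^p)^{p_1}\dots(f_* f_N^p)^{p_N} \leq \paren{\int_{\theta \in \Theta} f_* f_1^p}^{p_1}\dots\paren{\int_{\theta \in \Theta} f_* f_N^p}^{p_N} = \norm{f_1}_{p,*}^{p p_1}\dots\norm{f_N}_{p,*}^{p p_N}.$$
Taking the $p$-th root of both sides yields the desired bound, and finiteness of the right-hand side certifies that $f_1^{p_1}\dots f_N^{p_N}\in L^p_*$.

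There is no real obstacle here beyond spotting the decomposition of $f_*$; once one sees that $\sum p_i = 1$ is exactly what allows $f_*$ to split across the factors in the right proportions, the weighted statement collapses to the unweighted one. The only minor care needed is to ensure the factors $f_* f_i^p$ are genuinely nonnegative and measurable so that Theorem \ref{thm:holder} applies as stated, which is immediate from the assumptions.
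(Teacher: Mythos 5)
Your proof is correct and matches the paper's argument exactly: the paper's one-line proof is to apply Theorem \ref{thm:holder} to the functions $f_* f_i^p$, which is precisely the reduction you carry out (with the decomposition $f_* = f_*^{p_1}\cdots f_*^{p_N}$ made explicit). No gaps.
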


\begin{shownto}{full}
\begin{proof}
Apply H\"older's inequality (Theorem \ref{thm:holder}) to the functions $f_*f_i^p$.
\end{proof}
\end{shownto}

\begin{lemma}\label{lem:interpolationLp}
If $1 \leq p < q < r \leq \infty$, then $L^p_* \cap L^r_* \subseteq L^q_*$.
\end{lemma}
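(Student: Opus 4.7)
The plan is to mimic the standard $L^p$ interpolation proof, adapted to the weighted setting, with a separate easy argument when $r=\infty$. Fix $f\in L^p_*\cap L^r_*$. The goal is to show $\int f_* f^q<\infty$.

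First I would handle the case $r<\infty$. The idea is to write $q$ as a convex combination of $p$ and $r$: define $\alpha=(r-q)/(r-p)\in(0,1)$, so that $\alpha p+(1-\alpha)r=q$ and $\alpha+(1-\alpha)=1$. This gives the pointwise identity
$$f_*\,f^q=\bigl(f_* f^p\bigr)^{\alpha}\bigl(f_* f^r\bigr)^{1-\alpha},$$
since the $f_*$ factor combines as $f_*^{\alpha+(1-\alpha)}=f_*$ and the $f$ factor combines as $f^{\alpha p+(1-\alpha)r}=f^q$. Now I would apply H\"older's inequality (Theorem \ref{thm:holder}) with two functions $g_1=f_* f^p$ and $g_2=f_* f^r$ and exponents $p_1=\alpha$, $p_2=1-\alpha$, obtaining
$$\int_{\theta\in\Theta} f_*\,f^q
\leq\paren{\int_{\theta\in\Theta} f_* f^p}^{\alpha}\paren{\int_{\theta\in\Theta} f_* f^r}^{1-\alpha}
=\norm{f}_{p,*}^{p\alpha}\norm{f}_{r,*}^{r(1-\alpha)},$$
which is finite because $f\in L^p_*\cap L^r_*$.

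Second, I would handle the case $r=\infty$. Here $f\in L^\infty_*$ means $M:=\esssup_\theta f<\infty$, so almost everywhere $f^q=f^{p}\cdot f^{q-p}\leq M^{q-p}f^p$. Multiplying by $f_*\geq 0$ and integrating yields $\int f_* f^q\leq M^{q-p}\norm{f}_{p,*}^{p}<\infty$, so $f\in L^q_*$.

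Neither case presents a real obstacle; the only thing worth being careful about is verifying that the exponents $\alpha,1-\alpha$ lie in $[0,1]$ (which follows from $p<q<r$) and that the identity $f_*f^q=(f_*f^p)^{\alpha}(f_*f^r)^{1-\alpha}$ combines the $f_*$ factors correctly, which is where the specific form of the weighted $L^p$ norm (weight $f_*$ rather than $f_*^p$) is used in a crucial way.
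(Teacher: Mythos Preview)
Your proof is correct and follows essentially the same approach as the paper. The only cosmetic difference is that for the case $r<\infty$ the paper routes the H\"older step through Lemma~\ref{lem:newholder} (applied with exponent $1$ to the functions $f^p,f^r\in L^1_*$), whereas you apply Theorem~\ref{thm:holder} directly to $f_*f^p$ and $f_*f^r$; since Lemma~\ref{lem:newholder} is itself proved by exactly that application of Theorem~\ref{thm:holder}, the two arguments are the same up to unwrapping, and your $\alpha$ is the paper's $p_1$. The $r=\infty$ case is identical.
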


\begin{shownto}{abb}
\begin{proofsketch}
The case $r < \infty$ follows from Lemma \ref{lem:newholder}. Prove the case $r=\infty$ directly.
\end{proofsketch}
\end{shownto}

\begin{shownto}{full}
\begin{proof}
First assume that $r < \infty$. \new{Then} there are $p_1,p_2 \in (0,1)$ such that $p_1+p_2=1$ and $p_1p+p_2r=q$. If $f \in L^p_* \cap L^r_*$, then $f^p,f^r \in L^1_*$.
By Lemma \ref{lem:newholder}, $f^{p_1p+p_2r}=f^q \in L^1_*$, so that $f \in L^q_*$.

We directly prove the case where $r=\infty$. Let $M=\norm{f}_{\infty,*}$. Then $f \leq M$ almost everywhere, so
$$\int_{\theta \in \Theta} f_*f^q \leq
M^{q-p}\int_{\theta \in \Theta} f_*f^p<\infty.$$
Therefore $f \in L^q_*$.
\end{proof}
\end{shownto}

\section{Understanding the Model}\label{sec:understand}

\new{In Section \ref{subsec:prelimdef}, we introduce some preliminary definitions that will be useful in understanding the model in Sections \ref{subsec:interpret} and  \ref{subsec:feasible_initial_conditions}.
Then we interpret the updated beliefs in Section \ref{subsec:interpret}. Finally, we find the conditions that guarantee that an initial condition is feasible in Section \ref{subsec:feasible_initial_conditions}.}

\subsection{Preliminary Definitions}
\label{subsec:prelimdef}
\new{We introduce the notion of \emph{normalized beliefs}.}

\begin{definition}\label{def:normalized_belief}
The \emph{normalized belief} of agent $i$ at round $t$ is the function 
$$g_i^{(t)}(\theta) := \frac{f_i^{(t)}(\theta)}{f_*(\theta)}.$$
\end{definition}

\new{We can interpret this as how large a a belief is compared to the prior. This quantity will appear in the weighted likelihood function in Definition \ref{def:weightedlikelihood} and play a central role in our analysis.}
Note that the normalized belief is well-defined because $f_*(\theta)>0$ by assumption, and it has the property that it is in $L^1_*$ with norm 1:
$\big\lVert g_i^{(t)}\big\rVert_{1,*}=1$.
However, it need not be a belief, as the integral $\int_{\theta \in \Theta} g_i^{(t)}(\theta)$ may diverge.

With normalized beliefs, the update rule in Definition \ref{def:update} assumes the simpler form

\begin{equation}
\label{eq:updateg}
g_i^{(t+1)}(\theta) = \frac{\prod_{j \in N(i)}{g_j^{(t)}(\theta)}}
{\int_{\theta' \in \Theta} f_*(\theta') \prod_{j \in N(i)}{g_j^{(t)}(\theta')}}.
\end{equation}

\new{We now introduce the notation for the number of paths between each pair of nodes.}

\begin{definition}
For nodes $i$ and $j$ of $G$ and $t\geq 0$, let $P_{ij}^{(t)}$ be the number of paths from $i$ to $j$ of length $t$.
By convention, there is one path of length 0 from a node to itself and no path of length 0 from a node to a different node.
\end{definition}

The following simple combinatorial lemma characterizes the number of paths.
\new{It is completely standard but we include it here for completeness.}
\begin{shownto}{abb}We omit its proof.\end{shownto}

\begin{lemma}\label{lem:path_recurrence}
The numbers of paths satisfy the recurrence $P_{ij}^{(t+1)}=\sum_{k \in N(j)} P_{ik}^{(t)}$.
Moreover, $P_{ij}^{(t)}$ is the $(i,j)$ entry of the matrix $A^t$.
\end{lemma}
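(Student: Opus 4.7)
The plan is to prove the two claims separately, with the recurrence coming first by a direct combinatorial bijection, and the matrix-power identity following by induction using the recurrence.

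For the recurrence, I would argue as follows. A path of length $t+1$ from $i$ to $j$ can be uniquely decomposed as a path of length $t$ from $i$ to some intermediate node $k$, followed by a single edge $(k,j)$. The set of nodes $k$ for which the edge $(k,j)$ exists is precisely $N(j)$, the set of in-neighbors of $j$. Summing over the possible values of $k$ gives
\[
P_{ij}^{(t+1)} = \sum_{k \in N(j)} P_{ik}^{(t)},
\]
as desired. This decomposition is a bijection between the set of $(t{+}1)$-paths from $i$ to $j$ and the disjoint union over $k \in N(j)$ of $t$-paths from $i$ to $k$, so no overcounting occurs.

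For the matrix identity, I would induct on $t$. In the base case $t=0$, the matrix $A^0$ equals the identity, whose $(i,j)$ entry is $1$ if $i=j$ and $0$ otherwise, matching the convention that $P_{ii}^{(0)}=1$ and $P_{ij}^{(0)}=0$ for $i\neq j$. For the inductive step, assume $(A^t)_{ik} = P_{ik}^{(t)}$ for all $i,k$. Then
\[
(A^{t+1})_{ij} = \sum_{k=1}^{N} (A^t)_{ik} A_{kj} = \sum_{k : A_{kj}=1} P_{ik}^{(t)} = \sum_{k \in N(j)} P_{ik}^{(t)} = P_{ij}^{(t+1)},
\]
where the second equality uses that $A_{kj}\in\{0,1\}$, the third uses the definition of in-neighbors (since $A_{kj}=1$ iff there is an edge $(k,j)$ iff $k\in N(j)$), and the fourth invokes the recurrence just established.

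There is no real obstacle here; the only thing to be slightly careful about is matching conventions, specifically which index of $A$ corresponds to the source and which to the target, and keeping the direction of the edge $(k,j)$ consistent with the model's convention that $(i,j)$ denotes information flowing from $i$ to $j$ (so that $N(j)$ indeed consists of in-neighbors of $j$). Once that is set, both parts are immediate.
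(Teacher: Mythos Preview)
Your proof is correct and follows essentially the same approach as the paper: decompose a length-$(t{+}1)$ path at its last step to obtain the recurrence, then induct on $t$ with base case $A^0=I$. The paper compresses the inductive step into the matrix identity $P^{(t+1)}=P^{(t)}A$, whereas you write out the entrywise computation, but these are the same argument.
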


\begin{shownto}{full}
\begin{proof}
The recurrence follows by considering that a path from $i$ to $j$ \new{of length $t+1$ can be broken down into paths from $i$ to $k$ of length $t$ and from $k$ to $j$ of length 1 for some $k$}. Let $P^{(t)}$ be the matrix whose $(i,j)$ entry is $P_{ij}^{(t)}$. The recurrence can be written in the form $P^{(t+1)}=P^{(t)}A$, with the initial condition $P^{(0)}=I$. It follows by induction that $P^{(t)}=A^t$.
\end{proof}
\end{shownto}

\new{Finally, we formally define the notion of \emph{$k$-feasibility} and \emph{feasibility}.}

\begin{definition}
For any $k\geq 1$, an initial condition $f_*,f_1,\dots,f_N$ is \emph{$k$-feasible} if the updated beliefs are well-defined up to at least round $k$.
An initial condition is \emph{feasible} if it is $k$-feasible for every $k$.
\end{definition}

\subsection{Interpretation of Updated Beliefs}
\label{subsec:interpret}
We interpret what agents are doing in our model. Our goal is to show that each agent aggregates information where another agent's initial belief is weighted by the number of paths from that agent to the aggregating agent. \new{This means that an agent who has many paths to other agents, i.e. an agent with high centrality, will be influential in our model.} This result will lead to conditions for feasible initial conditions in Section \ref{subsec:feasible_initial_conditions} and show concentration of beliefs in Section \ref{sec:concentrate}.

\new{To carry out this program, we first} give an explicit condition for $k$-feasibility and an explicit formula for the updated beliefs \new{in the next proposition}.

\begin{prop}
\label{prop:update_t_g}
An initial condition $f_*,f_1^{(0)},\dots,f_N^{(0)}$ is $k$-feasible if and only if for all $i$ and $0\leq t \leq k$, the function
$$(g_1^{(0)})^{P_{1i}^{(t)}}\dots (g_N^{(0)})^{P_{Ni}^{(t)}} \in L^1_*$$
with nonzero norm.
In that case, for all $0\leq t \leq k$,
$$g_i^{(t)}(\theta) = \frac{\prod_{j=1}^N {g_j^{(0)}(\theta)^{P_{ji}^{(t)}}}}
{\int_{\theta' \in \Theta} f_*(\theta') \prod_{j=1}^N g_j^{(0)}(\theta')^{P_{ji}^{(t)}}}.$$
\end{prop}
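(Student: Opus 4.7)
The plan is to prove the proposition by induction on $k$, extracting both the closed-form expression for $g_i^{(t)}$ and the feasibility criterion simultaneously. The base case $k=0$ is immediate: since $P_{ii}^{(0)}=1$ and $P_{ji}^{(0)}=0$ for $j\neq i$, the proposed formula reduces to $g_i^{(0)}$, which lies in $L^1_*$ with norm $1$ by definition of a normalized belief.

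For the inductive step, I would assume the statement holds for $k$ and show it for $k+1$. By definition, $(k+1)$-feasibility means $k$-feasibility together with the $(k+1)$-st update being well-defined for each agent $i$. By the inductive hypothesis, $k$-feasibility is equivalent to the given $L^1_*$ condition for all $t \leq k$, and we have the closed-form expression for each $g_j^{(k)}$. Plugging this expression into the update rule \eqref{eq:updateg} at round $k$, the numerator becomes
$$\prod_{j \in N(i)} g_j^{(k)}(\theta) = C_i(\theta) \cdot \prod_{j \in N(i)} \prod_{m=1}^N g_m^{(0)}(\theta)^{P_{mj}^{(k)}},$$
where $C_i$ collects the constants arising from the denominators in the inductive formulas for the $g_j^{(k)}$. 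Applying Lemma \ref{lem:path_recurrence} in the form $\sum_{j \in N(i)} P_{mj}^{(k)} = P_{mi}^{(k+1)}$ converts the double product into
$$\prod_{m=1}^N g_m^{(0)}(\theta)^{P_{mi}^{(k+1)}}.$$

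Next I would observe that the denominator in \eqref{eq:updateg} is precisely $\int f_*$ times this same expression (up to the factor $C_i$, which cancels top and bottom). Hence the $(k+1)$-st update for agent $i$ is well-defined if and only if
$$\prod_{m=1}^N (g_m^{(0)})^{P_{mi}^{(k+1)}} \in L^1_* \text{ with nonzero norm},$$
establishing the feasibility criterion at $t=k+1$. When it is well-defined, the cancellation of $C_i$ yields exactly the claimed closed-form expression for $g_i^{(k+1)}$, completing the induction.

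The argument is largely bookkeeping, and the only subtlety I anticipate is handling the constants $C_i$ cleanly: one must verify that the same constant appears in the numerator and denominator of the update so it cancels, and that the $L^1_*$ integrability plus nonvanishing of the denominator in \eqref{eq:updateg} is equivalent to the stated condition on the full product, not merely a weaker factorwise condition. This is transparent once one writes out the integrand, since the integrand in the denominator of \eqref{eq:updateg} is literally $f_*$ times the product of $g_j^{(k)}$'s, whose $\theta$-dependent part is exactly $\prod_m (g_m^{(0)})^{P_{mi}^{(k+1)}}$ after applying the path recurrence.
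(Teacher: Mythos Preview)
Your proposal is correct and follows essentially the same approach as the paper: induction on $k$, with the base case reducing to $g_i^{(0)}$ and the inductive step substituting the closed form into \eqref{eq:updateg} and applying Lemma \ref{lem:path_recurrence} to collapse $\sum_{j\in N(i)} P_{mj}^{(k)}$ into $P_{mi}^{(k+1)}$. The only cosmetic difference is that you explicitly name the normalizing constants $C_i$ and note their cancellation, whereas the paper writes out the full fractions and lets the cancellation happen visibly in the displayed equation.
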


\begin{shownto}{abb}
\begin{proofsketch}
Induct on $k$ and use Lemma \ref{lem:path_recurrence} to get exponents into the form $P_{ji}^{(t)}$.
\end{proofsketch}

\end{shownto}

\begin{shownto}{full}

\begin{proof}

\new{The proof is essentially by induction on $k$.} Any initial condition is certainly 0-feasible. And for any $i$, $\prod_{j=1}^N (g_j^{(0)})^{P_{ji}^{(0)}} = g_i^{(0)} \in L^1_*$ with norm 1. This proves the base case $k=0$.

Assume that the proposition holds for $k$. For a $(k+1)$-feasible initial condition, the updated beliefs given by equation \eqref{eq:updateg}
$$g_i^{(k+1)}(\theta) = \frac{\prod_{j \in N(i)}{g_j^{(k)}(\theta)}}
{\int_{\theta' \in \Theta} f_*(\theta') \prod_{j \in N(i)}{g_j^{(k)}(\theta')}}$$
are well-defined.
By the induction hypothesis, this equals
\begin{equation}
\label{eq:updateformula}
g_i^{(k+1)}(\theta) =
\frac{\prod_{j \in N(i)} \prod_{\ell=1}^{N} g_\ell^{(0)}(\theta)^{P_{\ell j}^{(k)}} }{\int_{\theta' \in \Theta} f_*(\theta') \prod_{j \in N(i)} \prod_{\ell=1}^{N} g_\ell^{(0)}(\theta')^{P_{\ell j}^{(k)}} } = 
\frac{\prod_{\ell=1}^{N} g_\ell^{(0)}(\theta)^{P_{\ell i}^{(k+1)}} }{\int_{\theta' \in \Theta} f_*(\theta') \prod_{\ell=1}^{N} g_\ell^{(0)}(\theta')^{P_{\ell i}^{(k+1)}} },
\end{equation}
where the second equality holds because the exponent of $g_\ell^{(0)}$ is $\sum_{j \in N(i)} P_{\ell j}^{(k)}=P_{\ell i}^{(k+1)}$ by Lemma \ref{lem:path_recurrence}.
So the formula holds. \new{Finally}, the denominator of the right-hand expression shows that $\prod_{j=1}^N (g_j^{(0)})^{P_{ji}^{(k+1)}} \in L^1_*$ with nonzero norm.

Conversely, suppose that $\prod_{j=1}^N (g_j^{(0)})^{P_{ji}^{(t)}} \in L^1_*$ with nonzero norm for all $t\leq k+1$. By the induction hypothesis, the initial condition is $k$-feasible, \new{and we must show that it is $(k+1)$-feasible}. The right-hand expression of equation \eqref{eq:updateformula} is well-defined, so \new{by running the same argument as above backwards}, it can be written as the expression for $g_i^{(k+1)}(\theta)$ in the update rule. Therefore $g_i^{(k+1)}$ is well-defined and the initial condition is $(k+1)$-feasible.
\end{proof}

\end{shownto}

\new{We can now prove that an initial condition with an underlying scenario (see Section \ref{subsec:modelupdates}) is feasible, as a result of the next proposition.}

\begin{prop}
\label{prop:underlyinggood}
Assume that the initial condition $f_*,f_1^{(0)},\dots,f_N^{(0)}$ has an underlying scenario.
For any integers $n_1,\dots,n_N \geq 0$, the function
$$(g_1^{(0)})^{n_1}\dots (g_N^{(0)})^{n_N} \in L^1_*$$
with nonzero norm.
\end{prop}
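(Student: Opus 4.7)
The plan is to directly unfold the underlying scenario and reduce the claim to the marginalizability assumption of Section \ref{subsec:modelupdates}, applied to the specific tuple in which agent $j$'s signal $X_j$ is repeated $n_j$ times.

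First, I would reuse the identity derived in the proof of Proposition \ref{prop:justifyrule}: since the initial belief $f_j^{(0)}$ arises from Bayesian conditioning on the signal $X_j$, we have
$$p(X_j \mid \theta) \;=\; \frac{p(\theta \mid X_j)}{p(\theta)}\, p(X_j) \;=\; g_j^{(0)}(\theta)\, p(X_j).$$
Raising to the $n_j$-th power and taking a product over $j$, the $L^1_*$ norm we wish to compute factors as
$$\norm{(g_1^{(0)})^{n_1} \cdots (g_N^{(0)})^{n_N}}_{1,*}
\;=\; \frac{1}{\prod_{j=1}^N p(X_j)^{n_j}} \int_{\theta \in \Theta} p(\theta) \prod_{j=1}^N p(X_j \mid \theta)^{n_j},$$
where the prefactor $\prod_j p(X_j)^{n_j}$ is positive and finite because each $p(X_j)$ was the normalizing constant arising when $f_j^{(0)}$ was formed by conditioning on $X_j$.

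Next, I would recognize the remaining integral as the marginal joint density appearing in the marginalizability assumption, evaluated at the specific tuple $(X_{j,k})_{j,k}$ with $X_{j,k}=X_j$ for every $1 \leq k \leq n_j$: substituting these values into the integrand $p(\theta) \prod_{j}\prod_{k} p(X_{j,k}\mid\theta)$ yields exactly $p(\theta) \prod_{j} p(X_j\mid\theta)^{n_j}$. The marginalizability assumption then guarantees this integral is nonzero and finite, so dividing by the positive finite prefactor establishes that the norm is positive and finite, which is the claim.

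The only subtle point is whether we may evaluate the marginal density at a single prescribed tuple rather than at almost every tuple. The assumption stated in Section \ref{subsec:modelupdates} is a pointwise statement about an explicit integral against $\theta$, so it applies at any tuple of arguments — in particular at the repeated-signal tuple above. Aside from checking this, the proof is a direct substitution and I do not anticipate a serious obstacle.
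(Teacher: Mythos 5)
Your proposal is correct and is essentially the paper's own proof run in the opposite direction: the paper starts from the marginal density of the repeated-signal tuple and computes it to be $\bigl(\prod_j p(X_j)^{n_j}\bigr)\int_\Theta f_*\prod_j (g_j^{(0)})^{n_j}$, whereas you start from the norm and reduce it to that same marginal density; both then invoke the marginalizability assumption at the tuple in which each $X_j$ is repeated $n_j$ times. The subtlety you flag about pointwise evaluation is handled the same way in the paper, so there is no gap.
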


\begin{shownto}{abb}
\begin{proofsketch}
Follows from the marginalizability assumption: $p\big( (X_i^{n_i})_{i=1}^{N} \big)$ is nonzero and finite, where $(X_i^{n_i})_{i=1}^{N}$ is the ordered tuple of $n_i$ copies of $X_i$ for $1\leq i \leq N$.
\end{proofsketch}
\end{shownto}

\begin{shownto}{full}
\begin{proof}
Recall that agent $i$'s initial belief is $f_i^{(0)}(\theta)=p(\theta|X_i)$, where $X_i$ is her signal.
Denote by $(X_i^{n_i})_{i=1}^{N}$ the ordered tuple of $n_i$ \new{independent} copies of $X_i$ for $1\leq i \leq N$.
By the marginalizability assumption, \new{the joint density of this tuple is}
\begin{align*}
p\big( (X_i^{n_i})_{i=1}^{N} \big) &= \int_{\theta \in \Theta} f_*(\theta) p\big( (X_i^{n_i})_{i=1}^{N} | \theta\big) = \int_{\theta \in \Theta} f_*(\theta) \prod_{i=1}^{N} p\paren{X_i| \theta}^{n_i} \\
&= \int_{\theta \in \Theta} f_*(\theta) \prod_{i=1}^{N} \bigg(\frac{f_i^{(0)}(\theta) p(X_i)  }{f_*(\theta)}\bigg)^{n_i} =
\bigg( \prod_{i=1}^{N} p(X_i)^{n_i}\bigg)
\int_{\theta \in \Theta} f_* \prod_{i=1}^{N} (g_i^{(0)})^{n_i},
\end{align*}
and it is nonzero and finite. 
So $\prod_{i=1}^{N} (g_i^{(0)})^{n_i} \in L_*^{1}$ with nonzero norm.
\end{proof}
\end{shownto}

\new{As a corollary of the previous two propositions, we will show that an initial condition with an underlying scenario is feasible, as well as give an explicit formula for the updated beliefs. This is important as it gives us insight into the update process.}

\begin{prop}
\label{prop:underlyingfeasible}
An initial condition $f_*,f_1^{(0)},\dots,f_N^{(0)}$ with an underlying scenario is feasible. If agent $i$ receives signal $X_i$ in the underlying scenario and $I_i^{(t)}$ is the ordered tuple consisting of $P_{ji}^{(t)}$ \new{independent} copies of $X_j$ for each $j$, then
$$f_i^{(t)}(\theta) = p\big(\theta|I_i^{(t)}\big).$$
\new{In other words, the belief of agent $i$ in round $t$ for each parameter $\theta$ is the belief for that parameter conditional on the information $I_i^{(t)}$.}
\end{prop}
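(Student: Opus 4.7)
The plan is to combine the two immediately preceding propositions. Proposition \ref{prop:underlyinggood} gives exactly the $L^1_*$ condition (with nonzero norm) that Proposition \ref{prop:update_t_g} needs to certify $k$-feasibility: setting $n_j := P_{ji}^{(t)}$ for every $t$ and every $i$ shows that the required function $(g_1^{(0)})^{P_{1i}^{(t)}}\cdots (g_N^{(0)})^{P_{Ni}^{(t)}}$ lies in $L^1_*$ with nonzero norm. So feasibility for all $k$ is immediate, and we also get the explicit product formula
$$g_i^{(t)}(\theta) = \frac{\prod_{j=1}^N g_j^{(0)}(\theta)^{P_{ji}^{(t)}}}{\int_{\theta'\in\Theta} f_*(\theta') \prod_{j=1}^N g_j^{(0)}(\theta')^{P_{ji}^{(t)}}}$$
to work with in the second half.

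Next, I would unfold $g_j^{(0)}$ using the underlying scenario. Exactly as in the proof of Proposition \ref{prop:justifyrule}, Bayes' rule applied to the signal $X_j$ gives $p(X_j|\theta) = g_j^{(0)}(\theta)\, p(X_j)$, equivalently $g_j^{(0)}(\theta) = p(X_j|\theta)/p(X_j)$. Substituting this into the product formula, the normalizing factors $p(X_j)^{P_{ji}^{(t)}}$ appear in both numerator and denominator and cancel, leaving
$$g_i^{(t)}(\theta) = \frac{\prod_{j=1}^N p(X_j|\theta)^{P_{ji}^{(t)}}}{\int_{\theta'\in\Theta} f_*(\theta') \prod_{j=1}^N p(X_j|\theta')^{P_{ji}^{(t)}}}.$$

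The final step is to recognize the numerator and denominator as densities involving $I_i^{(t)}$. Since the copies of each $X_j$ are drawn independently from $p_{\theta,j}$ conditional on $\theta$, we have $p(I_i^{(t)}|\theta) = \prod_{j=1}^N p(X_j|\theta)^{P_{ji}^{(t)}}$, and then the denominator, with $f_*(\theta') = p(\theta')$, is exactly the marginal $p(I_i^{(t)})$, which is nonzero and finite by the marginalizability assumption. Hence $g_i^{(t)}(\theta) = p(I_i^{(t)}|\theta)/p(I_i^{(t)})$, and multiplying both sides by $f_*(\theta) = p(\theta)$ and applying Bayes' rule yields $f_i^{(t)}(\theta) = p(\theta|I_i^{(t)})$, as desired.

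The only subtle point, and the place I would be most careful, is the conditional-independence identity $p(I_i^{(t)}|\theta) = \prod_{j=1}^N p(X_j|\theta)^{P_{ji}^{(t)}}$: this is not just the original signals but multiple independent copies of each, so I would explicitly invoke the fact that the underlying scenario is set up to allow exactly this kind of repeated independent draw (that is the content of the marginalizability hypothesis stated with arbitrary multiplicities $n_1,\dots,n_N$). Beyond that, the argument is a clean chain of substitutions with no new estimates needed.
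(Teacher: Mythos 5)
Your proposal is correct and follows essentially the same route as the paper: feasibility from combining Propositions \ref{prop:update_t_g} and \ref{prop:underlyinggood}, and then the identification $f_i^{(t)}(\theta)=p(\theta|I_i^{(t)})$ via the same Bayes'-rule substitution $p(X_j|\theta)=g_j^{(0)}(\theta)\,p(X_j)$ used in Proposition \ref{prop:justifyrule}, merely run from the product formula toward the conditional density rather than the reverse. Your explicit flagging of the conditional-independence identity for the repeated copies is exactly the point the paper's marginalizability setup is designed to cover.
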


\begin{shownto}{abb}
\begin{proofsketch}
Compute $p\big(\theta|I_i^{(t)}\big)$ as in Proposition \ref{prop:justifyrule} and apply Proposition \ref{prop:update_t_g}.
\end{proofsketch}
\end{shownto}

\begin{shownto}{full}
\begin{proof}
Propositions \ref{prop:update_t_g} and \ref{prop:underlyinggood} \new{directly} imply the first statement. \new{For the second statement}, as in Proposition \ref{prop:justifyrule}, we can compute
\begin{align*}\label{eqn:updated_belief_repeated_signals}
p\big(\theta|I_i^{(t)}\big) =
p\bigg( \theta \Big| \Big( X_j^{P_{ji}^{(t)}} \Big)_{j=1}^N \bigg)
= \frac{f_*(\theta) \prod_{j=1}^{N} g_j^{(0)}(\theta)^{P_{ji}^{(t)}} }{ 
\int_{\theta' \in \Theta} f_*(\theta') \prod_{j=1}^{N} g_j^{(0)}(\theta')^{P_{ji}^{(t)}}
    }.
\end{align*}
\new{By Proposition \ref{prop:update_t_g}, this quantity} equals $f_i^{(t)}(\theta)$.
\end{proof}
\end{shownto}

Proposition \ref{prop:underlyingfeasible} gives us an intuition of the process. At every time step, each agent passes copies of signals she has to her neighbors, so copies of signals ``flow'' through the network. Every copy is taken as independent and no copy is discarded, so the number of copies of agent $j$'s signal that agent $i$ has at round $t$ is the number of paths from $j$ to $i$ of length $t$.
Moreover, we can interpret the normalized belief $g_i^{(t)}(\theta)$ as the amount of support that $I_i^{(t)}$ provides for $\theta$.
Since $g_i^{(t)}(\theta) = p(\theta|I_i^{(t)})/p(\theta)$ by Proposition \ref{prop:underlyingfeasible}, this quantity is larger or smaller than 1 if learning $I_i^{(t)}$ increases or decreases the probability density at $\theta$, respectively.

From Proposition \ref{prop:underlyingfeasible}, we can qualitatively predict how our model behaves in the long term.
Because repeated signals in $I_i^{(t)}$ are taken as independent, agents should become overconfident over time. Furthermore, because the numbers of repeated signals are the numbers of paths, agents with more paths to other agents (those with higher eigenvector centralities) should be more influential. Our results on concentration and convergence of beliefs in Sections \ref{sec:concentrate} and \ref{sec:converge} confirm these predictions.

\subsection{Initial Conditions}
\label{subsec:feasible_initial_conditions}

\new{We now build on the work of the previous section to} investigate containment relationships between classes of initial conditions and present simple sufficient conditions for feasibility.

\begin{definition}
An initial condition $f_*,f_1,\dots,f_N$ is \emph{proper} if $\int_{\theta \in \Theta} f_*<\infty$.
It is \emph{improper} if $\int_{\theta \in \Theta} f_*=\infty$.
Let $\mc{P}$ and $\mc{IP}$ denote classes of proper and improper initial conditions.

Let $\mc{F}$ and $\mc{F}_k$ denote classes of feasible and $k$-feasible initial conditions. Let $\mc{U}$ denote the class of initial conditions with underlying scenarios.
\end{definition}

\new{The notion of properness may be of interest because it tells us whether the prior can be realized as a probability density. On the one hand, a prior which is a probability density seems to correspond to real-world situations, but on the other hand, the ``uninformative prior'' $f_* \equiv 1$ is a natural choice in many situations but is often improper.
Note that an initial condition with an underlying scenario is proper by definition, i.e. $\mc{U} \subseteq \mc{P}$.}

\begin{figure}[h]
\begin{center}
\begin{tikzpicture}
    \draw (0,0) rectangle (4,4);
    \draw (0,4) node[anchor=north west] {$\mathcal{P}$};
    \draw (4,0) rectangle (8,4) node[anchor=north east] {$\mathcal{IP}$};
    \draw (3.4,2) circle (0.5);
    \node at (3.4,2) {$\mathcal{U}$};
    \draw (4,2) ellipse (1.6 and 0.8);
    \node at (5.2,2) {$\mathcal{F}$};
    \draw[dashed] (4,2) ellipse (2.5 and 1.25);
    \node at (6.05,2.25) {$\mathcal{F}_{k+1}$};
    \draw[dashed] (4,2) ellipse (3.3 and 1.65);
    \node at (6.7,2.55) {$\mathcal{F}_k$};
\end{tikzpicture}
\end{center}
\caption{Containment relationships between classes of initial conditions, where \new{in the figure} $k$ can be any positive integer. There is a parameter space $\Theta$ and a graph $G$ such that all regions in the diagram are nonempty \new{for all values of $k$} (Prop. \ref{prop:diagramworks}).}
\label{fig:diagram}
\end{figure}
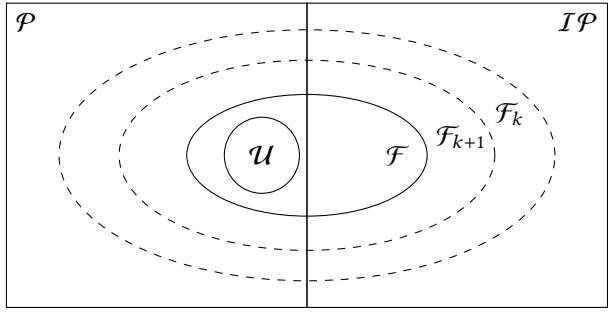

It is obvious that
$\mc{F}_{k+1} \subseteq \mc{F}_k$,
$\mc{F} \subseteq \mc{F}_k$ and
$\mc{U} \subseteq \mc{P}$.
Proposition \ref{prop:underlyingfeasible} shows \new{the nontrivial containment} $\mc{U} \subseteq \mc{F}$.
\new{Thus, the classes are contained in one another} as shown in the diagram in Figure \ref{fig:diagram}.
We now show that there is $\Theta$ and $G$ such that all regions in the diagram are nonempty. In particular, these containments are strict, \new{and so the classes do not collapse into one another. For example, it is not sufficient to show that an initial condition is feasible by checking 1-feasibility.}

\begin{prop}
\label{prop:diagramworks}
Let $\Theta = \{1,2,3,\dots\}$ and $G$ be the graph with 2 nodes and all 4 possible edges. Then all regions in the diagram in Figure \ref{fig:diagram} are nonempty \new{for all values of $k$}.
\end{prop}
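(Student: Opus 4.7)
The plan is to first reduce feasibility to a concrete integrability condition using the graph's structure, then exhibit examples in each region of the diagram using two families of constructions: a \emph{spike} family parametrized by an exponent $\alpha$ that sharply controls which $L^p_*$ norms of $g_1^{(0)} g_2^{(0)}$ are finite, and an \emph{asymmetric} family in which $g_1^{(0)} g_2^{(0)}$ is a positive constant but the two factors separately grow and decay, forcing feasibility while breaking marginalizability.

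For the graph $G$ on two nodes with all four edges, the adjacency matrix $A$ satisfies $A^t = 2^{t-1} A$ for $t \geq 1$, so $P_{ji}^{(t)} = 2^{t-1}$ uniformly in $i,j$. Proposition \ref{prop:update_t_g} combined with the interpolation Lemma \ref{lem:interpolationLp} then gives, for $k\geq 1$, that $k$-feasibility is equivalent to $g_1^{(0)} g_2^{(0)} \in L^1_* \cap L^{2^{k-1}}_*$ and full feasibility to $g_1^{(0)} g_2^{(0)} \in L^p_*$ for every $p\geq 1$. Proposition \ref{prop:underlyinggood} further shows that an initial condition with an underlying scenario must have $(g_1^{(0)})^{n_1} (g_2^{(0)})^{n_2} \in L^1_*$ for \emph{all} nonnegative integers $n_1, n_2$; in particular $\mc{U} \subseteq \mc{P}$.

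For the spike family, set $f_*(2j) = j^{-\alpha}$ and $g_i^{(0)}(2j) = c\, j^{\alpha - 2}$ for $j\geq 1$ with $g_i^{(0)}$ vanishing on odd integers, choosing $c = 6/\pi^2$ so that $f_i^{(0)}(2j) = c\, j^{-2}$ sums to $1$. The power-$m$ integral reduces to $\sum_j j^{(2m-1)\alpha - 4m}$, which converges iff $\alpha < T_m := 2 + 1/(2m - 1)$. Since $T_m$ is strictly decreasing, $\alpha = T_{2^k}$ lies in $\mc{F}_k \setminus \mc{F}_{k+1}$, $\alpha = T_{2^{k+1}}$ lies in $\mc{F}_{k+1} \setminus \mc{F}$, and $\alpha \geq T_1 = 3$ lies outside $\mc{F}_k$ for every $k\geq 1$. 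Setting $f_*(2j-1) = j^{-2}$ makes $f_*$ proper, while $f_*(2j-1) = 1$ makes it improper (the odd part contributing an infinite sum), populating each non-$\mc{F}$ region across both $\mc{P}$ and $\mc{IP}$. The feasible regions are handled separately: $\mc{U}$ is witnessed by $g_i^{(0)} \equiv 1$ with any proper $f_*$ (trivial non-informative signals), and $\mc{IP} \cap \mc{F}$ by $f_* \equiv 1$ with $g_i^{(0)}(n) = 6/(\pi^2 n^2)$.

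The delicate region is $\mc{P} \cap \mc{F} \setminus \mc{U}$: many feasible initial conditions admit an underlying scenario via a two-point signal space whenever the $g_i^{(0)}$ are bounded, so we must engineer a failure of marginalizability. The asymmetric choice $f_*(n) = 1/n^2$, $g_1^{(0)}(n) = c_1 \sqrt{n}$, $g_2^{(0)}(n) = c_2/\sqrt{n}$ with $c_i$ chosen to normalize $f_i^{(0)}$ resolves this: here $g_1^{(0)} g_2^{(0)} = c_1 c_2$ is a positive constant lying in every $L^p_*$, so feasibility holds; but $\sum f_* (g_1^{(0)})^{n_1} (g_2^{(0)})^{n_2} = c_1^{n_1} c_2^{n_2} \sum n^{(n_1 - n_2)/2 - 2}$ diverges whenever $n_1 \geq n_2 + 2$, ruling out any underlying scenario by Proposition \ref{prop:underlyinggood}. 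The main obstacle is precisely this construction, which must keep the symmetric product $g_1^{(0)} g_2^{(0)}$ in every $L^p_*$ while letting an individual unweighted power diverge; the opposite-direction polynomial growth of $g_1^{(0)}$ and $g_2^{(0)}$ gives the necessary balance.
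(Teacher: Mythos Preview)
Your proposal is correct and takes essentially the same approach as the paper: exhibit explicit power-law examples for each region, using the reduction of $k$-feasibility to $g_1^{(0)} g_2^{(0)} \in L^{2^{k-1}}_*$ via Proposition~\ref{prop:update_t_g}. The details differ cosmetically---the paper varies the prior $f_*(\theta)=\theta^{-(2^{k+1}+1)}$ while keeping $g_i^{(0)}\propto\theta$ fixed, whereas you fix the structure on even integers and tune a single exponent $\alpha$---but the underlying mechanism is identical. Your $(\mc{F}\cap\mc{P})\setminus\mc{U}$ construction (opposite-direction powers so that $g_1^{(0)} g_2^{(0)}$ is constant while $(g_1^{(0)})^2$ fails marginalizability) is exactly the paper's idea with exponents $\pm 1/2$ in place of $\pm 1$.
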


\begin{shownto}{abb}
\begin{proofsketch}
See the full version of the paper for an explicit example in each region.
\end{proofsketch}
\end{shownto}

\begin{shownto}{full}

\begin{proof}
\new{An underlying scenario clearly exists (e.g. take $f_*(\theta) = c\theta^{-2}$ for an appropriate $c$ and let agents always receive a constant signal 0: $\Omega_i = \set{0}$ and $p_{\theta,i}(0)=1$), so} $\mc{U}\neq \varnothing$. Examples of initial conditions in other regions are given in the following table, where $c,c_1,c_2$ are normalizing constants.

\begin{center}
\begin{tabular}{| c | c | c | c | c |}
\hline
Region & $f_*(\theta)$ & $f_1^{(0)}(\theta)$ & $f_2^{(0)}(\theta)$ \\ \hline
$\mc{P} \setminus \mc{F}_1$ & $\theta^{-3}$ & $c\theta^{-2}$ & $c\theta^{-2}$ \\ \hline
$\mc{IP} \setminus \mc{F}_1$ & $\theta^{-3}$ for odd $\theta$, $1$ for even $\theta$ & $c\theta^{-2}$ & $c\theta^{-2}$ \\ \hline
$\paren{\mc{F}_k \cap \mc{P}} \setminus \paren{\mc{F}_{k+1}\cap \mc{P}}$ & $\theta^{-(2^{k+1}+1)}$ & $c\theta^{-2^{k+1}}$ & $c\theta^{-2^{k+1}}$ \\
\hline
$\paren{\mc{F}_k \cap \mc{IP}} \setminus \paren{\mc{F}_{k+1} \cap \mc{IP}}$ & $\theta^{-(2^{k+1}+1)}$ for odd $\theta$, $1$ for even $\theta$ & $c\theta^{-2^{k+1}}$ & $c\theta^{-2^{k+1}}$ \\ \hline
$\paren{\mc{F}\cap\mc{P}} \setminus \mc{U}$ & $\theta^{-3}$ & $c_1 \theta^{-2}$ & $c_2 \theta^{-4}$ \\ \hline
$\mc{F}\cap \mc{IP}$ & $1$ & $c\theta^{-2}$ & $c\theta^{-2}$ \\ \hline
\end{tabular}
\end{center}

\new{The following considerations will help verify the table.} An initial condition is in $\mc{P}$ if $\int_{\theta} f_*<\infty$ and is in $\mc{IP}$ otherwise.
For any $t\geq 1$, there are $2^{t-1}$ paths from any node $i$ to any node $j$. By Proposition \ref{prop:update_t_g}, an initial condition is in $\mc{F}_k$ if and only if
$$ \int_{\theta} f_* (g_1^{(0)})^{2^{t-1}} (g_2^{(0)})^{2^{t-1}}$$
is nonzero and finite for all $1 \leq t\leq k$, and it is in $\mc{F}$ if and only if this is true for all $t \geq 1$.
These checks are straightforward using the fact that $\sum_{\theta=1}^{\infty} \theta^{-p}$ is finite for $p>1$ and infinite for $p \leq 1$. 
To check that the example for $\paren{\mc{F}\cap\mc{P}} \setminus \mc{U}$ does not have an underlying scenario, use Proposition \ref{prop:underlyinggood} and the fact that
$$\int_{\theta} f_* (g_1^{(0)})^2 = c_1^2 \int_{\theta} \theta^{-1}=\infty.$$
\end{proof}

\end{shownto}

Now we \new{develop} a sufficient condition for feasibility and show that it is \new{``individually optimal,'' in the sense that will be explained in Proposition \ref{prop:suff_feas_weakest}}. We first take care of the degeneracy issue.

\begin{definition}
An initial condition $f_*,f_1^{(0)},\dots,f_N^{(0)}$ is \emph{nondegenerate} if the set
$\{\theta \in \Theta: f_i^{(0)}(\theta)>0 \text{ for all } i\}$ does not have measure zero. It is \emph{degenerate} otherwise.
\end{definition}

\new{Alternatively, an initial condition is degenerate if the product of initial beliefs vanishes almost everywhere. We might expect that degeneracy is a problem for performing updates as the denominator in Definition \ref{def:update} might be zero. The following proposition makes this rigorous.}

\begin{prop}
\label{prop:nondeg}
A feasible initial condition is nondegenerate.
\end{prop}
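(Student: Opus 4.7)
The plan is to prove the contrapositive: a degenerate initial condition cannot be feasible. The key intuition is that after enough rounds, each agent's updated belief involves a product where every initial belief $g_j^{(0)}$ appears with a strictly positive exponent; if the product of all initial beliefs vanishes almost everywhere (degeneracy), this updated product also vanishes a.e., making the normalizing integral zero.

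More concretely, suppose the initial condition $f_*,f_1^{(0)},\dots,f_N^{(0)}$ is degenerate. Then by definition, the set $S = \{\theta \in \Theta : f_i^{(0)}(\theta) > 0 \text{ for all } i\}$ has measure zero. Since $f_*(\theta) > 0$ for all $\theta$, this is the same as $\{\theta : g_i^{(0)}(\theta) > 0 \text{ for all } i\}$ having measure zero. Equivalently, for almost every $\theta$, at least one $g_j^{(0)}(\theta)$ equals zero.

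Next, I would invoke primitivity of the adjacency matrix $A$ (Lemma \ref{lem:primitive}): there exists some $t \geq 1$ such that all entries of $A^t$ are strictly positive, which by Lemma \ref{lem:path_recurrence} means $P_{ji}^{(t)} > 0$ for all $j$ and $i$. Fix such a $t$ and any agent $i$. Then the integrand appearing in the feasibility criterion of Proposition \ref{prop:update_t_g},
$$f_*(\theta) \prod_{j=1}^N g_j^{(0)}(\theta)^{P_{ji}^{(t)}},$$
is nonzero at $\theta$ if and only if every $g_j^{(0)}(\theta) > 0$, i.e., if and only if $\theta \in S$. Since $S$ has measure zero, this integrand vanishes almost everywhere, so
$$\int_{\theta \in \Theta} f_*(\theta) \prod_{j=1}^N g_j^{(0)}(\theta)^{P_{ji}^{(t)}} = 0.$$

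By Proposition \ref{prop:update_t_g}, this forces the initial condition to fail $t$-feasibility (since the norm in that proposition is zero rather than nonzero), and in particular it is not feasible. This contradicts the assumption of feasibility, completing the proof. There is no real obstacle here; the main ingredient is simply that primitivity guarantees all exponents eventually become positive simultaneously, after which degeneracy propagates into a vanishing normalization integral.
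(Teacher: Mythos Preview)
Your proof is correct and follows essentially the same approach as the paper: both arguments choose a round $t$ for which all path counts $P_{ji}^{(t)}$ are strictly positive, observe that degeneracy forces the product $\prod_j (g_j^{(0)})^{P_{ji}^{(t)}}$ to vanish almost everywhere, and then invoke Proposition~\ref{prop:update_t_g} to contradict feasibility via the nonzero-norm requirement. The only cosmetic difference is that the paper fixes a single agent $i$ and picks $k$ so that every $j$ has a path of length $k$ to $i$, whereas you cite primitivity to get all entries of $A^t$ positive at once; the logical content is the same.
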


\begin{shownto}{abb}
\begin{proofsketch}
For large $k$, use the nonzero norm condition in Proposition \ref{prop:update_t_g}.
\end{proofsketch}
\end{shownto}

\begin{shownto}{full}
\begin{proof}
Pick any node $i$ and pick $k$ such that there is a path from any $j$ to $i$ of length $k$. Let $n_j=P_{ji}^{(k)}>0$. Then Proposition \ref{prop:update_t_g} implies that
$(g_1^{(0)})^{n_1}\dots (g_N^{(0)})^{n_N} \in L^1_*$
with nonzero norm. If the initial condition is degenerate, then this function vanishes almost everywhere and must have zero norm, \new{contradiction}. Therefore the initial condition is nondegenerate.
\end{proof}
\end{shownto}

\new{Conversely, by assuming nondegeneracy, we can prove feasibility by requiring some form of boundedness. Specifically, it suffices for the normalized beliefs to be in $L^p_*$ for $p$ large; this is stated precisely below.}

\begin{prop}
\label{prop:sufficient_feasibility}
A nondegenerate initial condition $f_*,f_1^{(0)},\dots,f_N^{(0)}$ is feasible if for every $i$ and $M<\infty$, there is $M \leq p \leq \infty$ such that $g_i^{(0)} \in L^p_*$.
\end{prop}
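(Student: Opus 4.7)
\begin{proofsketch}
The plan is to reduce the statement to a check on products of normalized beliefs via Proposition \ref{prop:update_t_g}, and then combine the interpolation lemma with H\"older's inequality on weighted $L^p$ spaces to establish the required integrability. Specifically, feasibility is equivalent to
$\prod_{j=1}^N \bigl(g_j^{(0)}\bigr)^{P_{ji}^{(t)}} \in L^1_*$ with nonzero norm for every agent $i$ and every $t \geq 0$, so it suffices to verify these two properties for arbitrary nonnegative integer exponents $n_1,\dots,n_N$ (not all zero).

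First I would deal with the nonzero-norm requirement, which is where nondegeneracy is used. Since $f_*(\theta) > 0$ everywhere and, by assumption, $S := \{\theta : f_i^{(0)}(\theta) > 0 \text{ for all } i\}$ has positive measure, on $S$ the integrand $f_* \prod_j \bigl(g_j^{(0)}\bigr)^{n_j}$ is strictly positive (terms with $n_j = 0$ contribute factors of $1$), so the integral is strictly positive.

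Next, for the $L^1_*$ part, the key observation is that the hypothesis combined with the interpolation lemma upgrades each $g_i^{(0)}$ into every $L^q_*$. Indeed, by definition of normalized belief, $\bigl\lVert g_i^{(0)}\bigr\rVert_{1,*} = 1$, so $g_i^{(0)} \in L^1_*$. The hypothesis says that for any $M < \infty$ there is some $p \in [M,\infty]$ with $g_i^{(0)} \in L^p_*$. Applying Lemma \ref{lem:interpolationLp} to the pair $(1,p)$ shows $g_i^{(0)} \in L^q_*$ for every $q \in [1,M]$; since $M$ is arbitrary, $g_i^{(0)} \in L^q_*$ for every finite $q$.

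Finally, I would conclude by H\"older's inequality (Lemma \ref{lem:newholder}). Let $T = \{j : n_j > 0\}$ and set $q_j = 1/|T|$ for $j \in T$, so that $\sum_{j \in T} q_j = 1$. Writing
$$\prod_{j \in T} \bigl(g_j^{(0)}\bigr)^{n_j} = \prod_{j \in T} \Bigl(\bigl(g_j^{(0)}\bigr)^{n_j |T|}\Bigr)^{q_j},$$
each factor $\bigl(g_j^{(0)}\bigr)^{n_j|T|}$ lies in $L^1_*$ by the previous paragraph applied with $q = n_j |T|$ (or trivially if $g_j^{(0)} \in L^\infty_*$). Lemma \ref{lem:newholder} with $p = 1$ then gives
$$\prod_{j \in T} \bigl(g_j^{(0)}\bigr)^{n_j} \in L^1_*,$$
completing the verification. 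The main subtlety, rather than a real obstacle, is to keep track of zero exponents and to extract the $L^p_*$ membership for every finite $p$ from a hypothesis that only supplies one $p \geq M$ per agent; the interpolation lemma handles this cleanly because $g_i^{(0)}$ is automatically in $L^1_*$.
\end{proofsketch}
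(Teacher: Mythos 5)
Your proposal is correct and follows essentially the same route as the paper's proof: reduce to products $\prod_j (g_j^{(0)})^{n_j}$ via Proposition \ref{prop:update_t_g}, get the nonzero norm from nondegeneracy, upgrade each $g_i^{(0)}$ to all finite $L^q_*$ by interpolating against $L^1_*$, and finish with Lemma \ref{lem:newholder}. The only difference is cosmetic: the paper applies H\"older once with $p=M=\sum_i n_i$ and weights $n_i/M$, while you first raise each factor to the power $n_j\lvert T\rvert$ and use equal weights $1/\lvert T\rvert$ at $p=1$ --- the two computations are equivalent.
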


\begin{shownto}{abb}
\begin{proofsketch}
Since $g_i^{(0)} \in L_*^1$, we can interpolate by Lemma \ref{lem:interpolationLp} to get $g_i^{(0)} \in L_*^p$ for all $1\leq p < \infty$.
Lemma \ref{lem:newholder} implies that the feasibility condition in Proposition \ref{prop:update_t_g} is satisfied.
\end{proofsketch}
\end{shownto}

\begin{shownto}{full}
\begin{proof}
By Proposition \ref{prop:update_t_g}, it suffices to prove that for any $n_1,\dots,n_N\geq 0$ that do not vanish simultaneously,
$\prod_{i=1}^{N} (g_i^{(0)})^{n_i} \in L_*^{1}$ with nonzero norm.
The ``nonzero norm'' part follows from nondegeneracy: $\{\theta \in \Theta: g_i^{(0)}(\theta)>0 \text{ for all } i\}$ does not have measure zero. \new{To prove that it is in $L_*^{1}$, we proceed by way of interpolation (Lemma \ref{lem:interpolationLp}) and H\"older (Lemma \ref{lem:newholder}).}
Let $M=n_1+\dots+n_N \geq 1$.
\new{By hypothesis}, for each $i$, there is $M \leq p\leq \infty$ such that 
$g_i^{(0)} \in L_*^p$. Because $g_i^{(0)} \in L_*^1$ by definition, by Lemma \ref{lem:interpolationLp}, $g_i^{(0)}\in L^M_*$.
Then by Lemma \ref{lem:newholder}, $\prod_{i=1}^{N} (g_i^{(0)})^{n_i/M} \in L_*^M$.
Hence $\prod_{i=1}^{N} (g_i^{(0)})^{n_i} \in L_*^1.$
\end{proof}
\end{shownto}

\new{One might ask whether the requirement in the previous proposition can be further weakened. Unfortunately,} the following proposition shows that the hypothesis $g_i^{(0)} \in L^p_*$ in Proposition \ref{prop:sufficient_feasibility} is already the \new{optimal} one to put on each $g_i^{(0)}$ individually.

\begin{prop}\label{prop:suff_feas_weakest}
Fix $f_*$. Let $S \subseteq L_*^1$ be a set with the following property: any nondegenerate initial condition $f_*,f_1^{(0)},\dots,f_N^{(0)}$ such that $g_i^{(0)} \in S$ for all $i$ is feasible.
Then $S \subseteq L_*^{p}$ for all $1\leq p <\infty$.
\end{prop}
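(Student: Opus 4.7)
The plan is to prove this by constructing, for any $g \in S$, a specific nondegenerate initial condition built entirely from $g$ whose feasibility---which the hypothesis on $S$ grants---forces $g$ to lie in $L_*^{p}$ for arbitrarily large $p$. Combined with the standing membership $g \in L_*^1$, interpolation will then yield $g \in L_*^p$ for every $1 \leq p < \infty$.

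Concretely, fix $g \in S$; since $g$ must serve as a normalized belief, we implicitly have $\|g\|_{1,*} = 1$. I would set $f_i^{(0)} := f_* g$ for every $i$, so that all agents share the same normalized belief $g_i^{(0)} = g \in S$. This initial condition is nondegenerate because $\|g\|_{1,*} = 1 > 0$ forces $\{g > 0\}$ to have positive measure. The hypothesis on $S$ then delivers feasibility, at which point Proposition \ref{prop:update_t_g} collapses to the requirement
$$g^{M_{i,t}} \in L_*^1 \quad \text{for every } i \text{ and every } t \geq 0,$$
where $M_{i,t} := \sum_{j=1}^N P_{ji}^{(t)} = \sum_{j} (A^t)_{ji}$ is a column sum of $A^t$. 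Equivalently, $g \in L_*^{M_{i,t}}$ for all such $i$ and $t$.

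The remaining step is to show that the exponents $M_{i,t}$ are unbounded in $t$, and this is where Perron--Frobenius enters. Lemma \ref{lem:spectral_radius_greater_than_one} supplies $r := \rho(A) > 1$, and Theorem \ref{thm:perron} gives $(A^t)_{ji}/r^t \to v_j w_i/(w^\top v) > 0$ for the Perron vectors $v$ and $w$, so each $M_{i,t}$ grows like a positive multiple of $r^t$ and tends to $\infty$. Combining $g \in L_*^1$ with $g \in L_*^{M_{i,t}}$ for an unbounded sequence of exponents, Lemma \ref{lem:interpolationLp} delivers $g \in L_*^p$ for every $1 \leq p < \infty$, as required.

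I do not foresee a serious obstacle: the construction is the ``diagonal'' choice $g_i^{(0)} = g$, and the rest is a clean invocation of Proposition \ref{prop:update_t_g} together with Perron--Frobenius asymptotics. The one point requiring care is the implicit normalization $\|g\|_{1,*} = 1$, which should be baked into the interpretation of $S$ as a set of normalized beliefs (or flagged in the statement); it is only for such $g$ that the hypothesis yields any constraint at all, since otherwise $g$ cannot be realized as a $g_i^{(0)}$.
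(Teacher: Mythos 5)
Your proof is correct and is essentially identical to the paper's own argument: take the diagonal initial condition $g_i^{(0)}=f$ for $f\in S$, invoke feasibility and Proposition \ref{prop:update_t_g} to get $f\in L_*^{p_t}$ for exponents $p_t$ that are column sums of $A^t$, show $p_t\to\infty$ via Perron--Frobenius and Lemma \ref{lem:spectral_radius_greater_than_one}, and interpolate with Lemma \ref{lem:interpolationLp}. The only cosmetic difference is that the paper disposes of the case where $f$ vanishes almost everywhere separately (since then the diagonal condition would be degenerate and the hypothesis unavailable, but the conclusion is trivial), whereas you fold this into the normalization $\norm{g}_{1,*}=1$; both handle the same harmless edge case.
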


\begin{shownto}{abb}
\begin{proofsketch}
For any $f \in S$, consider an initial condition with $g_i^{(0)}=f$.
By Proposition \ref{prop:update_t_g},
$f^{P_{1i}^{(t)}+\dots+P_{Ni}^{(t)}} \in L^1_*$.
The exponent goes to infinity by Theorem \ref{thm:perron} and  Lemmas \ref{lem:path_recurrence} and \ref{lem:spectral_radius_greater_than_one}, so we can interpolate using Lemma \ref{lem:interpolationLp}.
\end{proofsketch}
\end{shownto}

\begin{shownto}{full}
\begin{proof}
Pick any $f \in S$ and $1 \leq p <\infty$. We must show that $f \in L_*^p$. If $f$ vanishes almost everywhere, then clearly $f \in L_*^p$. So suppose otherwise. An initial condition with $g_i^{(0)}=f$ is well-defined \new{(i.e. $f_i^{(0)}$ has integral 1)} because $f \in L^1_*$, and it is nondegenerate, so it is feasible by hypothesis.
\new{We show that this feasibility forces $f$ to be in $L^p_*$.}
By Proposition \ref{prop:update_t_g}, the function
$f^{P_{1i}^{(t)}+\dots+P_{Ni}^{(t)}} \in L^1_*$ for any $i$ and $t$.
By Lemma \ref{lem:path_recurrence}, these $P_{ji}^{(t)}$'s are entries of $A^t$, which go to infinity as $t\to\infty$ by Theorem \ref{thm:perron} and Lemma \ref{lem:spectral_radius_greater_than_one}.
So we obtain a sequence $p_t \to \infty$ as $t\to\infty$ such that $f \in L^{p_t}_*$. By Lemma \ref{lem:interpolationLp}, \new{we can interpolate to get} $f \in L^p_*$.
\end{proof}
\end{shownto}

\new{Finally,} Proposition \ref{prop:sufficient_feasibility} can be slightly weakened to give a sufficient condition for feasibility that is easy to check in practice \new{in the next corollary.
This weaker version  is used to establish feasibility in e.g. Corollary  \ref{cor:Rkcontinuous}
and Theorem \ref{thm:mainsimplified}.}
\begin{shownto}{abb}We omit its proof.\end{shownto}

\begin{cor}
\label{cor:suffcientfeas}
A nondegenerate initial condition $f_*,f_1^{(0)},\dots,f_N^{(0)}$ is feasible if for every $i$, $g_i^{(0)} \in L^{\infty}$, that is, $g_i^{(0)}$ is bounded almost everywhere.
\end{cor}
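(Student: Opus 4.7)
The plan is to derive this corollary by a one-line application of Proposition \ref{prop:sufficient_feasibility}. That proposition already establishes feasibility under the hypothesis that for every agent $i$ and every finite $M$, there exists some $p$ with $M \leq p \leq \infty$ satisfying $g_i^{(0)} \in L^p_*$. So the entire task reduces to verifying this hypothesis under the stronger assumption that $g_i^{(0)} \in L^\infty$ for all $i$.

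To do this, I would recall the definition of the weighted space $L^\infty_*$ given just before Lemma \ref{lem:newholder}: at the endpoint $p = \infty$, no prior weighting actually occurs, so $L^\infty_* = L^\infty$ and the norm is simply the essential supremum. Therefore the hypothesis $g_i^{(0)} \in L^\infty$ of the corollary is literally the same as $g_i^{(0)} \in L^\infty_*$. Given any $M < \infty$, I would then take $p = \infty$ as the witness, for which $M \leq p \leq \infty$ holds vacuously and $g_i^{(0)} \in L^p_*$ is exactly our assumption. This verifies the hypothesis of Proposition \ref{prop:sufficient_feasibility} for every $i$ and every $M$.

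Applying Proposition \ref{prop:sufficient_feasibility}, nondegeneracy together with this verified boundedness condition yields feasibility of the initial condition, finishing the proof. There is no substantive obstacle here; the only thing worth pointing out is that essential boundedness of the normalized beliefs is the strongest form of the integrability condition required by Proposition \ref{prop:sufficient_feasibility}, which makes it an especially convenient criterion to check in practice (e.g., for the Gaussian and other continuous-belief applications later in the paper).
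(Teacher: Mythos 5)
Your proof is correct and takes essentially the same route as the paper: both reduce the corollary to a direct application of Proposition \ref{prop:sufficient_feasibility}. The only (immaterial) difference is that you supply the witness $p=\infty$ directly, using $L^\infty_* = L^\infty$, whereas the paper first interpolates via Lemma \ref{lem:interpolationLp} to get $g_i^{(0)} \in L^p_*$ for all finite $p$; since the proposition's hypothesis explicitly allows $p=\infty$ and its own proof performs that interpolation internally, your shortcut is valid.
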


\begin{shownto}{full}
\begin{proof}
By definition, $g_i^{(0)} \in L_*^{1}$, so by Lemma \ref{lem:interpolationLp}, $g_i^{(0)} \in L_*^{p}$ for any $1 \leq p < \infty$. \new{Thus the requirements of} Proposition \ref{prop:sufficient_feasibility} \new{are satisfied}.
\end{proof}
\end{shownto}

\section{Concentration of Beliefs}
\label{sec:concentrate}
\new{With the tools we have built up, we can now start seriously analyzing the model.}
Assume from this point onwards that the initial condition is feasible.
Our goal in this section is to establish the following remarkable result \new{about concentration of beliefs}. In the long term, beliefs of agents in our model tend to concentrate at points of $\Theta$ which maximize the \emph{weighted likelihood function} $L$ \new{(Def. \ref{def:weightedlikelihood})}. Specifically, for a set $A \subseteq \Theta$ such that the values of $L$ on this set are dominated in some sense by $L(\theta')$ for another $\theta' \in \Theta$,
$$\int_{\theta\in A}f_i^{(t)}(\theta)\to 0$$
as $t\to\infty$. \new{We can think of this as saying that the agents' beliefs are being concentrated more and more near the maximizer of $L$, and so the beliefs on any set that stays clear of the maximizer tend to zero. The precise statement is in} the Master Theorem \ref{thm:master}.
\new{Once we have this result, it will become relatively easy to} show convergence of the agents' beliefs towards maximizers of $L$ \new{in various settings; this will be done} in Section \ref{sec:converge}.

\subsection{Weighted Likelihood}
\label{subsec:Lfun}
We first define the weighted likelihood function \new{mentioned in the introduction}.

\begin{definition}
\label{def:weightedlikelihood}
Assume that the initial condition is feasible. The \emph{weighted likelihood function} is defined for each $\theta \in \Theta$ by
$$L(\theta)= \prod_{i=1}^N (g_i^{(0)}(\theta))^{v_i}.$$
\end{definition}

Recall that \new{$g_i^{(0)}$ is the normalized beliefs defined in Definition \ref{def:normalized_belief}} and that the eigenvector centrality $v_i$ can be thought of as the influence of agent $i$ (Sec. \ref{subsec:perron}).
So this function aggregates the normalized beliefs of agents while being biased by the agents' influences.
\new{To see how this function is being ``biased,''
the \emph{unbiased} likelihood of each parameter $\theta$ given the initial beliefs of agents formed according to an underlying scenario (Sec. \ref{subsec:modelupdates}) is given by
$$L_{\text{unbiased}}(\theta) = c f_*(\theta) \prod_{i=1}^N g_i^{(0)}(\theta),$$
where $c$ is a normalizing constant. Under optimal information aggregation, the posterior belief is proportional to $L_{\text{unbiased}}$. To derive this, simply do the computation analogous to the one in Proposition \ref{prop:justifyrule}. 
Notice that $L$ and $L_{\text{unbiased}}$ are different in two ways. First, each $g_i^{(0)}$ factor in $L$ is to the power of $v_i$, suggesting that $L$ is biased by the centrality of each agent as already pointed out. Second, there is an extra factor of $f_*$, the prior, in $L_{\text{unbiased}}$. This is because agents in our model become so confident in their opinions that the prior is disregarded entirely, so this factor does not appear in $L$.}

\new{We may wonder what properties the weighted likelihood function $L$ has.}
The following curious proposition \new{gives a partial answer}: $L$ is almost bounded, that is, almost in $L^\infty$.
\new{This result is not used anywhere.}

\begin{prop}
\label{prop:LLp}
For any $1\leq p < \infty$, $L \in L^p_*$ with nonzero norm.
\end{prop}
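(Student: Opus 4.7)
The plan is to exploit feasibility via a clever application of the weighted Hölder inequality (Lemma \ref{lem:newholder}) with weights equal to the Perron vector $v$ of the adjacency matrix $A$. The key identity to bring out is that $v$ is a right eigenvector with eigenvalue $r = \rho(A)$, so $A^t v = r^t v$; this will collapse the family of $i$-indexed feasibility integrals into a single statement about $L^{r^t}$.

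First I would collect the ingredients. By feasibility and Proposition \ref{prop:update_t_g}, for every agent $i$ and every $t \geq 0$ the function $\psi_i^{(t)} := \prod_k (g_k^{(0)})^{P_{ki}^{(t)}}$ lies in $L^1_*$. Separately, applying Lemma \ref{lem:newholder} directly to $L = \prod_k (g_k^{(0)})^{v_k}$ with weights $v_k$ (which sum to $1$ by normalization of the Perron vector) yields $\norm{L}_{1,*} \leq \prod_k \norm{g_k^{(0)}}_{1,*}^{v_k} = 1$, so $L \in L^1_*$ right away.

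The crucial step is to apply Lemma \ref{lem:newholder} a second time, this time to the $N$ functions $\psi_1^{(t)},\dots,\psi_N^{(t)}$ with weights $v_1,\dots,v_N$. The weighted product simplifies as
\[
\prod_i \paren{\psi_i^{(t)}}^{v_i} = \prod_k (g_k^{(0)})^{\sum_i v_i P_{ki}^{(t)}} = \prod_k (g_k^{(0)})^{(A^t v)_k} = \prod_k (g_k^{(0)})^{r^t v_k} = L^{r^t},
\]
so Hölder delivers $\norm{L^{r^t}}_{1,*} \leq \prod_i \norm{\psi_i^{(t)}}_{1,*}^{v_i} < \infty$, i.e., $L \in L^{r^t}_*$. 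Since $r > 1$ by Lemma \ref{lem:spectral_radius_greater_than_one}, the exponents $r^t$ tend to infinity. Given any $1 \leq p < \infty$, fix $t$ large enough that $r^t \geq p$ and interpolate between $L^1_*$ and $L^{r^t}_*$ via Lemma \ref{lem:interpolationLp} to obtain $L \in L^p_*$. Nonzeroness of the norm is immediate from Proposition \ref{prop:nondeg}: feasibility gives nondegeneracy, so there is a set of positive measure on which every $g_k^{(0)}$ is positive, hence $L > 0$ there.

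The only mildly subtle point is guessing the right Hölder weights: once one uses the Perron vector, the computation is forced by $A^t v = r^t v$, and the rest is routine interpolation plus a pointwise positivity argument. No other obstacle arises.
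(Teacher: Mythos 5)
Your proof is correct and follows essentially the same route as the paper's: take the weighted geometric mean of the feasibility integrals $\prod_k (g_k^{(0)})^{P_{ki}^{(t)}} \in L^1_*$ with Hölder weights given by the Perron vector, use $A^t v = r^t v$ to identify the result as $L^{r^t}$, and then interpolate using $r>1$, with nonzero norm coming from nondegeneracy. The only cosmetic difference is your separate verification that $L \in L^1_*$, which the paper obtains as the $t=0$ case of $L \in L^{r^t}_*$.
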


\begin{shownto}{abb}
\begin{proofsketch}
By Proposition \ref{prop:update_t_g}, $\prod_{j=1}^N (g_j^{(0)})^{P_{ji}^{(t)}} \in L^1_*$.
By Lemma \ref{lem:newholder}, we can take the weighted geometric mean for all $i$ to get
$\prod_{j=1}^N (g_j^{(0)})^{\sum_{i=1}^N v_iP_{ji}^{(t)}} \in L^1_*.$
By Lemma \ref{lem:path_recurrence}, this means
$L \in L^{r^t}_*$ where $r=\rho(A)$.
Note that $r>1$ by Lemma \ref{lem:spectral_radius_greater_than_one} and interpolate using Lemma \ref{lem:interpolationLp}.
\end{proofsketch}
\end{shownto}

\begin{shownto}{full}
\begin{proof}
By Proposition \ref{prop:update_t_g}, $\prod_{j=1}^N (g_j^{(0)})^{P_{ji}^{(t)}} \in L^1_*$ for any $i$ and $t$.
\new{The idea is that we can average these functions together to get the desired result.}
Recall that $\sum_i v_i=1$. By Lemma \ref{lem:newholder}, we can take the weighted geometric mean for all $i$ weighted by $v_i$ to get
$$\prod_{j=1}^N (g_j^{(0)})^{\sum_{i=1}^N v_iP_{ji}^{(t)}} \in L^1_*.$$
By Lemma \ref{lem:path_recurrence}, $P_{ji}^{(t)}$ is the $(j,i)$ entry of $A^t$, so the quantity in the exponent is the $j$th entry of $A^tv$. Because $Av=rv$ where $r=\rho(A)$, this quantity is $r^t v_j$.
We conclude that $L \in L^{r^t}_*$ for all $t\geq 0$.
Because $r>1$ by Lemma \ref{lem:spectral_radius_greater_than_one}, Lemma \ref{lem:interpolationLp} implies that $L \in L^p_*$ for all $1 \leq p < \infty$. Its norms are nonzero because the initial condition is nondegenerate (Prop. \ref{prop:nondeg}).
\end{proof}
\end{shownto}

We now show that the weighted likelihood function guides the convergence of the ratio of beliefs at two different points (Prop. \ref{prop:maxdominate}).
\new{We can think of this as a ``pointwise'' concentration result. It is much weaker than the actual concentration result we seek (except when $\Theta$ is finite; see Theorem \ref{thm:mainfinite}).
Nonetheless, it is} the basis of the proof of the Master Theorem \ref{thm:master}.
The next lemma deals with the degeneracy of the weighted likelihood function, \new{i.e. the characterization of when $L(\theta)=0$. It is used to avoid some issues of dividing by zero.}

\begin{lemma}
\label{lem:Lzeroornot}
There is $T$ with the following property.
For any $\theta \in \Theta$ and $i$, if $L(\theta)=0$, then $f_i^{(t)}(\theta)=0$ for $t \geq T$, and if $L(\theta)>0$, then $f_i^{(t)}(\theta)> 0$ for all $t \geq 0$.
\end{lemma}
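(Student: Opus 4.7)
The plan is to characterize when $L(\theta)=0$ in terms of the normalized initial beliefs $g_j^{(0)}(\theta)$ and then use the explicit formula for $g_i^{(t)}$ given in Proposition~\ref{prop:update_t_g} to convert this into a statement about $f_i^{(t)}(\theta)$. The link to $T$ will come from the primitivity of $A$ via Lemma~\ref{lem:primitive}.

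First I would observe that because the Perron vector has strictly positive entries (Theorem~\ref{thm:perron}), $v_i > 0$ for every $i$; hence $L(\theta) = \prod_i (g_i^{(0)}(\theta))^{v_i} > 0$ if and only if $g_j^{(0)}(\theta) > 0$ for every $j$. Next, Proposition~\ref{prop:update_t_g} gives $f_i^{(t)}(\theta) = f_*(\theta) g_i^{(t)}(\theta)$, and since $f_*(\theta) > 0$ everywhere and the denominator in the formula for $g_i^{(t)}$ is a nonzero constant (by feasibility), the positivity of $f_i^{(t)}(\theta)$ is equivalent to the positivity of the numerator $\prod_{j=1}^N (g_j^{(0)}(\theta))^{P_{ji}^{(t)}}$, with the convention $0^0=1$.

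With this equivalence, the case $L(\theta) > 0$ is immediate: every base $g_j^{(0)}(\theta)$ is strictly positive, so the product is positive for all $t$ and all $i$, giving $f_i^{(t)}(\theta) > 0$ for all $t \geq 0$.

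For the case $L(\theta) = 0$, there exists some $j_0$ with $g_{j_0}^{(0)}(\theta) = 0$, and it suffices to arrange that the exponent $P_{j_0 i}^{(t)}$ is strictly positive for every $i$, since then the factor $(g_{j_0}^{(0)}(\theta))^{P_{j_0 i}^{(t)}} = 0$ kills the product. By Lemma~\ref{lem:primitive}, $A$ is primitive, so there is some $T$ such that $A^T$ has strictly positive entries, i.e.\ $P_{ji}^{(T)} > 0$ for all $i,j$. To propagate this to all $t \geq T$, I use the fact that every node has a self-loop, so $A_{jj} \geq 1$; then
\[
P_{ji}^{(t+1)} = \sum_{k} A_{jk}P_{ki}^{(t)} \geq A_{jj}P_{ji}^{(t)} \geq P_{ji}^{(t)},
\]
which inductively shows $P_{ji}^{(t)} > 0$ for all $t \geq T$. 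Taking the same $T$ works uniformly in $\theta$ and $i$ since it depends only on the graph, completing the proof.

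The only mildly tricky step is the self-loop monotonicity argument needed to pass from primitivity (which a priori gives positivity of $A^T$ at a single exponent) to positivity at all $t \geq T$; everything else is a direct unpacking of the update formula and the definition of $L$.
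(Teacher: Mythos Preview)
Your proof is correct and follows essentially the same approach as the paper's: both reduce to the formula of Proposition~\ref{prop:update_t_g} and the positivity of all $P_{ji}^{(t)}$ for large $t$. The only difference is that the paper simply asserts the existence of $T$ with $P_{ij}^{(t)}>0$ for all $i,j$ and $t\geq T$, whereas you explicitly derive it from primitivity plus the self-loop monotonicity $P_{ji}^{(t+1)}\geq P_{ji}^{(t)}$; this extra detail is fine and fills in what the paper leaves implicit.
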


\begin{shownto}{abb}
\begin{proofsketch}
Straightforward by Proposition \ref{prop:update_t_g}.
\end{proofsketch}
\end{shownto}

\begin{shownto}{full}
\begin{proof}
If $L(\theta)>0$, then $f_i^{(0)}(\theta)>0$ for all $i$. The desired result follows from \new{the formula in} Proposition \ref{prop:update_t_g}.
\new{Now suppose that $L(\theta)=0$.}
Let $T$ be large enough so that $P_{ij}^{(t)}>0$ for all $i,j$ and $t\geq T$.
\new{Because} $f_j^{(0)}(\theta)=0$ for some $j$, $f_i^{(t)}(\theta)=0$ for all $i$ and $t\geq T$ by \new{the formula in} Proposition \ref{prop:update_t_g}.
\end{proof}
\end{shownto}

\new{The next proposition gives an explicit formula for the ratio of beliefs at two different points, highlighting the terms that become small in the limit. It is a crucial component that is used in Section \ref{subsec:master} to prove the Master Theorem \ref{thm:master}.}

\begin{prop}
\label{prop:ratio_of_f_asymptotic}
Let $r=\rho(A)$ be the spectral radius of $A$, and $v$ and $w$ be the Perron vectors of $A$ and $A^\top$, respectively.
Then there are $\eps_{i,j}^{(t)} \to 0$ as $t\to \infty$ such that for any $\theta_1,\theta_2 \in \Theta$ with $L(\theta_2)>0$,
$$\frac{f_i^{(t)}(\theta_1)}{f_i^{(t)}(\theta_2)}
= \frac{f_*(\theta_1)}{f_*(\theta_2)}
\bracket{
\prod_{j=1}^{N}
\Bigg( \frac{g_j^{(0)}(\theta_1)}{g_j^{(0)}(\theta_2)}\Bigg)^{v_j+\eps_{i,j}^{(t)}}}
^{w_i r^t/w^\top v}.$$
\end{prop}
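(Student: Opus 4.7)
The plan is to start from the explicit formula for $g_i^{(t)}$ given in Proposition \ref{prop:update_t_g}, rewrite the ratio as a product over $j$, and then absorb the convergence of $P_{ji}^{(t)}/r^t$ coming from Perron--Frobenius into the exponents.

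First I would check that the ratio is well defined. Since $L(\theta_2) = \prod_j g_j^{(0)}(\theta_2)^{v_j} > 0$ and $v_j > 0$ for every $j$ (Theorem \ref{thm:perron}), each $g_j^{(0)}(\theta_2) > 0$, so the ratios $g_j^{(0)}(\theta_1)/g_j^{(0)}(\theta_2)$ make sense (possibly zero). Moreover, Lemma \ref{lem:Lzeroornot} ensures $f_i^{(t)}(\theta_2) > 0$ for all $t$, so the left-hand side is defined.

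Next, using $f_i^{(t)}(\theta) = f_*(\theta) g_i^{(t)}(\theta)$ and the formula in Proposition \ref{prop:update_t_g}, the normalizing integrals in numerator and denominator are independent of $\theta$ and thus cancel, leaving
$$
\frac{f_i^{(t)}(\theta_1)}{f_i^{(t)}(\theta_2)}
= \frac{f_*(\theta_1)}{f_*(\theta_2)}
\prod_{j=1}^N \paren{\frac{g_j^{(0)}(\theta_1)}{g_j^{(0)}(\theta_2)}}^{P_{ji}^{(t)}}.
$$
So it remains to rewrite each exponent $P_{ji}^{(t)}$ in the form $(v_j + \eps_{i,j}^{(t)}) \cdot w_i r^t / w^\top v$.

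For this step I would invoke Lemma \ref{lem:path_recurrence} to identify $P_{ji}^{(t)}$ with the $(j,i)$ entry of $A^t$, and then Theorem \ref{thm:perron}, which gives $A^t/r^t \to vw^\top/(w^\top v)$ entrywise. Thus $P_{ji}^{(t)}/r^t \to v_j w_i / (w^\top v)$, and defining
$$
\eps_{i,j}^{(t)} := \frac{w^\top v}{w_i r^t}\, P_{ji}^{(t)} - v_j
$$
gives $\eps_{i,j}^{(t)} \to 0$ as $t \to \infty$, with the identity $P_{ji}^{(t)} = (v_j + \eps_{i,j}^{(t)})\, w_i r^t / w^\top v$. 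Substituting this into the exponents in the displayed product yields exactly the claimed formula.

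There is no real obstacle here beyond bookkeeping; the only mild subtleties are the nonnegativity of base factors (handled by the $L(\theta_2) > 0$ hypothesis) and the positivity of $w_i$ required for the definition of $\eps_{i,j}^{(t)}$, which is guaranteed by Perron--Frobenius (Theorem \ref{thm:perron}).
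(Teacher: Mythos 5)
Your proposal is correct and follows essentially the same route as the paper's own proof: both derive the ratio from the explicit formula of Proposition \ref{prop:update_t_g}, identify $P_{ji}^{(t)}$ with the $(j,i)$ entry of $A^t$ via Lemma \ref{lem:path_recurrence}, and use the Perron projection from Theorem \ref{thm:perron} to write $P_{ji}^{(t)} = (v_j + \eps_{i,j}^{(t)})\,w_i r^t/w^\top v$ with $\eps_{i,j}^{(t)} \to 0$. Your explicit check that $L(\theta_2)>0$ forces each $g_j^{(0)}(\theta_2)>0$ is a small but welcome addition of detail.
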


\new{From this proposition, we can begin to see the weighted likelihood function take shape: in the formula above, if the prior factors in front are ignored and the small terms $\eps_{i,j}^{(t)}$ disregarded, what is left is the ratio $L(\theta_1)/L(\theta_2)$ raised to some power.}

\begin{shownto}{abb}
\begin{proofsketch}
Use Proposition \ref{prop:update_t_g} to get the expression for the ratio
$g_i^{(t)}(\theta_1)/g_i^{(t)}(\theta_2)$.
Then use Lemma \ref{lem:path_recurrence} and Theorem \ref{thm:perron} to transform it into the desired formula.
\end{proofsketch}
\end{shownto}

\begin{shownto}{full}
\begin{proof}
\new{The proof is essentially by direct computation.}
Note that $f_i^{(t)}(\theta_2)>0$ by Lemma \ref{lem:Lzeroornot}, \new{so the formula makes sense}. By Proposition \ref{prop:update_t_g} for $\theta_1$ and $\theta_2$,
$$ \frac{g_i^{(t)}(\theta_1)}{g_i^{(t)}(\theta_2)} = \prod_{j=1}^{N} \Bigg( \frac{g_j^{(0)}(\theta_1)}{g_j^{(0)}(\theta_2)} \Bigg)^{P_{ji}^{(t)}}.$$
By Lemma \ref{lem:path_recurrence}, $P_{ji}^{(t)}$ is the $(j,i)$ entry of $A^t$.
By Theorem \ref{thm:perron}, this quantity equals
$$r^t \left( \frac{(vw^\top)_{ji}}{w^\top v} + \delta_{i,j}^{(t)} \right) = r^t \left( \frac{v_jw_i}{w^\top v} + \delta_{i,j}^{(t)} \right),$$ where $\delta_{i,j}^{(t)} \to 0$ as $t\to \infty$. The result follows by letting $\eps_{i,j}^{(t)} = \delta_{i,j}^{(t)} w^\top v/w_i$ \new{and substituting these all in}.
\end{proof}
\end{shownto}

\new{We can now show the ``pointwise'' concentration of beliefs at maximizers of $L$. This hints at the actual concentration result in the next subsection.}

\begin{prop}
\label{prop:maxdominate}
If $\theta_1,\theta_2 \in \Theta$ are such that $L(\theta_1) < L(\theta_2)$, then as $t \to \infty$,
$$\frac{f_i^{(t)}(\theta_1)}{f_i^{(t)}(\theta_2)} \to 0.$$
\end{prop}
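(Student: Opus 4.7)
The plan is to apply Proposition \ref{prop:ratio_of_f_asymptotic} directly and then carefully analyze what happens to each factor in that formula. I would split into two cases according to whether $L(\theta_1) = 0$ or $L(\theta_1) > 0$.

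First, in the degenerate case $L(\theta_1)=0 < L(\theta_2)$, Lemma \ref{lem:Lzeroornot} immediately gives $f_i^{(t)}(\theta_1)=0$ for all $t \geq T$ and $f_i^{(t)}(\theta_2) > 0$ for all $t$, so the ratio is identically $0$ for large $t$ and trivially tends to $0$.

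Second, in the main case $0 < L(\theta_1) < L(\theta_2)$, note that since $v_j > 0$ for all $j$ by the Perron-Frobenius Theorem \ref{thm:perron}, the condition $L(\theta_1) > 0$ forces $g_j^{(0)}(\theta_1) > 0$ for every $j$, and likewise for $\theta_2$. Thus every ratio $g_j^{(0)}(\theta_1)/g_j^{(0)}(\theta_2)$ appearing in Proposition \ref{prop:ratio_of_f_asymptotic} is a strictly positive real number. Denote this ratio by $\alpha_j$, and let
$$B_t := \prod_{j=1}^N \alpha_j^{v_j + \eps_{i,j}^{(t)}}.$$
Since $\eps_{i,j}^{(t)} \to 0$ as $t \to \infty$ and all $\alpha_j$ are positive, we have $B_t \to \prod_j \alpha_j^{v_j} = L(\theta_1)/L(\theta_2)$. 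By hypothesis this limit is some $c \in (0,1)$, so for all sufficiently large $t$, $B_t \leq c' < 1$ for any fixed $c' \in (c,1)$.

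Finally, the exponent $w_i r^t / w^\top v$ in the formula of Proposition \ref{prop:ratio_of_f_asymptotic} tends to $+\infty$ as $t\to\infty$, because $w_i>0$ and $w^\top v>0$ (Perron-Frobenius, Theorem \ref{thm:perron}) and $r = \rho(A) > 1$ by Lemma \ref{lem:spectral_radius_greater_than_one}. Therefore $B_t^{w_i r^t/w^\top v} \leq (c')^{w_i r^t/w^\top v} \to 0$. Multiplying by the fixed finite positive prefactor $f_*(\theta_1)/f_*(\theta_2)$ preserves the limit, so
$$\frac{f_i^{(t)}(\theta_1)}{f_i^{(t)}(\theta_2)} \to 0,$$
as desired. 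The only mildly delicate point is justifying that $B_t$ is eventually bounded strictly below $1$ rather than merely tending to something below $1$; this is needed because the exponent grows unboundedly and uniform control is essential. This is handled by the standard argument above, using the continuity of the expression in $\eps_{i,j}^{(t)}$ and the fact that all $\alpha_j > 0$.
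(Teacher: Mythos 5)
Your proof is correct and follows essentially the same route as the paper: apply Proposition \ref{prop:ratio_of_f_asymptotic}, observe that the bracketed expression converges to $L(\theta_1)/L(\theta_2)<1$ and is hence eventually bounded by some constant strictly less than $1$, and note that the exponent tends to infinity by Lemma \ref{lem:spectral_radius_greater_than_one}. Your explicit handling of the degenerate case $L(\theta_1)=0$ via Lemma \ref{lem:Lzeroornot} is a small extra care the paper leaves implicit, but it does not change the argument.
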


\begin{shownto}{abb}
\begin{proofsketch}
Use Proposition \ref{prop:ratio_of_f_asymptotic} and the fact that $r>1$ from Lemma \ref{lem:spectral_radius_greater_than_one}.
\end{proofsketch}
\end{shownto}

\begin{shownto}{full}
\begin{proof}
\new{The proof is by directly applying} Proposition \ref{prop:ratio_of_f_asymptotic}.
As $t\to\infty$, the expression in the bracket converges to $L(\theta_1)/L(\theta_2)<1$, so this expression is smaller than some $\alpha<1$ for large $t$. By Lemma \ref{lem:spectral_radius_greater_than_one}, the exponent of this expression goes to infinity, so $f_i^{(t)}(\theta_1)/f_i^{(t)}(\theta_2)\to 0$ as $t\to\infty$.
\end{proof}
\end{shownto}

\subsection{Master Theorem}
\label{subsec:master}

\new{This is a technical section; its aim is to prove the Master Theorem \ref{thm:master} which establishes concentration of beliefs in very general settings. In order to do this, subtle issues of convergence arise which can be best dealt with using ideas from measure theory (e.g. Lebesgue's dominated convergence theorem). Even in the relatively benign setting $\Theta = \mb{N}$, these issues are not so easy to circumnavigate.}
We first present the \new{general} idea of the Master Theorem. Suppose $A\subseteq \Theta$ and $\theta' \in \Theta$ are such that $L(\theta)<L(\theta')$ for all $\theta \in A$.
\new{Thus, $\theta'$ ``dominates'' all of $A$ in the sense of the weighted likelihood function. We hope that
$\int_{\theta\in A} f_i^{(t)}(\theta) \to 0$ as $t\to\infty$.} Proposition \ref{prop:maxdominate} implies the \new{pointwise result} $f_i^{(t)}(\theta)/f_i^{(t)}(\theta') \to 0$ as $t\to\infty$. We would like to aggregate this pointwise result into
\begin{equation}
\label{eq:firststepmaster}
\frac{\int_{\theta\in A} f_i^{(t)}(\theta)}{f_i^{(t)}(\theta')} \to 0
\end{equation}
(Prop. \ref{prop:lessthanapoint}), so that we may conclude $\int_{\theta\in A}f_i^{(t)}(\theta) \to 0$.
To do so, we need to control the convergence in Proposition \ref{prop:maxdominate}; \new{the convergence needs to be ``uniform'' in some way}. It turns out that the correct hypothesis is that the inequality $L(\theta)<L(\theta')$ holds when the exponents in $L$ are slightly perturbed: $L(\theta)g_i^{(0)}(\theta)^{\delta_i}<L(\theta')g_i^{(0)}(\theta')^{\delta_i}$ for small $\delta_i>0$.
\new{As explained in Section \ref{subsec:contribute}, we can think of this as $\theta'$ dominating $A$ \emph{robustly}, i.e. the domination does not fail even if the centralities of agents are being slightly perturbed.
Later we will offer an example where this robustness condition fails and concentration also fails} (Ex. \ref{ex:countercountable}), which shows that this hypothesis cannot be omitted.

After obtaining the bound \eqref{eq:firststepmaster}, if $\Theta$ is discrete, \new{i.e. $\Theta = \mb{N}$}, then the obvious bound $f_i^{(t)}(\theta')\leq 1$ finishes the proof. \new{But this is not so easy if $\Theta$ is continuous, i.e. $\Theta = \R^k$. In fact, if concentration is to be true, it is likely that $f_i^{(t)}(\theta')\to \infty$ as $t\to\infty$ if $\theta'$ maximizes $L$, and so we cannot get from \eqref{eq:firststepmaster} to the desired result directly. The second idea is to pick $\theta'$ that is \emph{not} a maximizer of $L$ and show that $f_i^{(t)}(\theta') \to 0$ (which is likely to happen if concentration near the maximizer is to occur), which will then allow us to get the desired result.
To do this, we} find a set $B$ such that $L(\theta)>L(\theta')$ for all $\theta \in B$, \new{i.e. now the set $B$ dominates $\theta'$ instead}. Then if the convergence is similarly controlled, we should get
$$\frac{\int_{\theta\in B}f_i^{(t)}(\theta)}{f_i^{(t)}(\theta')} \to \infty$$
(Prop. \ref{prop:morethanapoint}), which implies $f_i^{(t)}(\theta') \to 0$. Putting all of this together completes the proof.

\new{To summarize, the mechanics of the Master Theorem are a two-step domination process. Suppose that we have sets $A$ and $B$ and a point $\theta'$, such that $\theta'$ dominates $A$ and $B$ dominates $\theta'$, and both dominations are robust. Then the latter domination forces the density at $\theta'$ to go to zero, which, by the former domination, in turn forces the density on $A$ to go to zero. To see how these sets might be picked in actual situations, suppose $\Theta=\R$ and $L$ is continuous with a unique peak at 0. Let $A=[1,2]$. Then it might be possible to pick $\theta'=1/2$ and $B=[-1/4,1/4]$, i.e. we want to pick $B$ to be very close to 0, and $\theta'$ to be somewhere between $A$ and $B$. This selection process will be further automated in Section \ref{sec:converge} to get results like Theorem \ref{thm:mainsimplified} that can be directly used in practice.}

\new{In what follows, we implement the program outlined above. First we formalize the notion of this ``robust domination'' which allows us to control convergence. The following lemma is a purely analytical statement that contains the essential ingredients to carrying out the ``$\theta'$ dominates $A$'' part; then Proposition \ref{prop:lessthanapoint} will actually carry it out. Throughout,} we assume familiarity with Lebesgue's dominated convergence theorem and Fatou's lemma.

\begin{lemma}
\label{lem:dominatedbound}
Let $S$ be a measure space. For $t \in \Z_{\geq 0}$ and $1\leq i \leq N$, let $a_i,c:S \to \R_{\geq 0}$ be measurable functions, $r_t >0$ and $\eps_i^{(t)} \geq -1$.
As $t\to\infty$, $r_t \to \infty$ and, for every $i$, $\eps_i^{(t)}\to 0$.
The functions $a_i$ has product
$A(x):=\prod_{i=1}^N a_i(x)\leq 1$ for almost every $x \in S$.
Moreover, for every $i$, there is $\delta_i > 0$ such that $A(x) a_i(x)^{\delta_i} \leq 1$ for almost every $x \in S$.
Suppose that for every $t$, the integral
$$I_t:= \int_{x \in S} c(x) \paren{\prod_{i=1}^N  a_i(x)^{1+\eps_i^{(t)}}}^{r_t} <\infty.$$
Then, as $t\to\infty$,
$$I_t \to \int_{A(x)=1} c(x)<\infty.$$
In particular, if $A(x)<1$ almost everywhere, then $I_t \to 0$ as $t\to\infty$.
\end{lemma}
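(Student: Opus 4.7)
The plan is to split the integration domain into $\{A(x) = 1\}$ and $\{A(x) < 1\}$ and analyze each piece separately. On $\{A = 1\}$, the hypothesis $Aa_i^{\delta_i} \leq 1$ forces $a_i^{\delta_i} \leq 1$, hence $a_i \leq 1$; combined with $\prod_i a_i = A = 1$, this yields $a_i = 1$ for every $i$ almost everywhere on this set. The integrand therefore reduces to $c(x)$ on $\{A=1\}$, contributing the constant $\int_{A=1} c$ to $I_t$ at every time $t$; its finiteness follows by applying the same observation to any single $I_{t_0} < \infty$. The whole problem thus reduces to showing that $\int_{A<1} F_t \to 0$, where $F_t(x) := c(x)\bigl[A(x)\prod_i a_i(x)^{\eps_i^{(t)}}\bigr]^{r_t}$ denotes the integrand.

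Pointwise convergence on $\{A<1\}$ is essentially immediate: since $\eps_i^{(t)} \to 0$ and $r_t \to \infty$, the bracket tends to $A(x) < 1$ and raising to the $r_t$ power sends $F_t(x) \to 0$ almost everywhere. The main obstacle, and what the second hypothesis is really there for, is producing a dominating function so that Lebesgue's dominated convergence theorem applies. To this end, combine the upper bound $a_i \leq A^{-1/\delta_i}$ from the hypothesis with the identity $a_i = A/\prod_{j\neq i} a_j$ to derive the matching lower bound $a_i \geq A^{\kappa_i}$, where $\kappa_i := 1 + \sum_{j\neq i}1/\delta_j$. Splitting each $\eps_i^{(t)}$ into positive and negative parts and applying whichever of the two bounds has the favorable sign, one obtains
$$A(x)^{\zeta_t'} \;\leq\; \prod_i a_i(x)^{\eps_i^{(t)}} \;\leq\; A(x)^{-\zeta_t},$$
with nonnegative quantities $\zeta_t,\zeta_t'$ built from $\eps_i^{(t)}$, $\delta_i$, and $\kappa_i$, both of which tend to $0$ as $t \to \infty$. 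This translates into the two-sided control $c(x)A(x)^{(1+\zeta_t')r_t} \leq F_t(x) \leq c(x)A(x)^{(1-\zeta_t)r_t}$ on $\{A<1\}$.

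With these bounds in hand, the dominating function essentially constructs itself from the finiteness of $I_{t_0}$. Fix any single $t_0$ large enough that $\zeta_{t_0}' < 1$, set $\gamma := (1+\zeta_{t_0}')r_{t_0}$, and define $h(x) := c(x)A(x)^\gamma$. The lower bound on $F_{t_0}$ gives $\int_{A<1} h \leq I_{t_0} < \infty$, so $h$ is integrable. For $t$ large enough that $(1-\zeta_t)r_t \geq \gamma$ (possible since the left side tends to $\infty$), the upper bound combined with $A(x) \leq 1$ yields $F_t(x) \leq c(x)A(x)^{(1-\zeta_t)r_t} \leq c(x)A(x)^{\gamma} = h(x)$ on $\{A<1\}$. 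Dominated convergence then gives $\int_{A<1} F_t \to 0$, and combining with the constant contribution from $\{A=1\}$ yields $I_t \to \int_{A=1} c < \infty$, as claimed.
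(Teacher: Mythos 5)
Your proof is correct and follows essentially the same route as the paper's: both arguments sandwich the perturbed product $\prod_i a_i^{1+\eps_i^{(t)}}$ between powers $A^{(1\pm o(1))r_t}$ using the hypothesis $Aa_i^{\delta_i}\leq 1$, and then apply Lebesgue's dominated convergence theorem with a dominating function extracted from the finiteness of a single $I_{t_0}$. The only differences are cosmetic: you split the domain into $\{A=1\}$ (where you neatly observe every $a_i\equiv 1$) and $\{A<1\}$ up front, and you derive the sandwich from the explicit pointwise bounds $A^{\kappa_i}\leq a_i\leq A^{-1/\delta_i}$ with a sign-split of $\eps_i^{(t)}$, whereas the paper reaches the same kind of sandwich (with fixed exponents $(1-\delta')/2$ and $(3+\delta')/2$) through a chain of weighted geometric-mean manipulations.
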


\new{Note that the form of $I_t$ is reminiscent of the expression in Proposition \ref{prop:ratio_of_f_asymptotic} and indeed this is what the lemma will be used for; but this abstraction in terms of these various variables is necessary in order for the proof of the lemma to not become too large and convoluted.}

\begin{shownto}{abb}
\begin{proofsketch}
The $A(x) a_i(x)^{\delta_i} \leq 1$ hypothesis can be strengthened to the following. There is $0<\delta'<1$ such that for all $\abs{y_i}<\delta'$,
for almost every $x \in S$,
$$A(x)^{(3+\delta')/2}\leq \prod_{i=1}^N a_i(x)^{1+y_i} \leq A(x)^{(1-\delta')/2}.$$
So there is $T$ such that for $t\geq T$,
$$\int_{x\in S} c(x) A(x)^{r_t(3+\delta')/2} \leq I_t \leq \int_{x\in S} c(x) A(x)^{r_t(1-\delta')/2}\leq \infty.$$
Use the fact that $I_t<\infty$ and $r_t(1-\delta')>r_T(3+\delta')$ for large $t$ to show that both integrals converge to the desired limit by Lebesgue's dominated convergence theorem.
\end{proofsketch}
\end{shownto}

\begin{shownto}{full}
\begin{proof}
\new{We have $\prod_{i=1}^N a_i(x) = A(x)$. The first part of the proof is to show that if we perturb the exponents of $a_i$'s a little, the result will still be close to $A(x)$; specifically, it will be $A(x)$ raised to some power in a fixed range. To do this, the condition $A(x) a_i(x)^{\delta_i} \leq 1$ will be of help, along with a series of algebraic manipulations.}

All statements in this paragraph are meant to hold for almost every $x \in S$. For any $0\leq y<\delta_i/2$,
$$A(x) a_i(x)^y=A(x)^{1-y/\delta_i}(A(x) a_i(x)^{\delta_i})^{y/\delta_i}\leq A(x)^{1/2},$$
because $1-y/\delta_i>1/2$.
So if $\delta=\min \delta_i/2$, then
for any $0\leq y<\delta$, $A(x) a_i(x)^y\leq A(x)^{1/2}$ for all $i$.
\new{This allows us to perturb one exponent.
To perturb many exponents at once, note that} for any $0 \leq y_i<\delta/N$,
$$\prod_{i=1}^N a_i(x)^{1+y_i}
=\prod_{i=1}^N (A(x) a_i(x)^{Ny_i})^{1/N}\leq A(x)^{1/2}.$$
\new{Now we extend to the case where the perturbations can be negative.}
Let $0<\delta'<1$ be such that
$(1+\delta')/(1-\delta')=1+\delta/N$. Then for any $\abs{y_i}< \delta'$, if $y_j = \min y_i$,
$$\prod_{i=1}^N a_i(x)^{1+y_i}
=\bigg( \prod_{i=1}^N a_i(x)^{(1+y_i)/(1+y_j)}\bigg)^{1+y_j}\leq A(x)^{(1+y_j)/2}\leq A(x)^{(1-\delta')/2},$$
because $1 \leq (1+y_i)/(1+y_j) < 1+\delta/N$.
\new{This gives us an upper bound. Finally, to get a lower bound as well,} because
$\prod_{i=1}^N a_i(x)^{1+y_i} \prod_{i=1}^N a_i(x)^{1-y_i}=A(x)^2$,
for any $\abs{y_i}< \delta'$,
$$A(x)^{(3+\delta')/2}\leq \prod_{i=1}^N a_i(x)^{1+y_i} \leq A(x)^{(1-\delta')/2}.$$
\new{This final inequality is what we want: both upper and lower bounds for when all exponents are perturbed at once, regardless of signs.}

\new{The second part of the proof is to use this bound to establish convergence.}
Let $T$ be such that for every $i$ and $t\geq T$, $\big\lvert\eps_i^{(t)}\big\rvert<\delta'$. Then, for $t\geq T$, \new{by the previous paragraph},
$$\int_{x\in S} c(x) A(x)^{r_t(3+\delta')/2} \leq I_t \leq \int_{x\in S} c(x) A(x)^{r_t(1-\delta')/2}.$$
Note that, \new{as written}, the right-hand expression may be infinite.
\new{The final step is Lebesgue's dominated convergence.}
Let
$$f_t(x)=c(x) A(x)^{r_t(3+\delta')/2},\quad
g_t(x)=c(x) A(x)^{r_t(1-\delta')/2}.$$
Then $f_T \in L^1$ and  $f_t, g_t\to c 1_{A(x)=1}$ pointwise a.e. as $t\to\infty$. For large $t$, $r_t(1-\delta')>r_T(3+\delta')$, so $f_t \leq g_t \leq f_T$ pointwise a.e.
\new{Thus $f_T$ serves as the ``Lebesgue dominant'' in this argument.}
By Lebesgue's dominated convergence theorem, $c1_{A(x)=1} \in L^1$ and $\int_{x} f_t, \int_{x} g_t\to \int_{A(x)=1} c(x)$ as $t\to\infty$,
so $I_t$ converges to this same limit.
\end{proof}
\end{shownto}

\new{To better understand how the proof of Lemma \ref{lem:dominatedbound} works,
one might want to carry it out in the case where $S=\mb{N}$ with counting measure; in that case, the lemma is a statement about convergence of infinite series. The proof in this special case does not require Lebesgue's dominated convergence, but still requires some thought on how to estimate the tails of the series. The authors of this paper proved the lemma for this special case first before generalizing it to the present form; the final step is the only part of the argument that needs nontrivial change.}

\new{Once we have Lemma \ref{lem:dominatedbound}, Proposition \ref{prop:lessthanapoint} follows immediately, although it may be a little tedious to check that all conditions are satisfied.}

\begin{prop}
\label{prop:lessthanapoint}
Let $A\subseteq \Theta$ be a measurable set. Let $\theta' \in \Theta$ be such that $L(\theta')>0$ and $L(\theta) \leq L(\theta')$ for almost every $\theta \in A$. Moreover, for each $i$, there is $\delta_i>0$ such that $L(\theta)g_i^{(0)}(\theta)^{\delta_i} \leq L(\theta')g_i^{(0)}(\theta')^{\delta_i}$ for almost every $\theta \in A$.
Then there is $0\leq M<\infty$ such that, as $t\to\infty$,
$$\frac{\int_{\theta\in A}f_i^{(t)}(\theta)}{f_i^{(t)}(\theta')} \to M.$$
If the set $\{\theta \in A: L(\theta) = L(\theta') \}$ has measure zero, then $M=0$.
\end{prop}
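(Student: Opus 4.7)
The plan is to apply Proposition \ref{prop:ratio_of_f_asymptotic} pointwise inside the integral and then invoke Lemma \ref{lem:dominatedbound}. Since $L(\theta') > 0$, the formula of Proposition \ref{prop:ratio_of_f_asymptotic} is valid for every $\theta \in A$, and integrating yields
$$\int_{\theta \in A}\frac{f_i^{(t)}(\theta)}{f_i^{(t)}(\theta')}\, d\theta = \int_{\theta \in A} \frac{f_*(\theta)}{f_*(\theta')} \left[\prod_{j=1}^N a_j(\theta)^{1+\tilde{\eps}_j^{(t)}}\right]^{r_t} d\theta,$$
where I set $a_j(\theta) = (g_j^{(0)}(\theta)/g_j^{(0)}(\theta'))^{v_j}$, $\tilde{\eps}_j^{(t)} = \eps_{i,j}^{(t)}/v_j$, and $r_t = w_i r^t/(w^\top v)$. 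This matches the integral $I_t$ from Lemma \ref{lem:dominatedbound} exactly, with $c(\theta) = f_*(\theta)/f_*(\theta')$ and $S = A$.

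Next I would verify the hypotheses of the lemma. We have $r_t \to \infty$ because $r > 1$ by Lemma \ref{lem:spectral_radius_greater_than_one}, and $\tilde{\eps}_j^{(t)} \to 0$ since $\eps_{i,j}^{(t)} \to 0$ and $v_j > 0$. The product $A(\theta) := \prod_{j=1}^N a_j(\theta) = L(\theta)/L(\theta')$ is $\leq 1$ for almost every $\theta \in A$ by the first hypothesis of the proposition. The key point is the perturbation condition, which is exactly where the hypothesis of the proposition is used: taking $\delta_j' := \delta_j/v_j > 0$,
$$A(\theta)\, a_j(\theta)^{\delta_j'} = \frac{L(\theta)\, g_j^{(0)}(\theta)^{\delta_j}}{L(\theta')\, g_j^{(0)}(\theta')^{\delta_j}} \leq 1$$
almost everywhere on $A$, directly from the perturbation assumption. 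Finally, $I_t < \infty$ because $I_t = f_i^{(t)}(\theta')^{-1} \int_A f_i^{(t)}(\theta)\, d\theta \leq f_i^{(t)}(\theta')^{-1}$, and $f_i^{(t)}(\theta') > 0$ by Lemma \ref{lem:Lzeroornot} (since $L(\theta') > 0$).

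Lemma \ref{lem:dominatedbound} then delivers the convergence $I_t \to M$ with
$$M = \int_{\{\theta \in A \,:\, L(\theta) = L(\theta')\}} \frac{f_*(\theta)}{f_*(\theta')}\, d\theta \in [0,\infty).$$
If the set $\{\theta \in A : L(\theta) = L(\theta')\}$ has measure zero, the integral defining $M$ vanishes, giving $M = 0$ as claimed. The main obstacle is organizational rather than technical: one must carefully bookkeep the translation from the perturbation exponents $\delta_j$ available in the proposition's hypothesis into the exponents $\delta_j'$ required by Lemma \ref{lem:dominatedbound} (by absorbing the $v_j$ factor), and be sure that the integrability condition $I_t < \infty$ is established independently, without circular reliance on the limit being proved.
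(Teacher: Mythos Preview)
Your proposal is correct and follows essentially the same approach as the paper: apply Proposition \ref{prop:ratio_of_f_asymptotic} to rewrite the ratio as an integral in the form of Lemma \ref{lem:dominatedbound}, with exactly the identifications $S=A$, $c(\theta)=f_*(\theta)/f_*(\theta')$, $a_j(\theta)=(g_j^{(0)}(\theta)/g_j^{(0)}(\theta'))^{v_j}$, $\eps_j^{(t)}=\eps_{i,j}^{(t)}/v_j$, and $r_t=w_ir^t/(w^\top v)$. Your verification of the hypotheses, including the perturbation condition via $\delta_j'=\delta_j/v_j$ and the finiteness of $I_t$ via Lemma \ref{lem:Lzeroornot}, is in fact more explicit than what the paper writes out.
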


\begin{shownto}{abb}
\begin{proofsketch}
Apply Lemma \ref{lem:dominatedbound} to the asymptotic formula in Proposition \ref{prop:ratio_of_f_asymptotic}.
\end{proofsketch}
\end{shownto}

\begin{shownto}{full}
\begin{proof}
\new{All we have to do is to directly apply Lemma \ref{lem:dominatedbound}.}
Apply Proposition \ref{prop:ratio_of_f_asymptotic} to get the expression into the form
$$\int_{\theta \in A} \frac{f_*(\theta)}{f_*(\theta')}
\bracket{
\prod_{j=1}^{N}
\Bigg( \frac{g_j^{(0)}(\theta)}{g_j^{(0)}(\theta')}\Bigg)^{v_j+\eps_{i,j}^{(t)}}}
^{w_i r^t/w^\top v}.$$
Then apply Lemma \ref{lem:dominatedbound} with $S=A$, $$c(\theta) = \frac{f_*(\theta)}{f_*(\theta')},\quad a_j(\theta) = \Bigg( \frac{g_j^{(0)}(\theta)}{g_j^{(0)}(\theta')}\Bigg)^{v_j}, \quad \eps_j^{(t)}=\frac{\eps_{i,j}^{(t)}}{v_j}, \quad r_t = \frac{w_i r^t}{w^\top v}.$$
The hypotheses of the lemma are readily verified using the hypotheses of the proposition and Lemma \ref{lem:spectral_radius_greater_than_one}.
\end{proof}
\end{shownto}

\new{The previous proposition completes the ``$\theta'$ dominates $A$'' part of the program. To carry out the ``$B$ dominates $\theta'$'' part, we prove the following Lemma  \ref{lem:reversebound} and Proposition \ref{prop:morethanapoint}, which can be thought of as analogous versions of Lemma \ref{lem:dominatedbound} and Proposition \ref{prop:lessthanapoint} but 
 with reversed inequalities and slight changes in details.
 In particular, the part that changes is the final step of the lemma where here we use Fatou's lemma to conclude instead.}

\begin{lemma}
\label{lem:reversebound}
Let $S,a_i,c,r_t,\eps_i^{(t)}, A$ and $I_t$ be as in Lemma \ref{lem:dominatedbound} with the following changes.
The inequalities involving $A$ are reversed:
$A(x) \geq 1$ for almost every $x\in S$, and for every $i$, there is $\delta_i > 0$ such that $A(x) a_i(x)^{\delta_i} \geq 1$ for almost every $x \in S$.
Moreover, the integral $I_t$ may be infinite.
If $A(x)=1$ almost everywhere, then as $t\to\infty$,
$$I_t \to \int_{x \in S} c(x),$$
where the limit can be infinite. Otherwise, $I_t \to \infty$ as $t\to\infty$.
\end{lemma}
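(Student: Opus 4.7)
The plan is to mirror the proof of Lemma \ref{lem:dominatedbound}, reversing the direction of the key inequalities at each algebraic step. First I would redo the chain of manipulations to derive, for all $\abs{y_i} < \delta'$ with $\delta' \in (0,1)$ chosen small in terms of $\min_i \delta_i$, the reversed sandwich
$$A(x)^{(1-\delta')/2} \leq \prod_{i=1}^N a_i(x)^{1+y_i} \leq A(x)^{(3+\delta')/2}$$
for almost every $x \in S$. The identity $A a_i^y = A^{1-y/\delta_i}(A a_i^{\delta_i})^{y/\delta_i}$ still holds, and because $A \geq 1$ and $A a_i^{\delta_i} \geq 1$, both factors are at least $1$, so raising $A$ to a larger positive exponent yields a larger value; this flips every inequality in the original derivation. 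The scaling step to a product over all $i$ and the trick of multiplying the $(1+y_i)$ and $(1-y_i)$ products to handle negative $y_i$ both carry over unchanged. This yields, for $t \geq T$ large enough so that $\abs{\eps_i^{(t)}} < \delta'$,
$$\int_{x \in S} c(x) A(x)^{r_t(1-\delta')/2} \leq I_t \leq \int_{x \in S} c(x) A(x)^{r_t(3+\delta')/2}.$$

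If $A = 1$ almost everywhere, the hypothesis $A a_i^{\delta_i} \geq 1$ gives $a_i \geq 1$ a.e., and $\prod_i a_i = 1$ with each $a_i \geq 1$ forces $a_i = 1$ a.e.\ Hence $\prod_i a_i^{1+y_i} = 1$ a.e.\ and $I_t = \int_S c$ for every $t$, yielding the claimed limit directly (finite or infinite). In the remaining case, $\{A > 1\}$ has positive measure, and I would replace Lebesgue's dominated convergence with Fatou's lemma, which is the natural tool when no integrable dominant is available and the limit is allowed to be $+\infty$. On $\{A > 1\}$, $c(x) A(x)^{r_t(1-\delta')/2} \to \infty$ since $r_t \to \infty$ and $A(x) > 1$; on $\{A = 1\}$ the quantity stays at $c(x)$. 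Fatou then gives
$$\liminf_{t \to \infty} I_t \geq \liminf_{t \to \infty} \int_S c \cdot A^{r_t(1-\delta')/2} \geq \int_S \liminf_{t \to \infty} c(x) A(x)^{r_t(1-\delta')/2} = \infty,$$
using that $c$ is positive on a positive-measure subset of $\{A > 1\}$, as is the case in the intended applications (e.g.\ Proposition \ref{prop:morethanapoint}, where $c$ is a ratio of strictly positive priors).

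The sandwich step is a mechanical flip of the original argument, so the main obstacle is the convergence step. Because the integrand may diverge as $t \to \infty$, dominated convergence is unavailable and one must instead extract divergence from the lower bound alone; this is exactly what Fatou's lemma provides, in combination with the monotone-like growth of $A^{r_t(1-\delta')/2}$ on $\{A > 1\}$ guaranteed by $A \geq 1$ and $r_t \to \infty$.
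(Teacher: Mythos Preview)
Your proposal is correct and follows essentially the same route as the paper: reverse the sandwich inequalities from Lemma~\ref{lem:dominatedbound} by the same algebra, then use Fatou's lemma on the lower bound in place of dominated convergence. Your treatment is slightly more explicit than the paper's in two places---you spell out why $A=1$ a.e.\ forces each $a_i=1$ a.e., and you flag the implicit need for $c>0$ on a positive-measure subset of $\{A>1\}$ (which the paper silently uses and which indeed holds in the application, since $c=f_*/f_*(\theta')$ is strictly positive)---but the overall argument is the same.
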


\begin{shownto}{abb}
\begin{proofsketch}
Similarly to Lemma \ref{lem:dominatedbound}, there are $0<\delta'<1$ and $T$ such that, for $t\geq T$,
$$\int_{x\in S} c(x) A(x)^{r_t(3+\delta')/2} \geq I_t \geq \int_{x\in S} c(x) A(x)^{r_t(1-\delta')/2}.$$
Apply Fatou's lemma to the right-hand integrand.
\end{proofsketch}
\end{shownto}

\begin{shownto}{full}
\begin{proof}
\new{The first part of the proof stays the same.}
By an analogous argument to the one in Lemma \ref{lem:dominatedbound} with reversed inequalities, there are $0<\delta'<1$ and $T$ such that, for $t\geq T$,
$$\int_{x\in S} c(x) A(x)^{r_t(3+\delta')/2} \geq I_t \geq \int_{x\in S} c(x) A(x)^{r_t(1-\delta')/2}.$$
\new{To carry out the second part of the proof, use the following argument.}
The result is immediate if $A(x)=1$ almost everywhere. Otherwise, the right-hand integrand converges pointwise to a function that is infinite on a set of positive measure. By Fatou's lemma, the right-hand integral converges to infinity. Then $I_t \to \infty$ as $t \to\infty$.
\end{proof}
\end{shownto}

One can prove the following proposition by applying Lemma \ref{lem:reversebound} in the same way that one applies Lemma \ref{lem:dominatedbound} to prove Proposition \ref{prop:lessthanapoint}.
\new{So we omit the proof of the following proposition.}

\begin{prop}
\label{prop:morethanapoint}
Let $B\subseteq \Theta$ be a measurable set. Let $\theta' \in \Theta$ be such that $L(\theta')>0$ and $L(\theta) \geq L(\theta')$ for almost every $\theta \in B$. Moreover, for each $i$, there is $\delta_i>0$ such that $L(\theta)g_i^{(0)}(\theta)^{\delta_i} \geq L(\theta')g_i^{(0)}(\theta')^{\delta_i}$ for almost every $\theta \in B$.
Then there is $0\leq M\leq \infty$ such that, as $t\to\infty$,
$$\frac{\int_{\theta\in B}f_i^{(t)}(\theta)}{f_i^{(t)}(\theta')} \to M.$$
If $B$ has positive measure, then $M>0$. If $\{\theta \in B: L(\theta) > L(\theta') \}$ has positive measure, then $M=\infty$.
\end{prop}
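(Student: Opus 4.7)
The plan is to mirror the proof of Proposition \ref{prop:lessthanapoint}, replacing Lemma \ref{lem:dominatedbound} with Lemma \ref{lem:reversebound}. First, I would apply Proposition \ref{prop:ratio_of_f_asymptotic} with $\theta_1$ ranging over $B$ and $\theta_2=\theta'$, and then integrate over $B$ to obtain
\[
\frac{\int_{\theta\in B}f_i^{(t)}(\theta)}{f_i^{(t)}(\theta')}
= \int_{\theta \in B} \frac{f_*(\theta)}{f_*(\theta')}
\bracket{\prod_{j=1}^{N}\paren{\frac{g_j^{(0)}(\theta)}{g_j^{(0)}(\theta')}}^{v_j+\eps_{i,j}^{(t)}}}^{w_i r^t/w^\top v}.
\]
This has exactly the form of the integral $I_t$ appearing in Lemma \ref{lem:reversebound} under the substitutions $S = B$, $c(\theta) = f_*(\theta)/f_*(\theta')$, $a_j(\theta) = (g_j^{(0)}(\theta)/g_j^{(0)}(\theta'))^{v_j}$, $\eps_j^{(t)} = \eps_{i,j}^{(t)}/v_j$, and $r_t = w_i r^t/w^\top v$.

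Next, I would verify the hypotheses of Lemma \ref{lem:reversebound}. The product $A(\theta) = \prod_j a_j(\theta) = L(\theta)/L(\theta')$, so the condition $A(\theta)\geq 1$ almost everywhere on $B$ is precisely the assumption $L(\theta)\geq L(\theta')$ a.e.\ on $B$. The perturbation condition $A(\theta) a_j(\theta)^{\delta_j'}\geq 1$ translates to $L(\theta) g_j^{(0)}(\theta)^{v_j \delta_j'}\geq L(\theta') g_j^{(0)}(\theta')^{v_j \delta_j'}$ almost everywhere; choosing $\delta_j' = \delta_j/v_j$ makes this identical to the hypothesis of the proposition. The required divergence $r_t\to\infty$ follows from $r=\rho(A)>1$ (Lemma \ref{lem:spectral_radius_greater_than_one}), and $\eps_j^{(t)}\to 0$ follows from Proposition \ref{prop:ratio_of_f_asymptotic}. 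The nominal requirement $\eps_j^{(t)}\geq -1$ only needs to hold for large $t$, which is automatic since $\eps_j^{(t)}\to 0$.

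Lemma \ref{lem:reversebound} then directly yields the existence of $M\in[0,\infty]$ as the limit. If $B$ has positive measure, then $c(\theta)=f_*(\theta)/f_*(\theta')$ is strictly positive everywhere on $B$, so in the case $A=1$ a.e.\ on $B$ the limit $\int_B c$ is strictly positive, and otherwise the lemma gives $M=\infty$; either way $M>0$. If instead $\set{\theta\in B: L(\theta)>L(\theta')}$ has positive measure, then $A(\theta)=1$ fails on a set of positive measure, so the second alternative of Lemma \ref{lem:reversebound} applies and $M=\infty$. The main obstacle is purely bookkeeping: correctly rescaling the perturbation parameter by $v_j$ and noting that the integral $I_t$ here is allowed to be infinite (which is explicitly permitted in Lemma \ref{lem:reversebound}, unlike in Lemma \ref{lem:dominatedbound}). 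All substantive analytic content, in particular the Fatou-based final step, is already packaged inside Lemma \ref{lem:reversebound} itself.
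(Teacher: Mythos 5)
Your proposal is correct and is precisely the argument the paper intends: the paper omits this proof, stating only that one applies Lemma \ref{lem:reversebound} in the same way that Lemma \ref{lem:dominatedbound} is applied in Proposition \ref{prop:lessthanapoint}, and your write-up fills in exactly those substitutions and verifications. The rescaling $\delta_j' = \delta_j/v_j$ and the positivity of $\int_B f_*(\theta)/f_*(\theta')$ (from the standing assumption $f_*>0$) are handled correctly.
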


\new{From here it is easy to combine Propositions \ref{prop:lessthanapoint} and \ref{prop:morethanapoint} into the Master Theorem. Recall that the various technical hypotheses of the theorem should be intuitively read as ``$\theta'$ dominates $A$, and $B$ dominates $\theta'$ robustly in the sense of the weighted likelihood function,'' as explained earlier in this section.}

\begin{theorem}[Master Theorem]
\label{thm:master}
Let $A,B\subseteq \Theta$ be measurable sets, where $B$ has positive measure. Let $\theta' \in \Theta$ be such that $L(\theta')>0$, $L(\theta) \leq L(\theta')$ for almost every $\theta \in A$, and $L(\theta) \geq L(\theta')$ for almost every $\theta \in B$. Moreover, for each $i$, there are $\delta_i,\delta'_i>0$ such that $L(\theta)g_i^{(0)}(\theta)^{\delta_i} \leq L(\theta')g_i^{(0)}(\theta')^{\delta_i}$ for almost every $\theta \in A$ and $L(\theta)g_i^{(0)}(\theta)^{\delta'_i} \geq L(\theta')g_i^{(0)}(\theta')^{\delta'_i}$ for almost every $\theta \in B$.
Suppose that either $\{\theta \in A: L(\theta) = L(\theta') \}$ has measure zero or $\{\theta \in B: L(\theta) > L(\theta') \}$ has positive measure.
Then
$$\int_{\theta\in A}f_i^{(t)}(\theta) \to 0.$$
\end{theorem}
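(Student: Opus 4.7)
The plan is to directly combine Propositions \ref{prop:lessthanapoint} and \ref{prop:morethanapoint}, using the set $B$ and the universal bound $\int_B f_i^{(t)} \leq 1$ to control $f_i^{(t)}(\theta')$, so that the ``$\theta'$ dominates $A$'' estimate can be converted from a ratio statement into an absolute one.

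First, I would apply Proposition \ref{prop:lessthanapoint} to the set $A$ and the point $\theta'$: its hypotheses on $L$ and on the perturbed inequality with exponents $\delta_i$ are exactly assumed in the Master Theorem, so it yields some $0 \leq M_A < \infty$ with
$$\frac{\int_{\theta \in A} f_i^{(t)}(\theta)}{f_i^{(t)}(\theta')} \to M_A.$$
Next, I would apply Proposition \ref{prop:morethanapoint} to $B$ and $\theta'$ with the perturbation exponents $\delta_i'$, which (since $B$ has positive measure) yields some $0 < M_B \leq \infty$ with
$$\frac{\int_{\theta \in B} f_i^{(t)}(\theta)}{f_i^{(t)}(\theta')} \to M_B.$$

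Now I would leverage the fact that $f_i^{(t)}$ is a probability density, so $\int_B f_i^{(t)} \leq 1$. This is the linchpin that turns the two ratio-limits into a bound on $f_i^{(t)}(\theta')$. Split into two cases according to which of the two alternatives in the hypothesis holds.

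In the case $\{\theta \in B : L(\theta) > L(\theta')\}$ has positive measure, Proposition \ref{prop:morethanapoint} gives $M_B = \infty$. Combined with $\int_B f_i^{(t)} \leq 1$, this forces $f_i^{(t)}(\theta') \to 0$. Since $M_A < \infty$, writing $\int_A f_i^{(t)} = \bigl(\int_A f_i^{(t)}/f_i^{(t)}(\theta')\bigr) \cdot f_i^{(t)}(\theta')$ then gives $\int_A f_i^{(t)} \to 0$. In the other case, $\{\theta \in A : L(\theta) = L(\theta')\}$ has measure zero, so Proposition \ref{prop:lessthanapoint} improves to $M_A = 0$. Here the finite $M_B > 0$ together with $\int_B f_i^{(t)} \leq 1$ only gives that $f_i^{(t)}(\theta')$ stays bounded (say by $2/M_B$ for large $t$), but this suffices: the same factorization yields $\int_A f_i^{(t)} \to 0$ because the ratio tends to zero and the density at $\theta'$ stays bounded.

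The only mild subtlety, and the one place I would be careful, is the second case: one must actually extract boundedness of $f_i^{(t)}(\theta')$ from the convergence to a positive finite $M_B$, rather than merely convergence. This is immediate from the definition of convergence (for $t$ large, the ratio exceeds $M_B/2$, giving $f_i^{(t)}(\theta') \leq 2 \int_B f_i^{(t)} / M_B \leq 2/M_B$), but it is the single spot where the proof draws on something beyond a formal application of the two propositions. Everything else is bookkeeping.
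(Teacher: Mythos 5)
Your proposal is correct and follows essentially the same route as the paper: both combine Propositions \ref{prop:lessthanapoint} and \ref{prop:morethanapoint} with the trivial bound $\int_{\theta\in B}f_i^{(t)}(\theta)\leq 1$ to convert the ratio limits into the absolute statement. The paper merely packages your two cases into one step by forming the single ratio $\int_A f_i^{(t)}/\int_B f_i^{(t)} \to M_1/M_2$ and observing that the final hypothesis forces $M_1/M_2=0$ either way, so your explicit case split and the boundedness remark on $f_i^{(t)}(\theta')$ are just a more verbose rendering of the same argument.
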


\begin{shownto}{abb}
\begin{proofsketch}
Propositions \ref{prop:lessthanapoint} and \ref{prop:morethanapoint} imply that
$\int_{\theta\in A}f_i^{(t)}(\theta)\big/\int_{\theta\in B}f_i^{(t)}(\theta) \to 0.$
\end{proofsketch}
\end{shownto}

\begin{shownto}{full}
\begin{proof}
By Propositions \ref{prop:lessthanapoint} and \ref{prop:morethanapoint},
$$\frac{\int_{\theta\in A}f_i^{(t)}(\theta)}{\int_{\theta\in B}f_i^{(t)}(\theta)} \to \frac{M_1}{M_2},$$
where $M_1$ and $M_2$ are limits in the two respective propositions. Because $B$ has positive measure, $M_2>0$. Our final hypothesis implies that either $M_1=0$ or $M_2=\infty$, so $M_1/M_2=0$. Then our result follows from $\int_{\theta\in B}f_i^{(t)}(\theta) \leq 1$.
\end{proof}
\end{shownto}

\section{Convergence of Beliefs}
\label{sec:converge}

We apply the Master Theorem \ref{thm:master} to show convergence of beliefs of agents toward a point distribution at the maximizer or near-maximizer of $L$.
We establish separate results for the finite case (Sec. \ref{subsec:finite}), the infinite discrete case (Sec. \ref{subsec:infinite}) and the $\R^k$ case (Sec. \ref{subsec:Rk}). Our results in the $\R^k$ case are the basis of Theorem \ref{thm:mainsimplified}. Results in this section will be used in applications in Section \ref{sec:apps}.

\new{We first explain the general idea of this section. Recall that to use the Master Theorem, for a given set $A$, we need to find a point $\theta'$ that dominates $A$ and a set $B$ that dominates $\theta'$.
What this section aims to do is the following.
First, for a given set $A$, the choice of $\theta'$ and $B$ will be automated and abstracted away from the user.
Second, even the choice of $A$ will be abstracted away, leaving only the conclusion that ``beliefs converge to a point distribution,'' at least in the case of a unique maximizer. The general idea is not hard, but the details can be quite formidable.}

\subsection{Preliminaries}

\new{First we need some preliminaries. This subsection presents the technical details in carrying out the ``automatically choosing $\theta'$ and $B$'' part.} Let $L_{\esssup}=\esssup_{\theta \in \Theta} L(\theta) \leq \infty$, that is, $L_{\esssup}$ is the smallest $M$ for which $\{\theta \in \Theta: L(\theta)>M\}$ has measure zero.
\new{Note that} we can replace ``$\esssup$'' by ``$\sup$'' if $\Theta$ is infinite discrete or ``$\max$'' if $\Theta$ is finite.
The following proposition gives sufficient conditions for beliefs to concentrate in places where $L$ is near $L_{\esssup}$.

\begin{prop}
\label{prop:nearessup}
Let $0\leq M < L_{\esssup}$ and let $S=\{\theta \in \Theta: L(\theta)>M\}$.
Assume that for every $i$, $g_i^{(0)}$ is bounded a.e. on $\Theta \setminus S$.
Suppose that one of the following holds:
\begin{enumerate}[wide]
    \item $L$ is not constant a.e. on $S$ and for every $i$, $g_i^{(0)}$  is bounded below by a positive number a.e. on $S$;
    \item there is a subset $T\subseteq S$ of positive measure such that, for every $i$, $g_i^{(0)}$ is constant on $T$.
\end{enumerate}
Then, as $t\to\infty$,
$$\int_{\theta \in S} f_i^{(t)}(\theta) \to 1.$$
\end{prop}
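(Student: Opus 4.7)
The plan is to apply the Master Theorem (Theorem~\ref{thm:master}) with $A = \Theta \setminus S$. Since $\int_{\theta \in \Theta} f_i^{(t)}(\theta) = 1$, once we establish $\int_{\theta \in A} f_i^{(t)}(\theta) \to 0$, the desired conclusion $\int_{\theta \in S} f_i^{(t)}(\theta) \to 1$ follows immediately. So the task reduces to producing a point $\theta' \in \Theta$ and a measurable set $B$ of positive measure fulfilling the hypotheses of the Master Theorem.

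The ``$A$-side'' is relatively easy. By definition of $S$, $L(\theta) \leq M$ a.e.\ on $A$, and by hypothesis each $g_i^{(0)}$ is bounded a.e.\ on $A$ by some constant $C_i$. Hence, for any $\theta' \in S$ (which forces $L(\theta') > M$), the inequality $L(\theta) \leq L(\theta')$ holds a.e.\ on $A$, and the robustness condition $L(\theta) g_i^{(0)}(\theta)^{\delta_i} \leq M C_i^{\delta_i} \leq L(\theta') g_i^{(0)}(\theta')^{\delta_i}$ can be enforced by choosing $\delta_i > 0$ small enough, since $M < L(\theta')$ and $g_i^{(0)}(\theta') > 0$ (this is guaranteed as $L(\theta') > 0$). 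Furthermore, $\{\theta \in A : L(\theta) = L(\theta')\}$ is empty for such $\theta'$, so the disjunctive condition at the end of the Master Theorem will be automatically satisfied by its first clause. The main obstacle is to design $\theta'$ and $B \subseteq S$ satisfying the reversed inequalities on the ``$B$-side''; this is where the two cases diverge.

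In case (2), this is trivial: pick any $\theta' \in T$ and set $B = T$. Since $L$ and each $g_i^{(0)}$ are constant on $T$, the required inequalities $L(\theta) \geq L(\theta')$ and $L(\theta) g_i^{(0)}(\theta)^{\delta_i'} \geq L(\theta') g_i^{(0)}(\theta')^{\delta_i'}$ hold as equalities on $B$. In case (1), we exploit the non-constancy of $L$ on $S$ to pick $M < c_1 < c_2$ so that both $S_1 := S \cap \{L \leq c_1\}$ and $S_2 := S \cap \{L \geq c_2\}$ have positive measure. Take $B = S_2$, and pick $\theta' \in S_1$ at a point where each $g_i^{(0)}(\theta') \geq m_i > 0$; such a point exists because $S_1$ has positive measure and intersects the full-measure subset of $S$ on which all $g_i^{(0)}$ are simultaneously bounded below by positive numbers $m_i$. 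Then $L(\theta) \geq c_2 > c_1 \geq L(\theta')$ on $B$, and the uniform lower bound $g_i^{(0)} \geq m_i$ on $S$ gives $L(\theta) g_i^{(0)}(\theta)^{\delta_i'} \geq c_2 m_i^{\delta_i'}$ on $B$, which dominates $L(\theta') g_i^{(0)}(\theta')^{\delta_i'}$ for $\delta_i'$ small enough, since $c_2 > L(\theta')$. By construction $\{\theta \in B : L(\theta) > L(\theta')\} = B$ has positive measure, validating the Master Theorem's final hypothesis through its second clause. The crux of this case, and the most subtle step of the argument, is the extraction of the separating values $c_1 < c_2$, which relies on exactly the non-constancy assumption stated in hypothesis (1).
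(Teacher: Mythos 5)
Your proposal is correct and follows essentially the same route as the paper: apply the Master Theorem with $A=\Theta\setminus S$, take $\theta'\in T$ and $B=T$ in case (2), and in case (1) use non-constancy of $L$ on $S$ to separate a low-$L$ point $\theta'$ from a positive-measure set $B$ where $L$ is strictly larger, with the boundedness hypotheses supplying the robustness exponents $\delta_i,\delta_i'$. The only (cosmetic) difference is in case (1): the paper splits on whether $\inf_S L$ is attained and uses continuity/subadditivity of measure to produce $B$, whereas you extract two thresholds $c_1<c_2$ with both $S\cap\{L\le c_1\}$ and $S\cap\{L\ge c_2\}$ of positive measure — an equivalent and arguably cleaner packaging of the same idea.
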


\new{To better understand the above proposition, it will be useful to understand what roles the various hypotheses play. The ``bounded'' and ``bounded below by a positive number'' hypotheses ensure that the gaps between $A$ and $\theta'$ and $\theta'$ and $B$ are robust as needed in the Master Theorem \ref{thm:master}. Now we note that $S$ dominates $\Theta \setminus S$; however, we still need to pick $\theta'$ and $B$ out of $S$. There are two ways to do this depending on the situation. First, think of $\Theta = \R$. In this case, if $L$ is constant on $S$, it will be hard to pick $B$ that dominates $\theta'$ robustly; but we can do this otherwise. This is condition (1) in the proposition. Second, think of $\Theta = \mb{N}$.
In this case, even if $S$ is a single point (and so $L$ is constant on it), it will be easy to pick $\theta'$ and $B$; but this case does not satisfy (1). This is where condition (2) comes in.
Putting all these details together gives us the proposition.}

\begin{shownto}{abb}
\begin{proofsketch}
Apply the Master Theorem \ref{thm:master}. Let $A=\Theta \setminus S$.
In the second case, pick any $\theta' \in T$ and let $B=T$.
Now consider the first case. If there is $\theta_1 \in S$ that minimizes $L$ in $S$,
there is $M'>L(\theta_1)$ such that $U=\{\theta \in \Theta: L(\theta)> M'\}$ has positive measure. Let $\theta'=\theta_1$ and $B=U$.
Otherwise, there are $\theta_1,\theta_2 \in S$ such that $S_1=\{\theta \in \Theta: L(\theta)\geq L(\theta_1)\}$ has positive measure and $L(\theta_2)<L(\theta_1)$. Let $\theta'=\theta_2$  and $B=S_1$.
\end{proofsketch}
\end{shownto}

\begin{shownto}{full}
\begin{proof}
Note that $S$ has positive measure \new{by the definition of $L_{\esssup}$}. Apply the Master Theorem \ref{thm:master} \new{with} $A=\Theta \setminus S$.
In the second case, \new{we can} pick any $\theta' \in T$ and let $B=T$.
Now consider the first case. \new{First suppose that} there is $\theta_1 \in S$ that minimizes $L$ in $S$. In this case we pick $\theta'=\theta_1$; \new{now we need to pick $B$}.
Because $L$ is not constant a.e. on $S$, $\{\theta \in \Theta: L(\theta)> L(\theta_1)\}$ has positive measure.
By continuity of measure from below, there is $M'>L(\theta_1)$ such that $U=\{\theta \in \Theta: L(\theta)> M'\}$ has positive measure. Let $B=U$.
\new{Finally, suppose that there is no minimizer of $L$ on $S$.} Pick a sequence $\theta_i \in S$ such that $L(\theta_i) \to \inf_{\theta\in S} L(\theta)$. Define $S_i=\{\theta \in \Theta: L(\theta)\geq L(\theta_i)\}$.
Then $S=\cup_i S_i$.
By subadditivity of measure, one of $S_i$ has positive measure. \new{Then for this $i$,} pick $j$ such that $L(\theta_j)<L(\theta_i)$. Let $\theta'=\theta_j$  and $B=S_i$.

The result \new{now} follows by applying the Master Theorem to $A$, $B$ and $\theta'$ \new{just chosen}.
\new{To show} the existence of $\delta_i>0$, use the fact that $L(\theta)/L(\theta')\leq \alpha<1$ for all $\theta \in A$ and $g_i^{(0)}$ is bounded a.e. on $A$. The existence of $\delta'_i>0$ is similar; \new{in case (1) we use the fact that $L(\theta)/L(\theta')\geq \beta > 1$ for all $\theta \in B$ and $g_i^{(0)}$ is bounded below a positive number a.e. on $S$.}
\end{proof}
\end{shownto}

\new{We now present the following example of a social network}, which will be used in Examples \ref{ex:finite} and \ref{ex:countercountable}.
\new{What this example is trying achieve is a network consisting of \emph{two} people, but where the strength of information flow on each edge can vary, i.e. a ``weighted'' social network.
We can generalize our model to allow for weighted edges, but we choose not to: it turns out that, at least in this simple case, the weighted edges can be simulated by our unweighted network by using teams of people to represent one person in the weighted case. The details are below.}

\begin{example}
\label{ex:abgraph}
For $a,b\geq 1$, \new{we define} the \emph{$(a,b)$-graph} as follows. Let $c=\max(a,b)$. The graph has $2c$ nodes, $x_1,\dots,x_c$ and $y_1,\dots,y_c$. There is an edge from $x_i$ to $x_j$ and from $y_i$ to $y_j$ if and only if $j-i \equiv 0,1,\dots,a-1 \pmod{c}$; there is an edge from $x_i$ to $y_j$ and from $y_i$ to $x_j$ if and only if $j-i \equiv 0,1,\dots,b-1 \pmod{c}$. \new{This graph is supposed to represent two people, $x$ and $y$, with self-loops at $x$ and $y$ each of weight $a$ and an undirected edge from $x$ to $y$ of weight $b$.} Let the $x_i$'s and $y_i$'s have equal initial beliefs $f_x^{(0)}$ and $f_y^{(0)}$, respectively. Then \new{it is easy to see that} the $x_i$'s and $y_i$'s have equal beliefs throughout, so we can \new{just} subscript with $x$ and $y$.
\begin{shownto}{full}
The weighted likelihood function is \new{then}
$L(\theta)=g_x^{(0)}(\theta)^{1/2}g_y^{(0)}(\theta)^{1/2}$.
By Proposition \ref{prop:update_t_g}, \new{we can compute}
$$f_x^{(t)}
=\frac{f_{\theta^*} \big(g_x^{(0)}g_y^{(0)}\big)^{(a+b)^t/2} \big( g_x^{(0)}/g_y^{(0)}\big)^{(a-b)^t/2}
}{\int_{\Theta} f_{\theta^*} \big(g_x^{(0)}g_y^{(0)}\big)^{(a+b)^t/2} \big(g_x^{(0)}/g_y^{(0)}\big)^{(a-b)^t/2}
}.$$
\end{shownto}
\end{example} 

\subsection{Finite Case}
\label{subsec:finite}

\new{We now apply Proposition \ref{prop:nearessup}
to show convergence in the case where $\Theta$ is finite.}
The theorem \new{in this case is quite easy and we can appeal to either case (2) of} Proposition \ref{prop:nearessup} or \new{just} Proposition \ref{prop:maxdominate}.

\begin{theorem}
\label{thm:mainfinite}
Assume that $\Theta$ is finite and the initial condition is feasible. Let $\theta_{\max}$ be the set of points that maximize $L$. Then as $t\to\infty$,
$$\sum_{\theta \in \theta_{\max}} f_i^{(t)}(\theta) \to 1.$$
In particular, if $\theta_{\max}$ consists of a single point, then $f_i^{(t)}$ converges to the point distribution at that point.
\end{theorem}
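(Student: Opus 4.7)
The plan is to directly apply Proposition \ref{prop:maxdominate} in a straightforward way, exploiting the fact that when $\Theta$ is finite, total probability is a finite sum equal to $1$, so it suffices to show that mass on each non-maximizer goes to zero.

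First I would fix any $\theta^* \in \theta_{\max}$ (which exists as $\theta_{\max}$ is nonempty; by nondegeneracy, Proposition \ref{prop:nondeg}, $L$ is not identically zero, and by finiteness a maximum is attained). For any $\theta \in \Theta \setminus \theta_{\max}$, we have $L(\theta) < L(\theta^*)$, so Proposition \ref{prop:maxdominate} gives
\[
\frac{f_i^{(t)}(\theta)}{f_i^{(t)}(\theta^*)} \to 0 \quad \text{as } t\to\infty.
\]
Since $f_i^{(t)}$ is a probability mass function on the finite set $\Theta$, we have $f_i^{(t)}(\theta^*) \leq 1$, so the ratio tending to zero forces $f_i^{(t)}(\theta) \to 0$ for every $\theta \notin \theta_{\max}$.

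Next, because $\Theta$ is finite, $\Theta \setminus \theta_{\max}$ is finite, so the finite sum $\sum_{\theta \notin \theta_{\max}} f_i^{(t)}(\theta) \to 0$. Finally, since $f_i^{(t)}$ sums to $1$ on all of $\Theta$,
\[
\sum_{\theta \in \theta_{\max}} f_i^{(t)}(\theta) = 1 - \sum_{\theta \notin \theta_{\max}} f_i^{(t)}(\theta) \to 1,
\]
establishing the theorem. The final sentence about a unique maximizer then follows immediately: if $\theta_{\max}=\{\theta_0\}$, then $f_i^{(t)}(\theta_0) \to 1$ and $f_i^{(t)}(\theta) \to 0$ for every other $\theta$, which is convergence to the point distribution at $\theta_0$.

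There is no real obstacle here, since finiteness of $\Theta$ trivializes all the analytic issues that motivated the Master Theorem \ref{thm:master}: one need not worry about tails of series, uniform control of convergence, or robustness of the maximizer. As an alternative, one could instead invoke case (2) of Proposition \ref{prop:nearessup} by taking $M$ to be the largest value of $L$ strictly below $L_{\max}$ (or $M=0$ if $L$ is constant, in which case the theorem is trivial), so that $S = \theta_{\max}$, and choosing $T = \{\theta^*\}$ for any single $\theta^* \in \theta_{\max}$; the required boundedness and constancy hypotheses are automatic on finite sets. Both routes yield the same conclusion with essentially no computation.
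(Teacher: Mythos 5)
Your proof is correct and matches the paper's approach exactly: the paper simply remarks that the theorem follows from either Proposition \ref{prop:maxdominate} (using $f_i^{(t)}(\theta^*)\leq 1$ to convert the vanishing ratio into vanishing mass on each non-maximizer) or case (2) of Proposition \ref{prop:nearessup}, and you have correctly fleshed out both routes.
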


One may wonder what more can be said when $\theta_{\max}$ consists of several points.
The following example shows that the behavior can greatly vary: \new{with two maximizers, the maximizers may share a constant fraction of beliefs, the beliefs may converge towards one of the maximizers, the fraction of beliefs between the maximizers may alternate between two different values, and the beliefs may converge separately on odd and even time steps to both maximizers!}

\begin{example}
\label{ex:finite}
Consider the $(a,b)$-graph of Example \ref{ex:abgraph}. Let \new{the parameter space} $\Theta=\{0,1\}$ with \new{prior} $f_*\equiv 1$ \new{and initial beliefs} $f_x^{(0)}(0)=\alpha$ and $f_y^{(0)}(0)=1-\alpha$,
where $0<\alpha<1$.
\begin{shownto}{abb}
Then $\theta_{\max}=\Theta$. We omit verification of the following.
For $(a,b)=(2,1)$, $f_x^{(t)}(0)$ is constant at $\alpha$. For $(a,b)=(1,2)$, $f_x^{(t)}(0)$ alternates between $\alpha$ and $1-\alpha$.
For $(a,b)=(3,1)$, $f_x^{(t)}(0)\to 0$ if $\alpha<1/2$ and $f_x^{(t)}(0)\to 1$ if 
$\alpha>1/2$.
For $(a,b)=(1,3)$ and $\alpha<1/2$, $f_x^{(2t)}(0)\to 0$ while $f_x^{(2t+1)}(0)\to 1$.
\end{shownto}
\begin{shownto}{full}
Then $\theta_{\max}=\Theta$, and \new{by the computation in Example \ref{ex:abgraph},}
$$f_x^{(t)}(0)=\frac{1}{1+((1-\alpha)/\alpha)^{(a-b)^t}}.$$
\new{We can thus see the following behaviors.} For $(a,b)=(2,1)$, $f_x^{(t)}(0)$ is constant at $\alpha$. For $(a,b)=(1,2)$, $f_x^{(t)}(0)$ alternates between $\alpha$ and $1-\alpha$.
For $(a,b)=(3,1)$, $f_x^{(t)}(0)\to 0$ if $\alpha<1/2$ and $f_x^{(t)}(0)\to 1$ if 
$\alpha>1/2$.
For $(a,b)=(1,3)$ and $\alpha<1/2$, $f_x^{(2t)}(0)\to 0$ while $f_x^{(2t+1)}(0)\to 1$.
\end{shownto}
\end{example}

\subsection{Infinite Discrete Case}
\label{subsec:infinite}

\new{We now move on to} the theorem for the infinite discrete case, \new{which} follows from the second case of Proposition \ref{prop:nearessup}. This is the first nontrivial case \new{and contains hypotheses that the the initial conditions have to verify in order for beliefs to converge.}

\begin{theorem}
\label{thm:maincountable}
Assume that $\Theta$ is infinite discrete and the initial condition is feasible.

\begin{enumerate}[wide]
    \item Let $L$ attain the maximum $L_{\max}$ at the set of points $\theta_{\max}\neq\varnothing$.
Suppose that $\sup_{\theta \notin \theta_{\max}} L(\theta) < L_{\max}$ and $g_i^{(0)}$ is bounded on $\Theta\setminus \theta_{\max}$.
Then as $t\to\infty$,
$$\sum_{\theta \in \theta_{\max}} f_i^{(t)}(\theta) \to 1.$$
In particular, if $\theta_{\max}$ consists of a single point, then $f_i^{(t)}$ converges to the point distribution at that point.
    \item If $L$ does not attain a maximum, let $L_{\sup}=\sup_{\theta\in\Theta} L(\theta)$.
For any $M<L_{\sup}$, if $g_i^{(0)}$is bounded on $\{\theta \in\Theta: L(\theta)\leq M\}$,
then as $t\to\infty$,
$$\sum_{L(\theta)>M} f_i^{(t)}(\theta) \to 1.$$
\end{enumerate}
\end{theorem}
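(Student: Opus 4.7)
The plan is to deduce both parts of Theorem \ref{thm:maincountable} as direct applications of case (2) of Proposition \ref{prop:nearessup}, taking advantage of the fact that in the infinite discrete setting (equipped with counting measure) every nonempty single-point set has positive measure. This is a crucial difference from the continuum case: case (1) of Proposition \ref{prop:nearessup} requires $L$ to be non-constant on $S$, but in part 1 we may have $L$ constant on $\theta_{\max}$ (indeed, it equals $L_{\max}$ there by definition), so case (2) is the right tool here.

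For part 1, I would first choose any $M$ with $\sup_{\theta \notin \theta_{\max}} L(\theta) < M < L_{\max}$; such an $M$ exists by the strict gap hypothesis. Then the set $S = \{\theta \in \Theta : L(\theta) > M\}$ equals $\theta_{\max}$, and $\Theta \setminus S = \Theta \setminus \theta_{\max}$, on which each $g_i^{(0)}$ is bounded a.e.\ by assumption. To verify case (2) of Proposition \ref{prop:nearessup}, pick any $\theta_0 \in \theta_{\max}$ and set $T = \{\theta_0\}$; then $T \subseteq S$ has counting measure $1 > 0$, and each $g_i^{(0)}$ is trivially constant on the singleton $T$. The proposition then yields
\[
\sum_{\theta \in \theta_{\max}} f_i^{(t)}(\theta) = \sum_{\theta \in S} f_i^{(t)}(\theta) \to 1.
\]
The ``in particular'' statement is immediate, since if $\theta_{\max} = \{\theta^\dagger\}$ then $f_i^{(t)}(\theta^\dagger) \to 1$, forcing $f_i^{(t)}(\theta) \to 0$ for all other $\theta$.

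For part 2, the argument is essentially identical. Note that $L_{\esssup} = L_{\sup}$ in the discrete setting. Given $M < L_{\sup}$, let $S = \{\theta \in \Theta : L(\theta) > M\}$; this set is nonempty by definition of the supremum, hence has positive counting measure. The boundedness hypothesis on $\{L(\theta) \leq M\} = \Theta \setminus S$ is precisely what Proposition \ref{prop:nearessup} requires on $\Theta \setminus S$. Again picking $T = \{\theta_0\}$ for any $\theta_0 \in S$ verifies case (2), and the proposition delivers $\sum_{L(\theta) > M} f_i^{(t)}(\theta) \to 1$.

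There is no real obstacle: the heavy lifting has already been done in Proposition \ref{prop:nearessup} and ultimately in the Master Theorem \ref{thm:master}. The only point requiring care is the selection of $T$ as a singleton rather than attempting to use case (1) of Proposition \ref{prop:nearessup}; this works precisely because discreteness ensures singletons have positive measure, and it sidesteps the need for any non-constancy assumption on $L$ over $\theta_{\max}$ (or over the level set $\{L > M\}$ in part 2).
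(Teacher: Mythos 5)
Your proof is correct and follows exactly the route the paper takes: the paper derives Theorem \ref{thm:maincountable} from case (2) of Proposition \ref{prop:nearessup}, using precisely the observation that singletons have positive counting measure so one can take $T=\{\theta_0\}$ (on which every $g_i^{(0)}$ is trivially constant), with $S=\theta_{\max}$ in part 1 via the gap hypothesis and $S=\{L>M\}$ in part 2. Your instantiation of the hypotheses, including the choice of $M$ strictly between $\sup_{\theta\notin\theta_{\max}}L(\theta)$ and $L_{\max}$, matches the intended argument.
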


\new{The theorem is essentially saying the following. Note first that there are two separate cases. If maximizers of $L$ exist, then, assuming a ``positive gap'' condition and a ``boundedness'' condition, beliefs converge to those maximizers. But if maximizers of $L$ do not exist (for example, if $L(\theta)$ increases to but is never equal to a positive number), then all that we can say is that the beliefs concentrate at points where $L$ are near the supremum, with a boundedness condition similar to the first case.}

\new{We focus on the first case, i.e. when maximizers of $L$ exist, and assume a unique maximizer. The interesting question is whether these ``positive gap'' and ``boundedness'' conditions, which} can be traced back to \new{the conditions in Proposition \ref{prop:nearessup} and then} the $L(\theta)g_i^{(0)}(\theta)^{\delta_i} \leq L(\theta')g_i^{(0)}(\theta')^{\delta_i}$ \new{``robust domination''} hypothesis of the Master Theorem \ref{thm:master}, \new{are necessary. If they are, then, as alluded to in Section \ref{subsec:master}, this shows that the robust domination} hypotheses of the Master Theorem cannot be omitted; \new{in particular, they are not just an artifact of how we proved that theorem}.
The following example shows that neither of the two hypotheses of the first case can be omitted.
\new{These examples are rather delicate. Specifically, we know that the exponent in Proposition \ref{prop:ratio_of_f_asymptotic} grows like $r^t$; but in order to construct these examples, the error term $\eps_{i,j}^{(t)}$ has to be estimated as well.
This gives rise to expressions like $4^\theta+2^{\theta+1}$.}

\begin{example}
\label{ex:countercountable}
Counterexamples to the conclusion of the first case of Theorem \ref{thm:maincountable} that satisfy only the first and second hypotheses, respectively.

\begin{enumerate}[wide]
\item \new{This example satisfies the positive gap, but not the boundedness hypothesis.} Consider the $(3,1)$-graph of Example \ref{ex:abgraph}. Let $\Theta=\Z_{\geq 0}$, $0<\alpha<1$,
$$f_*(\theta)=\begin{cases}
\alpha, & \theta=0 \\
\alpha^{4^{\theta}}, & \theta \geq 1
\end{cases},\quad
f_x^{(0)}(\theta)\propto \begin{cases}
1, & \theta=0 \\
\alpha^{4^{\theta}-2^{\theta+1}}, & \theta \geq 1
\end{cases},\quad
f_y^{(0)}(\theta)\propto \begin{cases}
1, & \theta=0 \\
\alpha^{4^{\theta}+2^{\theta+1}}, & \theta \geq 1
\end{cases}.$$
\new{The idea is to set things up in such a way that at round $t$, the parameter $\theta=t$ holds a nontrivial fraction of beliefs.} Then $L(\theta)=\alpha L(0)$ for all $\theta \geq 1$, so $0$ uniquely maximizes $L$ \new{and the positive gap condition is satisfied}.
We \new{now} compute
$$f_x^{(t)}(0)=\frac{1}{1+\sum_{\theta\geq 1} \alpha^{(2^\theta-2^t)^2-1}}\leq \frac{1}{1+\alpha^{-1}}<1$$
for all $t \geq 1$, so that \new{beliefs do not converge to a point distribution at 0} and the conclusion of the theorem does not hold.
\item \new{This example satisfies the boundedness, but not the positive gap hypothesis.} Same as the previous example but with the initial condition
$$f_*(\theta)=\begin{cases}
1, & \theta=0 \\
\alpha^{2^\theta}, & \theta \geq 1
\end{cases},\quad
f_x^{(0)}(\theta)\propto \begin{cases}
1, & \theta=0 \\
\alpha^{4^{-\theta}+2^\theta-1}, & \theta \geq 1
\end{cases},\quad
f_y^{(0)}(\theta)\propto \begin{cases}
1, & \theta=0 \\
\alpha^{4^{-\theta}+2^\theta+1}, & \theta \geq 1
\end{cases}.$$
\new{It is easily seen that the boundedness condition is satisfied.} Then $L(\theta)=\alpha^{4^{-\theta}}L(0)$ for all $\theta \geq 1$, so $0$ uniquely maximizes $L$.
Similarly, \new{beliefs do not converge to a point distribution at 0} because, for $t\geq 1$,
$$f_x^{(t)}(0)=\frac{1}{1+\sum_{\theta\geq 1} \alpha^{2^\theta+4^{t-\theta}-2^t}}\leq \frac{1}{1+\alpha}<1.$$
\new{Again, we set things up in such a way that the term with $\theta=t$ keeps the denominator large in round $t$.}
\end{enumerate}
\end{example}

\subsection{$\R^k$ Case}
\label{subsec:Rk}

\new{Finally, we deal with convergence in the most difficult case, $\Theta=\R^k$.} Convergence here depends on the topology of $\R^k$, so
we need to work with functions that are sufficiently continuous. \new{Concepts from real analysis will be freely used.}
Let $\overline{A}$ denote the closure of $A$ and $B_r(a)$ the open ball of radius $r$ around $a$.

\new{First we introduce the} notion of \emph{piecewise continuity}. \new{Exactly why this is needed can be explained as follows. If the initial beliefs are continuous, then this continuity allows us to establish convergence quite easily. However, we may also be interested in densities that are not continuous, e.g. a density that is 1 on $[0,1]$ and 0 elsewhere. But this density is actually continuous \emph{enough} that we can work with it; specifically, it is continuous on each of $(-\infty,0)$, $(0,1)$ and $(1,\infty)$. Thus we turn to the notion of piecewise continuity, which} should include most densities in practice. \new{There is no universally agreed definition of piecewise continuity, so we define our own in Definition \ref{def:piecewise}. On $\R^k$, this notion essentially means that there are disjoint open sets on each of which the function is continuous. Moreover, these open sets must ``fill up $\Theta$,'' in the sense that the portion of $\Theta$ not contained in any of these sets has measure zero. Two extra conditions are needed in Definition \ref{def:piecewise} in order to avoid pathologies.} The first condition excludes cramming open sets into small spaces, \new{e.g. if the function is continuous on each of $(1/(n+1),1/n)$, this can cause problems near 0}. The second condition excludes functions with pathological behaviors at the boundaries, \new{e.g. think of $\sin(1/x)$ near $0$.}

\begin{definition}
\label{def:piecewise}
Let $\Theta \subseteq \R^k$. A function $f:\Theta \to \R_{\geq 0}$ is \emph{piecewise continuous} if there is a countable collection of disjoint open sets $U_i \subseteq \Theta$ such that $\Theta \setminus \cup_i U_i$ has measure zero; only finitely many $U_i$'s intersect any given bounded set $V \subseteq \R^k$; and for each $i$, the restriction $f:U_i \to \R_{\geq 0}$ can be extended to a continuous function $\widetilde{f}_i: \overline{U_i} \to [0,\infty]$.
\end{definition}

\new{Note that the continuous extension mentioned in the last part of the above definition has the interval in the extended real line $[0,\infty]$ as its range. This means that we allow functions such as $1/x$ to be piecewise continuous because it has a well-defined limit at 0, even though that limit is infinite.
The next lemma shows that multiple functions play well together with regard to this definition: we can choose the open sets in the definition to be the same for all functions.}

\begin{lemma}
\label{lem:piecewisegood}
Let $\Theta \subseteq \R^k$. For piecewise continuous functions $f_1,\dots,f_N:\Theta \to \R_{\geq 0}$, there are $U_i$'s as in Definition \ref{def:piecewise} that work for all of these functions.
\end{lemma}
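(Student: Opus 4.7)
My plan is to construct a common refinement by taking pairwise intersections of the open sets from each function's decomposition. For each $f_j$, let $\{U_i^{(j)}\}_{i \in I_j}$ be the countable collection of disjoint open sets given by Definition \ref{def:piecewise}, with continuous extensions $\widetilde{f}_j$ on each $\overline{U_i^{(j)}}$. For each tuple $(i_1, \dots, i_N) \in I_1 \times \cdots \times I_N$, I will define $W_{i_1, \dots, i_N} = \bigcap_{j=1}^N U_{i_j}^{(j)}$, which is open as a finite intersection of open sets, and claim that the collection $\{W_{i_1, \dots, i_N}\}$ (after discarding any empty ones) satisfies all three conditions of Definition \ref{def:piecewise} simultaneously for every $f_j$.

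First I would check that the new collection is countable (a finite product of countable index sets) and pairwise disjoint: if two tuples differ in some coordinate $j$, the corresponding $W$'s are disjoint because the $U_i^{(j)}$ are. Next, I would verify that $\Theta \setminus \bigcup W_{i_1, \dots, i_N}$ has measure zero via the identity
\[
\bigcup_{(i_1,\dots,i_N)} W_{i_1,\dots,i_N} \;=\; \bigcap_{j=1}^N \bigcup_{i_j \in I_j} U_{i_j}^{(j)},
\]
whose complement in $\Theta$ is $\bigcup_{j=1}^N \bigl(\Theta \setminus \bigcup_{i} U_i^{(j)}\bigr)$, a finite union of measure-zero sets. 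For the bounded-set condition, any bounded $V \subseteq \R^k$ meets only finitely many $U_i^{(j)}$ for each fixed $j$, hence only finitely many products $W_{i_1,\dots,i_N}$ (at most the product of these finite counts).

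Finally, I would verify the continuous-extension condition. Fix $j$ and a tuple $(i_1,\dots,i_N)$. Since $W_{i_1,\dots,i_N} \subseteq U_{i_j}^{(j)}$, we have $\overline{W_{i_1,\dots,i_N}} \subseteq \overline{U_{i_j}^{(j)}}$, so the continuous extension $\widetilde{f}_j : \overline{U_{i_j}^{(j)}} \to [0,\infty]$ restricts to a continuous function on $\overline{W_{i_1,\dots,i_N}}$, which extends $f_j|_{W_{i_1,\dots,i_N}}$ as required. There is no substantive obstacle here; the proof is essentially bookkeeping once one thinks to take a common refinement by finite intersections. The only mild subtlety is the measure-zero condition, which the De Morgan identity above handles cleanly.
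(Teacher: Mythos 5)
Your proof is correct and is exactly the construction the paper uses: the paper's proof simply asserts that the collection $\bigl( U_{j_1}^{(1)} \cap \dots \cap U_{j_N}^{(N)} \bigr)_{j_1,\dots,j_N\geq 1}$ works and calls the verification ``a simple exercise,'' whereas you have carried out that exercise (countability, disjointness, the De Morgan identity for the measure-zero condition, local finiteness, and restriction of the continuous extensions). No gaps.
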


\begin{proof}
For each $1\leq i \leq N$, let $\big( U_j^{(i)}\big)_{j\geq 1}$ be a collection of open sets that works for $f_i$. It is a simple exercise, \new{although there are details to check,} to show that the collection $\big( U_{j_1}^{(1)} \cap \dots \cap U_{j_N}^{(N)} \big)_{j_1,\dots,j_N\geq 1}$ works for every $f_i$.
\end{proof}

We now show convergence of beliefs when $\Theta\subseteq\R^k$ and $g_i^{(0)}$ is piecewise continuous.
The following theorem can be regarded as the most significant result in this paper.
\new{Recall that we allow initial beliefs to be piecewise continuous, which should include most situations in practice.
Moreover, the parameter region $\Theta$ can be any subset of $\R^k$; e.g. $\Theta=(0,\infty)$ is allowed.
However, in the case that $\Theta$ is a proper subset of $\R^k$,} the point that the beliefs converge to may be \emph{outside} of $\Theta$, but it must be a limit point of $\Theta$. \new{This is not too surprising: if the initial beliefs are $1/x^2$ on $(0,\infty)$, then it is likely that the beliefs will converge to 0, which is actually outside of the parameter set. Nevertheless, our theorem is able to account for cases like this.}

\begin{theorem}
\label{thm:Rkmain}
Assume that $\Theta \subseteq \R^k$ and the initial condition is feasible. Let $g_i^{(0)}$ be piecewise continuous. Let $L_{\esssup}=\esssup_{\theta \in \Theta} L(\theta) \leq \infty$, and for any $M$, let
$S_M=\{\theta \in \Theta: L(\theta)>M\}$. Let $\theta_{\esssup}$ be the set of all $\theta \in \overline{\Theta}$ such that for every $M<L_{\esssup}$ and $\delta>0$, $B_\delta (\theta) \cap S_M$ has positive measure.
Suppose that there is $r>0$ such that $\esssup_{|\theta|>r} L(\theta) < L_{\esssup}$.
Moreover, there is $M'<L_{\esssup}$ such that for every \new{$M \in (M',L_{\esssup})$}, $L$ is not constant a.e. on $S_M$; $g_i^{(0)}$ is bounded a.e. on $\Theta \setminus S_M$; and $g_i^{(0)}$ is bounded below by a positive number a.e. on $S_M$.
Then $\theta_{\esssup}$ is a nonempty compact set, and for any open set $U$ containing $\theta_{\esssup}$, as $t\to\infty$,
$$\int_{\theta \in U \cap \Theta} f_i^{(t)}(\theta)\to 1.$$
In particular, if $\theta_{\esssup}$ consists of a single point, then $f_i^{(t)}$, viewed as a density on $\R^k$ that vanishes outside of $\Theta$, converges to the point distribution at that point.
\end{theorem}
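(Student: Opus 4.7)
The plan is to reduce the theorem to Proposition \ref{prop:nearessup}(1). Specifically, I will show that for every open $U \supseteq \theta_{\esssup}$ there exists $M < L_{\esssup}$ such that $S_M \subseteq U$ almost everywhere. Once this is established, $\Theta \setminus U \subseteq \Theta \setminus S_M$ a.e., so Proposition \ref{prop:nearessup}(1) gives $\int_{\Theta \setminus S_M} f_i^{(t)} \to 0$ and hence $\int_{U \cap \Theta} f_i^{(t)} \to 1$.

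To establish the nonemptiness and compactness of $\theta_{\esssup}$, I would introduce, for each $M < L_{\esssup}$, the closed set
$$T_M := \{\theta \in \R^k : B_\delta(\theta) \cap S_M \text{ has positive measure for every } \delta > 0\},$$
so that $\theta_{\esssup} = \bigcap_{M < L_{\esssup}} T_M$. The decay hypothesis $\esssup_{|\theta|>r} L(\theta) < L_{\esssup}$ implies that $S_M \subseteq \overline{B_r(0)}$ up to a null set when $M$ is close enough to $L_{\esssup}$, so $T_M \subseteq \overline{B_r(0)}$ is compact; and $T_M$ is nonempty because $S_M$ has positive measure and every Lebesgue density point of $S_M$ lies in $T_M$. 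Since $M_1 < M_2$ implies $T_{M_1} \supseteq T_{M_2}$, the finite intersection property yields that $\theta_{\esssup}$ is a nonempty compact set.

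The main obstacle is the a.e. inclusion $S_M \subseteq U$ for $M$ close to $L_{\esssup}$, which I would prove by contradiction. If the inclusion fails, there is a sequence $M_n \to L_{\esssup}$ with $\mu(S_{M_n} \setminus U) > 0$. Define $T_n^U$ analogously to $T_{M_n}$ but using $S_{M_n} \setminus U$ in place of $S_{M_n}$. Each $T_n^U$ is nonempty, contained in the compact set $\overline{B_r(0)}$, and disjoint from $U$: for $\theta \in U$, $B_\delta(\theta) \subseteq U$ for small $\delta$, so $B_\delta(\theta) \cap (S_{M_n} \setminus U) = \varnothing$. By nestedness and the finite intersection property, there exists $\theta_0 \in \bigcap_n T_n^U$. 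Then $B_\delta(\theta_0) \cap S_M$ has positive measure for every $M < L_{\esssup}$ and $\delta > 0$, so $\theta_0 \in \theta_{\esssup}$; but $\theta_0 \notin U$, contradicting $\theta_{\esssup} \subseteq U$.

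Finally, fixing such an $M \in (M', L_{\esssup})$ ensures the hypotheses of Proposition \ref{prop:nearessup}(1) are met, yielding $\int_{S_M} f_i^{(t)} \to 1$ and completing the reduction. The ``in particular'' statement is immediate: if $\theta_{\esssup} = \{\theta^*\}$, applying the main conclusion to $U = B_\delta(\theta^*)$ for every $\delta > 0$ shows that the beliefs, extended to vanish outside $\Theta$, weakly converge to the point distribution at $\theta^*$.
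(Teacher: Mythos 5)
Your proposal is correct, and it reaches the same reduction to Proposition \ref{prop:nearessup}(1) as the paper does, but it handles the topological core by a genuinely different and arguably cleaner route. The paper leans on piecewise continuity: it fixes the open sets $U_1,\dots,U_n$ from Definition \ref{def:piecewise} whose closures meet $\overline{B_r(0)}$, proves the structural identity $\theta_{\esssup}=\cup_{i=1}^n \{\theta \in \overline{U_i}: \widetilde{L}_i(\theta)=L_{\esssup}\}$ in terms of the continuous extensions $\widetilde{L}_i$, and then runs a sequential-compactness contradiction (picking $\theta_j$ with $L(\theta_j)\to L_{\esssup}$ and using continuity of $\widetilde{L}_i$ to place the subsequential limit in $\theta_{\esssup}$). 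You instead work purely measure-theoretically: writing $\theta_{\esssup}=\bigcap_{M<L_{\esssup}} T_M$ as a nested intersection of closed sets, using the decay hypothesis to trap them in $\overline{B_r(0)}$, the Lebesgue density theorem to see they are nonempty, and the finite intersection property both for nonemptiness/compactness and for the key a.e.\ inclusion $S_M\subseteq U$; the observation that $T_n^U$ is disjoint from the open set $U$ is the right replacement for the paper's subsequence argument. What your approach buys is that the compactness and concentration conclusions never actually invoke piecewise continuity of the $g_i^{(0)}$ (it is only a standing hypothesis you inherit), so your argument is slightly more general and avoids the bookkeeping of Lemma \ref{lem:piecewisegood}; what the paper's approach buys is the explicit and interpretable characterization of $\theta_{\esssup}$ as the set of genuine maximizers of the continuous extensions of $L$, which is informative beyond the statement of the theorem. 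Both arguments verify the hypotheses of Proposition \ref{prop:nearessup} in the same way and conclude identically via $U=B_\delta(\theta)$ in the single-point case.
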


\new{The various hypotheses of the theorem fit together in the following way. First, $\theta_{\esssup}$ is supposed to be the ``maximizers of $L$,'' except that there are two complications. One, we must not be distracted by sets of measure zero. Two, these maximizers may lie outside of $\Theta$; think $1/x^2$ with $\Theta = (0,\infty)$.
The definition in the theorem that captures both of these aspects is the following: the ``maximizers'' are $\theta$ such that in small balls around $\theta$, the set where $L$ is near its essential supremum has positive measure. Then to show that beliefs converge to a point distribution at $\theta_{\esssup}$ (if it consists of a single point), we require $L$ to decay away at infinity, in the sense that the set where $L$ is near its essential supremum is bounded; this allows us to use compactness arguments.
Then several hypotheses on nonconstancy and boundedness from above and below are required in order to use case (1) of Proposition \ref{prop:nearessup}.
The conclusion is that beliefs concentrate in any open set containing $\theta_{\esssup}$, and this shows convergence in the case where $\theta_{\esssup}$ consists of a single point.}

\begin{shownto}{abb}
\begin{proofsketch}
If $\esssup_{|\theta|>r} L(\theta) <M< L_{\esssup}$, then $S_M \setminus B_r(0)$ has measure zero.
From this, we can show that $\theta_{\esssup} \subseteq \overline{B_r(0)}$ is bounded.
By Lemma \ref{lem:piecewisegood}, we can take $(U_i)_{i\geq 1}$ that works for all $g_i^{(0)}$. Assume that $U_1,\dots,U_n$ are the only ones whose closures intersect $\overline{B_r(0)}$. By standard arguments, we can prove that $\theta_{\esssup}=\cup_{i=1}^n \{\theta \in \overline{U_i}: \widetilde{L}_i(\theta)=L_{\esssup}\}$, where $\widetilde{L}_i$ is the continuous extension of $L$ on $\overline{U_i}$.
It follows that $\theta_{\esssup}$ is compact.

Proposition \ref{prop:nearessup} implies that $\int_{\theta \in S_M} f_i^{(t)}(\theta) \to 1$ for any $M'<M<L_{\esssup}$.
To prove the theorem, it suffices to show that there is $M'<M<L_{\esssup}$ such that $S_M\setminus U$ has measure zero, which for large $M$ is equivalent to
$(S_M \cap U_i \cap B_r(0))\setminus U$ having measure zero for all $1\leq i\leq n$.
To prove this, show that the existence of an increasing sequence $M_j \to L_{\esssup}$ such that $(S_{M_j} \cap U_i \cap B_r(0))\setminus U$ has positive measure leads to a contradiction.
\end{proofsketch}
\end{shownto}

\begin{shownto}{full}
\begin{proof}
\new{The proof is unfortunately rather technical.
The first step is to show that $\theta_{\esssup}$ is bounded; for this, we use that $L$ decays away at infinity. Take $r$ such that $\esssup_{|\theta|>r}<L_{\esssup}$, which exists by assumption.
We will show that $\theta_{\esssup} \subseteq \overline{B_r(0)}$.}
If $\esssup_{|\theta|>r} L(\theta) <M< L_{\esssup}$, then $S_M$ has positive measure and $S_M \setminus B_r(0)$ has measure zero.
For any $\theta \in \theta_{\esssup}$ and $\delta >0$, $B_\delta(\theta) \cap S_M$ has positive measure \new{by definition}, so \new{it follows that} $B_\delta(\theta) \cap B_r(0)$ has positive measure, so $\theta \in \overline{B_r(0)}$.
We conclude that $\theta_{\esssup} \subseteq \overline{B_r(0)}$ is bounded.

\new{The second step is to show that $\theta_{\esssup}$ is compact. We have already shown it to be bounded, so it remains to show that it is closed.}
By Lemma \ref{lem:piecewisegood}, we can take $(U_i)_{i\geq 1}$ that works for all $g_i^{(0)}$.
\new{Then note that $L$ is piecewise continuous with the $U_i$'s working for it as well.} Assume that $U_1,\dots,U_n$ are the only ones whose closures intersect $\overline{B_r(0)}$; \new{here we are using our definition of piecewise continuity to only consider $U_i$'s that are relevant.} We \new{now} claim \new{the following}: $$\theta_{\esssup}=\cup_{i=1}^n \{\theta \in \overline{U_i}: \widetilde{L}_i(\theta)=L_{\esssup}\},$$ where $\widetilde{L}_i$ is the continuous extension of $L$ on $\overline{U_i}$.
\new{In other words, $\theta_{\esssup}$ really consists of ``maximizers:'' $\theta$'s where $L$ are equal to $L_{\esssup}$, but we need to first extend $L$ onto the boundaries of $U_i$'s.}
Let $\theta \in \overline{U_i}$ be such that $\widetilde{L}_i(\theta)=L_{\esssup}$. \new{We will show that $ \theta \in\theta_{\esssup}$ just by unpacking definitions.} For every $M<L_{\esssup}$, there is $\delta'>0$ such that $B_{\delta'}(\theta) \cap U_i \subseteq S_M$ by continuity. So for any $\delta>0$, $B_{\delta}(\theta) \cap S_M \supseteq B_{\delta}(\theta) \cap B_{\delta'}(\theta) \cap U_i$ has positive measure because the latter is a nonempty open set. Hence $\theta \in \theta_{\esssup}$. Conversely, let $\theta \in \theta_{\esssup}$. Then for every $M<L_{\esssup}$ and $\delta >0$, $B_{\delta}(\theta) \cap S_M$ \new{has positive measure,} so it intersects some $U_i$. For $\delta<1$, $B_{\delta}(\theta) \subseteq B_1(\theta)$, so there are finitely many possible choices of $U_i$'s \new{ by the definition of piecewise continuity}. \new{Now} pick sequences $M_j\to L_{\esssup}$ and $\delta_j \to 0$; \new{then} some choice of $U_i$ must repeat infinitely many times. Thus
$\theta \in \overline{U_i}$ for \new{this} $i$. 
Because $\theta \in \overline{B_r(0)}$, we have $1\leq i\leq n$.
\new{Finally,} since we let $M_j \to L_{\esssup}$,
\new{the fact that $B_{\delta_j}(\theta) \cap S_{M_j} \cap U_i \neq \varnothing$ implies that} $\widetilde{L}_i(\theta) \geq L_{\esssup}$. \new{But} if $\widetilde{L}_i(\theta) > L_{\esssup}$, then there is a nonempty open set on which $L>L_{\esssup}$, a contradiction.
So $\widetilde{L}_i(\theta) = L_{\esssup}$, proving the claim. It follows from the claim that $\theta_{\esssup}$ is closed, so because it is bounded, it is compact.

\new{We now show that whenever $U$ is an open set containing $\theta_{\esssup}$, beliefs on $U \cap \Theta$ tend to 1. Our strategy is the following.}
Proposition \ref{prop:nearessup} implies that for any $M'<M<L_{\esssup}$, as $t\to\infty$,
$\int_{\theta \in S_M} f_i^{(t)}(\theta) \to 1$.
\new{So} it suffices to show that there is $M'<M<L_{\esssup}$ such that $S_M\setminus U$ has measure zero. \new{To this end, note that} for large $M$, this is equivalent to $(S_M\cap B_r(0))\setminus U$ having measure zero, which \new{in turn} is equivalent to $(S_M \cap U_i \cap B_r(0))\setminus U$ having measure zero for all $1\leq i\leq n$.
Suppose the contrary. \new{Then, for some $i$,} there is an increasing sequence $M_j \to L_{\esssup}$ such that $(S_{M_j} \cap U_i \cap B_r(0))\setminus U$ has positive measure. Pick $\theta_j \in (S_{M_j} \cap U_i \cap B_r(0))\setminus U$. Then $\theta_j$ is a bounded sequence in $U_i$ such that $L(\theta_j) \to L_{\esssup}$. We can pass to a subsequence and assume that $\theta_j \to \theta \in \overline{U_i}$. Then $\theta \in \theta_{\esssup}$ \new{by the claim above}, so $\theta_j \in U$ for large $j$, a contradiction.
Therefore the desired integral converges to 1.
\new{This argument can also be modified to show that $\theta_{\esssup}$ is nonempty: take $U=\varnothing$ in the above argument, note that $S_M\setminus U = S_M$ has positive measure, and run the argument to construct $\theta \in \theta_{\esssup}$ as above. Alternatively, the fact that beliefs on $U \cap \Theta$ tend to 1 for any open $U$ containing $\theta_{\esssup}$ already implies that $\theta_{\esssup} \neq \varnothing$, because otherwise the convergence should be true for $U =\varnothing$ as well.}

\new{Finally,} to see that the distribution converges to the point distribution at $\theta$ if $\theta_{\esssup}=\{\theta\}$, take $U=B_{\delta}(\theta)$ for small $\delta >0$.
\end{proof}
\end{shownto}

\new{Lastly,} we present a simplified version of Theorem \ref{thm:Rkmain} when $\Theta=\R^k$ and $g_i^{(0)}$ is bounded and continuous.
\new{This version can deal with densities that are continuous on the whole of $\R^k$ with some mild decay conditions, and it is applicable to e.g. Gaussians and many similar densities. To prove this corollary, we directly apply Theorem \ref{thm:Rkmain}, where}
Corollary \ref{cor:suffcientfeas} is used to prove that the initial condition is feasible.

\begin{cor}
\label{cor:Rkcontinuous}
Assume that $\Theta = \R^k$. Let $g_i^{(0)}$ be bounded continuous functions. Suppose that there is a bounded set $S$ such that $\sup_{\theta \notin S} L(\theta) < \sup_{\theta \in \R^k} L(\theta)$.
Then the initial condition is feasible, $L$ attains a maximum at the set of points $\theta_{\max}\neq\varnothing$, and for any open set $U$ containing $\theta_{\max}$, as $t\to\infty$,
$$\int_{\theta \in U} f_i^{(t)}(\theta)\to 1.$$
In particular, if $\theta_{\max}$ consists of a single point, then $f_i^{(t)}$ converges to the point distribution at that point.
\end{cor}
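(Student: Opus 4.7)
The plan is to derive this corollary directly from Theorem \ref{thm:Rkmain}, after verifying feasibility, attainment of the maximum, and translating the ``essential supremum'' language of the theorem into the cleaner ``maximum'' language of the corollary. Concretely, I would proceed in four steps: (1) establish feasibility, (2) verify continuity and attainment of the maximum of $L$, (3) check the remaining hypotheses of Theorem \ref{thm:Rkmain}, and (4) identify $\theta_{\esssup}$ with $\theta_{\max}$ and invoke the theorem.

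For step (1), the hypothesis $\sup_{\theta\notin S}L<\sup_\theta L$ forces $\sup_\theta L>0$, so pick $\theta_0$ with $L(\theta_0)>0$; then all $g_i^{(0)}(\theta_0)>0$, and continuity propagates this to a neighborhood, giving nondegeneracy. Combined with boundedness of each $g_i^{(0)}$, Corollary \ref{cor:suffcientfeas} yields feasibility. For step (2), Theorem \ref{thm:perron} gives $v_i>0$, so each $g_i^{(0)}(\theta)^{v_i}$ (hence $L$) is continuous on $\R^k$. Choose $r>0$ with $S\subseteq B_r(0)$; then $\sup L$ is not attained on $\R^k\setminus\overline{B_r(0)}$, and by continuity $L$ attains its maximum on the compact set $\overline{B_r(0)}$. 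So $L_{\max}:=\max_{\R^k} L$ exists, $\theta_{\max}\neq\varnothing$, and $L_{\esssup}=L_{\max}$.

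For step (3), piecewise continuity is trivial with $U_1=\R^k$; the decay condition $\esssup_{|\theta|>r}L<L_{\esssup}$ uses the same $r$; and the ``bounded above'' condition follows from the global boundedness. For the ``bounded below by a positive number on $S_M$'' condition, let $B$ be a common upper bound for all $g_j^{(0)}$. On $S_M$ we have $L(\theta)>M$, so
$$g_i^{(0)}(\theta)^{v_i}=\frac{L(\theta)}{\prod_{j\neq i}g_j^{(0)}(\theta)^{v_j}}\geq\frac{M}{B^{1-v_i}},$$
which gives a uniform positive lower bound on each $g_i^{(0)}$ on $S_M$ for any fixed $M'>0$ and $M\in(M',L_{\max})$. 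For the non-constancy of $L$ a.e.\ on $S_M$: if $L$ were constant on the open set $S_M$, continuity would give $L\equiv c$ on $S_M$; then the intermediate value theorem along any path from a point in $S_M$ to a point where $L$ is smaller forces $L$ to take all intermediate values, which rules out constancy unless $L$ is constant globally---but the strict inequality hypothesis $\sup_{\theta\notin S}L<\sup_\theta L$ precludes a global constant, so for $M$ close to $L_{\max}$ the non-constancy holds.

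For step (4), by continuity, $\theta\in\theta_{\esssup}$ iff every ball $B_\delta(\theta)$ meets $\{L>M\}$ in positive measure for every $M<L_{\max}$, which happens iff $L(\theta)=L_{\max}$; hence $\theta_{\esssup}=\theta_{\max}$. Theorem \ref{thm:Rkmain} then gives $\int_U f_i^{(t)}\to 1$ for every open $U\supseteq\theta_{\max}$, and the ``in particular'' clause follows by specializing $U$ to arbitrarily small neighborhoods of the unique maximizer. The main obstacle I anticipate is verifying the non-constancy and the positive lower bound hypothesis; the former requires the somewhat delicate continuity-and-connectedness argument above, while the latter requires exploiting the product structure of $L$ in conjunction with global boundedness of the $g_i^{(0)}$'s.
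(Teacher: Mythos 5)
Your proposal is correct and follows the same route the paper indicates: feasibility via Corollary \ref{cor:suffcientfeas} and then a direct application of Theorem \ref{thm:Rkmain}, with the continuity of $L$ used to identify $\theta_{\esssup}$ with $\theta_{\max}$. The paper leaves these verifications implicit, and your fleshing out of the positive-lower-bound and non-constancy hypotheses (via the product structure of $L$ and the intermediate value theorem, respectively) is sound.
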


\new{Another (further simplified) version of Corollary \ref{cor:Rkcontinuous} that should be useful in practice is Theorem \ref{thm:mainsimplified} in the introduction.}

\section{Applications}
\label{sec:apps}

\new{We now seek to apply our theoretical results to model real-world situations. One example is given for each of the cases where $\Theta$ is finite, infinite discrete, and continuous to demonstrate the use of our theorems.}

\subsection{Binary Beliefs}
Suppose that the underlying state of the world is binary, $\Theta = \{0,1\}$. For example, agents are trying to determine the truth value of a statement.
\new{At first, agent $i$ believes that the statement holds with probability $x_i$.}
By Theorem \ref{thm:mainfinite}, we can conclude the following.

\begin{prop}
Let the initial belief of agent $i$ be Bernoulli distributed as $\operatorname{Bern}(x_i)$, $0<x_i<1$, and the common prior be $\operatorname{Bern}(1/2)$. Then the initial condition is feasible. If $\prod_{i=1}^{n} x_i^{v_i} > \prod_{i=1}^{n} (1-x_i)^{v_i}$, then $f_i^{(t)}$ converges to the point distribution at 1; if $\prod_{i=1}^{n} x_i^{v_i} < \prod_{i=1}^{n} (1-x_i)^{v_i}$, then $f_i^{(t)}$ converges to the point distribution at 0.
\end{prop}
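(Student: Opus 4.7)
The plan is to directly apply Theorem \ref{thm:mainfinite} for the finite case $\Theta=\{0,1\}$, after verifying feasibility and computing the weighted likelihood function $L$ explicitly at the two points.

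First I would establish feasibility. Since $0 < x_i < 1$, we have $f_i^{(0)}(\theta) > 0$ for both $\theta \in \{0,1\}$ and all $i$, so the initial condition is nondegenerate. The normalized beliefs $g_i^{(0)}$ are functions on a two-point space and are therefore automatically bounded, so $g_i^{(0)} \in L^\infty$. By Corollary \ref{cor:suffcientfeas}, the initial condition is feasible.

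Next I would compute $L(0)$ and $L(1)$. Since $f_*(\theta) = 1/2$ for $\theta \in \{0,1\}$, we get $g_i^{(0)}(1) = 2x_i$ and $g_i^{(0)}(0) = 2(1-x_i)$. Using $\sum_i v_i = 1$,
\[
L(1) = \prod_{i=1}^N (2x_i)^{v_i} = 2\prod_{i=1}^N x_i^{v_i}, \qquad L(0) = 2\prod_{i=1}^N (1-x_i)^{v_i}.
\]
Under the hypothesis $\prod x_i^{v_i} > \prod (1-x_i)^{v_i}$, we get $L(1) > L(0)$, so $\theta_{\max} = \{1\}$, and Theorem \ref{thm:mainfinite} gives $f_i^{(t)}(1) \to 1$, i.e.\ $f_i^{(t)}$ converges to the point distribution at $1$. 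The opposite inequality is symmetric and yields convergence to the point distribution at $0$.

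There is no significant obstacle in this argument; the whole content reduces to the elementary computation of $L$, after which Theorem \ref{thm:mainfinite} finishes the job. The only subtlety worth noting is that because $\sum_i v_i = 1$, the factor of $2$ from the prior appears identically in both $L(0)$ and $L(1)$ and so cancels, reducing the comparison cleanly to $\prod x_i^{v_i}$ versus $\prod (1-x_i)^{v_i}$.
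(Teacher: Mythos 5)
Your proposal is correct and follows exactly the route the paper intends: the paper's entire justification is ``By Theorem \ref{thm:mainfinite}, we can conclude the following,'' and your computation of $L(0)$ and $L(1)$ together with the feasibility check via nondegeneracy and Corollary \ref{cor:suffcientfeas} supplies precisely the details that make that invocation work. No gaps.
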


\new{Note that, except in the borderline case $\prod_{i=1}^{n} x_i^{v_i} = \prod_{i=1}^{n} (1-x_i)^{v_i}$, agents will reach a consensus on whether the statement is true or false.}
This process can be viewed as if agents cast votes on what everyone should believe: everyone will unanimously agree that the statement is true if
\new{$$\sum_{i=1}^n v_i \log \paren{\frac{x_i}{1-x_i}} > 0,$$
and that the statement is false if this expression is negative.
Thus it is as if}
agent $i$ casts votes of $\log (x_i/(1-x_i))$ in support of the statement and gets $v_i$ votes.
The expression $\log (x_i/(1-x_i))$, called the \emph{log-odds}, transforms probability in $(0,1)$ to the real line.
\new{Moreover, we can see that} the strength of an agent's vote \new{indeed} comes from her confidence and her centrality.

\new{A similar analysis holds when agents are trying to decide between a finite number of alternatives. In this case, if agent $i$ believes in alternative $j$ with probability $x_{i,j}$, then it is as if she casts votes of $\log x_{i,j}$ in support of alternative $j$. The alternative with the most votes becomes the consensus.}

\subsection{Poisson Beliefs}

\new{Now let us consider the case where there are an infinite number of alternatives, e.g. when} the underlying state of the world is the number of events in a time period. The initial beliefs can be modeled by Poisson distributions.

\begin{prop}\label{prop:pois}
Assume that the initial belief of agent $i$ is $\operatorname{Pois}(\lambda_i)$ and the common prior is a flat prior, then the initial condition is feasible. Let $\lambda^* = \prod_{i=1}^{N} \lambda_i^{v_i}$. If $\lambda^*$ is not an integer, then $f_i^{(t)}$ converges to the point distribution at $\lfloor \lambda^* \rfloor$. If $\lambda^*$ is an integer, then $f_i^{(t)}$ lies in $\{\lambda^*-1,\lambda^*\}$ with probability going to 1 as $t \to \infty$.
\end{prop}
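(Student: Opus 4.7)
The plan is to apply Theorem \ref{thm:maincountable} after computing the weighted likelihood function explicitly. First I would set up the normalized beliefs: with the flat prior $f_*\equiv 1$ on $\Theta=\Z_{\geq 0}$, the normalized belief of agent $i$ equals the Poisson PMF itself, $g_i^{(0)}(\theta)=e^{-\lambda_i}\lambda_i^\theta/\theta!$. This function is bounded by $1$ and strictly positive at every $\theta$, so the initial condition is nondegenerate and Corollary \ref{cor:suffcientfeas} immediately yields feasibility.

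Next I would compute the weighted likelihood function. Using $\sum_i v_i=1$,
$$L(\theta)=\prod_{i=1}^N \paren{e^{-\lambda_i}\frac{\lambda_i^\theta}{\theta!}}^{v_i}=\exp\paren{-\sum_{i=1}^N v_i\lambda_i}\cdot\frac{(\lambda^*)^\theta}{\theta!},$$
which is proportional to a Poisson$(\lambda^*)$ PMF. The shape of $L$ is then determined by the ratio $L(\theta+1)/L(\theta)=\lambda^*/(\theta+1)$, which exceeds $1$ precisely when $\theta<\lambda^*-1$ and equals $1$ precisely when $\theta=\lambda^*-1$. Hence $L$ is strictly unimodal: if $\lambda^*\notin\Z$, then $L$ attains its maximum uniquely at $\theta_{\max}=\lfloor\lambda^*\rfloor$; if $\lambda^*\in\Z$, then $L$ attains its maximum at exactly the two adjacent points $\theta_{\max}=\{\lambda^*-1,\lambda^*\}$.

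In either case, the unimodality together with $L(\theta)\to 0$ as $\theta\to\infty$ gives the positive-gap condition $\sup_{\theta\notin\theta_{\max}}L(\theta)<L_{\max}$, and the boundedness hypothesis on $g_i^{(0)}$ on $\Theta\setminus\theta_{\max}$ follows from the global bound $g_i^{(0)}\leq 1$. Case (1) of Theorem \ref{thm:maincountable} therefore applies and delivers $\sum_{\theta\in\theta_{\max}}f_i^{(t)}(\theta)\to 1$. For the non-integer case this is convergence to the point distribution at $\lfloor\lambda^*\rfloor$, and for the integer case this is the stated mass concentration on $\{\lambda^*-1,\lambda^*\}$.

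I do not expect any serious obstacle here, as essentially everything reduces to elementary facts about the Poisson PMF once $L$ is written in closed form. The only point requiring care is the ratio analysis at $\theta=\lambda^*-1$, which is what differentiates the unique-maximizer case from the two-maximizer case; the rest of the argument is a direct invocation of the machinery already developed.
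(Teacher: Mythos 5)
Your proposal is correct and follows essentially the same route as the paper: establish feasibility via Corollary \ref{cor:suffcientfeas} using boundedness and nondegeneracy of the $g_i^{(0)}$, compute $L(\theta)\propto(\lambda^*)^\theta/\theta!$, locate its maximizer(s) by the ratio $L(\theta+1)/L(\theta)=\lambda^*/(\theta+1)$, and invoke case (1) of Theorem \ref{thm:maincountable}. The paper's proof is just a terser version of the same argument, so there is nothing to add.
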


\begin{shownto}{abb}
\begin{proofsketch}
Compute $L(\theta) \propto (\lambda^*)^{\theta}/\theta!$ and apply Theorem \ref{thm:maincountable}. 
\end{proofsketch}
\end{shownto}

\begin{shownto}{full}
\begin{proof}
\new{Since $g_i^{(0)}$ is bounded and the initial condition is nondegenerate}, the initial condition is feasible by Corollary \ref{cor:suffcientfeas}. \new{It is easy to compute that}
$$L(\theta) \propto \frac{(\lambda^*)^{\theta}}{\theta!}.$$
By direct computation, $L$ has a unique maximum at $\lfloor \lambda^* \rfloor$ if $\lambda^*$ is not an integer, and two maxima $\{\lambda^*-1,\lambda^*\}$ if $\lambda^*$ is an integer. \new{Since $L(\theta) \to 0$ as $\theta \to \infty$ and $g_i^{(0)}$ is bounded, we can apply} Theorem \ref{thm:maincountable} \new{to show convergence}.
\end{proof}
\end{shownto}

The Poisson parameter $\lambda$ is the mean of $\operatorname{Pois}(\lambda)$ and is related to the average event rate. Our result shows that \new{agents will reach a consensus at $\lambda^*$, the \emph{weighted geometric mean} of the rates corresponding to agents' beliefs, where again the weight of each agent is her centrality. Note that our model indicates that Poisson} rates should be combined multiplicatively \new{rather than additively. This accords with intuition:} if two connected people believe that there will be $2$ and $1000$ floods on average next year, respectively, then \new{Proposition \ref{prop:pois} says that} they will come to believe that there will be $\lfloor \sqrt{2000} \rfloor = 44$
floods. In contrast, combining rates additively as in DeGroot leads to the counterintuitive answer of 501
floods. \new{Thus, even though each agent carries essentially one piece of information: the event rate, by modeling the beliefs according to the actual underlying parameter space, our model leads us to a natural consensus.
A similar situation is} the Gaussian belief case in the next section where beliefs converge to the weighted arithmetic mean and agent $i$ also has weight $v_i$.

\subsection{Gaussian Beliefs}
\label{subsec:gaussian}

\new{Finally, we consider the interesting case of each agent carrying \emph{two} pieces of information, her belief and her confidence.} Suppose that the underlying state of the world is $\Theta = \R$ with the flat prior $f_*\equiv 1$.
Each agent observes the true state of the world $\theta^*$ with Gaussian noise:
agent $i$'s signal $\mu_i$ is drawn from the normal distribution $\N(\theta^*,1/\tau_i)$, where $\tau_i$ is the \emph{precision} known to agent $i$.
\new{Note that this is an example of an underlying scenario explained in Section \ref{subsec:modelupdates}, except that the prior in this case is improper.}
Then agent $i$'s initial belief for $\theta^*$ \new{(conditional on the signal $\mu_i$) can be computed to be} $\N(\mu_i,1/\tau_i)$.

\begin{prop}\label{prop:gaussian}
Assume that the initial belief of agent $i$ is normally distributed as $\N(\mu_i, 1/\tau_i)$ and the common prior is a flat prior, then this initial condition is feasible and $f_i^{(t)}$ converges to the point distribution at
$$\theta_{\max} = \sum_{i=1}^{N} c_i \mu_i,\quad c_i =\frac{v_i \tau_i}{ \sum_{j=1}^{N} v_j \tau_j}.$$
\end{prop}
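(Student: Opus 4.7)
The plan is to reduce the proposition to a direct application of Corollary \ref{cor:Rkcontinuous}, with the main work being a short calculation to identify the maximizer of $L$.

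First I would compute the normalized initial beliefs. Since $f_* \equiv 1$, we have
\[
g_i^{(0)}(\theta) = f_i^{(0)}(\theta) = \sqrt{\frac{\tau_i}{2\pi}} \exp\paren{-\frac{\tau_i}{2}(\theta-\mu_i)^2},
\]
which is bounded and continuous on $\R$. Then the weighted likelihood function is
\[
L(\theta) = \prod_{i=1}^N g_i^{(0)}(\theta)^{v_i} = C \exp\paren{-\frac{1}{2}\sum_{i=1}^N v_i \tau_i (\theta-\mu_i)^2},
\]
where $C$ is a positive constant depending on the $\tau_i$'s and $v_i$'s but not on $\theta$.

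Next I would identify the unique maximizer of $L$ by completing the square in the exponent. Expanding $\sum_i v_i \tau_i (\theta - \mu_i)^2$ as a quadratic in $\theta$ with positive leading coefficient $\sum_i v_i \tau_i > 0$, its minimum is attained at $\theta_{\max} = \paren{\sum_i v_i \tau_i \mu_i}/\paren{\sum_j v_j \tau_j} = \sum_i c_i \mu_i$, which is exactly the expression in the statement. Since the coefficient on $\theta^2$ is strictly positive, $L$ is a (rescaled) Gaussian in $\theta$, so $\theta_{\max}$ is the unique maximizer and $L(\theta) \to 0$ as $|\theta|\to\infty$. In particular there is a bounded set $S$ (e.g.\ any sufficiently large closed ball around $\theta_{\max}$) such that $\sup_{\theta \notin S} L(\theta) < L(\theta_{\max}) = \sup_\theta L(\theta)$.

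Finally I would verify the remaining hypotheses of Corollary \ref{cor:Rkcontinuous}: $\Theta = \R$ (so $k=1$), each $g_i^{(0)}$ is a bounded continuous function, and $L$ has the required decay just established. Applying the corollary with $\theta_{\max}$ a singleton yields both feasibility of the initial condition and convergence of $f_i^{(t)}$ to the point distribution at $\theta_{\max}$. There is no serious obstacle: all the analytic heavy lifting is carried out in the Master Theorem \ref{thm:master} and its specialization, Corollary \ref{cor:Rkcontinuous}. The only point worth a second look is the computation of $\theta_{\max}$, which is a standard completing-the-square exercise; the rest is a straightforward check of hypotheses.
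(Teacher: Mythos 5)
Your proposal is correct and follows essentially the same route as the paper: compute $L$, observe that it is proportional to a Gaussian maximized uniquely at $\theta_{\max}=\sum_i c_i\mu_i$, note that the $g_i^{(0)}$ are bounded and continuous and that $L$ decays at infinity, and invoke Corollary \ref{cor:Rkcontinuous}. The paper's proof is just a more compressed version of the same argument.
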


\begin{shownto}{abb}
\begin{proofsketch}
The weighted likelihood function
is proportional to a Gaussian with the maximum at $\theta_{\max}$.
Apply Corollary \ref{cor:Rkcontinuous}.
\end{proofsketch}
\end{shownto}

\begin{shownto}{full}
\begin{proof}
The weighted likelihood function is
$$L(\theta) \propto \exp \left[ - \frac{1}{2}\sum_{i=1}^{N} v_i \tau_i (\theta- \mu_i)^2 \right],$$
\new{and we can see that this} is proportional to a Gaussian with the maximum at $\theta_{\max}$.
\new{Since $g_i^{(0)}$ are bounded and continuous and $L$ decays quickly at infinity, the convergence} follows from Corollary \ref{cor:Rkcontinuous}.
\end{proof}
\end{shownto}

If we interpret the initial belief $\N(\mu_i,1/\tau_i)$ as a scalar belief $\mu_i$ with confidence $\tau_i$, then the consensus belief \new{$\theta_{\max}$} is the weighted average of agents' signals \new{$\mu_i$} with weights \new{$c_i$} proportional to her centrality $v_i$ and her precision $\tau_i$.
Therefore, an agent who is more \text{centrally located} and has a more \text{informative signal} has more influence over the consensus belief. This phenomenon cannot be captured by the standard DeGroot model which has no notion of quality of signals.

We can evaluate quality of learning by the variance of the consensus. \new{This works as follows.} The consensus \new{$\theta_{\max}$} is a function of the signals \new{$\mu_i$}, and the signals are random, so the consensus can be viewed as a random variable. \new{Specifically,} it is an unbiased estimator of the true state $\theta^*$, and the lower its variance, the higher the quality of learning. Since $\mu_i \sim \N(\theta^*,1/\tau_i)$ are independent, $\theta_{\max} = \sum_i c_i \mu_i \sim \N(\theta^*, S)$ \new{remains a Gaussian, with variance}
$$S = \frac{\sum_i v_i^2 \tau_i}{(\sum_i v_i \tau_i)^2}.$$ \new{We now investigate the effect of increasing the precision $\tau_k$ on the variance $S$. We find that, all other variables being fixed, the result depends on the threshold} $V_k = 2 \sum_{i \neq k} v_i^2 \tau_i / \sum_{i \neq k} v_i \tau_i$ \new{of the centrality $v_k$}. By direct computation, if $v_k \leq V_k$,
then increasing $\tau_k$ always decreases $S$, while if $v_k>V_k$, then increasing $\tau_k$ decreases $S$ only if $\tau_k \geq (v_k-V_k)\sum_{i \neq k} v_i \tau_i/v_k^2$ and \emph{increases} $S$ otherwise.
\new{Thus, increasing the precision $\tau_k$ usually decreases the variance and improves the quality of learning, \emph{except} when the centrality $v_k$ is high and the precision $\tau_k$ is low.} This result has an implication for social planners who want to encourage good learning. The social planner cannot control the network structure $v_k$ but she might be able to control the signal precision $\tau_k$.
If an agent is central but relatively uninformed, increasing the precision of that agent can reduce learning quality. \new{An intuitive explanation might be that} the agent now has some confidence to use her centrality to impose her less informed opinion on the consensus. Only when that central agent is sufficiently informed will additional precision improve learning quality. Therefore, if a social planner wants to seed centrally located agents (opinion leaders) with new technologies, she must invest enough resources to make those agents well informed; otherwise, the effort can backfire. 

Note that we assume independence in the agents' update rule, not in the initial signals themselves. Therefore, our framework can deal with arbitrarily correlated initial signals. In the case of Gaussian beliefs, if we assume that the correlation of initial signals of agents $i$ and $j$ is $\rho_{ij}$, then the consensus $\theta_{\max}$ is still unbiased with variance $$\frac{\sum_i v_i^2 \tau_i + 2 \sum_{i<j} v_i v_j \sqrt{\tau_i \tau_j} \rho_{ij}}{(\sum_i v_i \tau_i)^2}.$$ This variance is increasing in every correlation $\rho_{ij}$, which indicates that higher correlations in initial signals reduce learning equality, as expected.

\new{Another point of interest is that our consensus does not depend on the geometry of the network beyond the eigenvector centralities; this is true for all of our analysis from Section \ref{sec:concentrate} onwards and not only for this particular Gaussian case. Let us contrast this with the work of} \citet{BBCM}, which also incorporates quality of signals into DeGroot learning. In their model, the quality of signals is binary (informed/uninformed), and each agent takes the average of signals of only her informed neighbors.
They identify the \textit{clustered seeding} problem: if the initially informed agents are closely connected, some agents will ``block'' other agents, reducing the influence of the blocked agents.
To illustrate this point, consider \new{a social network in the shape of} an undirected cycle of
length $N$ and assume that only three adjacent agents $1,2,3$ are initially informed with the same precision. In our model, the three agents' signals have equal weights: $c_1=c_2=c_3=1/3$. But in their model, agent 2 is blocked by agents 1 and 3: $c_1=c_3 \to 1/2$ and $c_2 \to 0$ as the number of agents $N \to \infty$. \new{Note that all three agents do have the same centralities and the same precisions, yet the geometry of the network still allows them to block each other. Why this occurs may be explained as follows: as the ``informed segment'' of the network expands, an agent who has just become newly informed will get the opinion of agent 1 or 3 first. Surely she will not be very confident because she has just heard of one opinion, but in the model of \citet{BBCM}, her opinion counts as much as everyone else, and so the opinions of agents 1 and 3 get injected back into the informed segment and reinforce themselves to an inappropriate degree. In contrast, our model circumvents this problem by allowing agents to express their confidence. Thus, a newly informed agent is not very confident and her opinion counts less; and it turns out that this alone is enough to offset the blocking disadvantage and allows the opinion of agent 2 to come through in the consensus.}
 We conclude that the clustered seeding problem occurs when agents can communicate whether they are informed but cannot communicate precisions of their signals.
When agents can communicate precisions of signals, even if beliefs are naively updated, the problem disappears.
\new{The conclusion is that it is of vital importance that agents be able to communicate opinions freely and expressively, and not just state their beliefs, in order for blocking to not occur.}

\section*{Acknowledgments}
We thank Ben Golub for insightful comments in the early stage of this project. We are also grateful to Nicole Immorlica and Evan Sadler for suggestions that improved the paper.

\bibliographystyle{ACM-Reference-Format}
\bibliography{references}

\begin{shownto}{abb}
\newpage
\appendix
\section{Full Proofs and Details of Examples}

\subsection*{Proof of Proposition \ref{prop:justifyrule}}

By definition, the updated belief is
$$f(\theta)=p\paren{\theta|(X_j)_{j \in N(i)}}
= \frac{p(\theta)p\paren{(X_j)_{j \in N(i)} |\theta} }{\int_{\theta' \in \Theta} p(\theta') p\paren{(X_j)_{j \in N(i)} |\theta'} }
= \frac{p(\theta) \prod_{j \in N(i)} p\paren{X_j |\theta} }{\int_{\theta' \in \Theta} p(\theta') \prod_{j \in N(i)} p\paren{X_j |\theta'} },$$
where the second equality is Bayes' rule and the last equality holds because $X_j$'s are conditionally independent given any $\theta$.
The integral in the denominator is nonzero and finite due to the marginalizability assumption.
Notice that $p(\theta)=f_*(\theta)$ and
$$p(X_j|\theta) = \frac{p(\theta|X_j)}{p(\theta)} p(X_j) = \frac{f_j(\theta)}{f_*(\theta)}p(X_j).$$
Substituting this into our first equation
gives the desired formula.

\subsection*{Proof of Lemma \ref{lem:primitive}}

By strong connectedness, for every $i$ and $j$, there is a path $P_{ij}$ from $i$ to $j$. Let $k$ be the maximum length of all $P_{ij}$'s. We can extend a path to arbitrary lengths by using self-loops, so for any $i$ and $j$, there is a path from $i$ to $j$ of length $k$. Because the $(i,j)$ entry of $A^k$ is the number of paths from $i$ to $j$ of length $k$ (see e.g. Lemma \ref{lem:path_recurrence}), $A^k$ has positive entries. Therefore $A$ is primitive.

\subsection*{Proof of Lemma \ref{lem:spectral_radius_greater_than_one}}

Let $r=\rho(A)$. Lemma \ref{lem:primitive} implies that $A$ is primitive, so by Theorem \ref{thm:perron}, $r$ is an eigenvalue of $A$,
and there is an eigenvector $v$ associated with $r$ with positive entries. Pick an arbitrary edge $(i,j)$ with $i \neq j$. Then
$$rv_i=\sum_{i'=1}^N A_{ii'}v_{i'} \geq v_i + v_j > v_i,$$
implying that $r>1$.

\subsection*{Proof of Lemma \ref{lem:newholder}}

Apply H\"older's inequality (Theorem \ref{thm:holder}) to the functions $f_*f_i^p$.

\subsection*{Proof of Lemma \ref{lem:interpolationLp}}

First assume that $r < \infty$. There are $p_1,p_2 \in (0,1)$ such that $p_1+p_2=1$ and $p_1p+p_2r=q$. If $f \in L^p_* \cap L^r_*$, then $f^p,f^r \in L^1_*$.
By Lemma \ref{lem:newholder}, $f^{p_1p+p_2r}=f^q \in L^1_*$, so that $f \in L^q_*$.

We directly prove the case where $r=\infty$. Let $M=\norm{f}_{\infty,*}$. Then $f \leq M$ almost everywhere, so
$$\int_{\theta \in \Theta} f_*f^q \leq
M^{q-p}\int_{\theta \in \Theta} f_*f^p<\infty.$$
Therefore $f \in L^q_*$.

\subsection*{Proof of Lemma \ref{lem:path_recurrence}}

The recurrence follows by considering that a path from $i$ to $j$ passes through exactly one node $k \in N(j)$ just before arriving at $j$. Let $P^{(t)}$ be the matrix whose $(i,j)$ entry is $P_{ij}^{(t)}$. The recurrence can be written in the form $P^{(t+1)}=P^{(t)}A$, with the initial condition $P^{(0)}=I$. It follows by induction that $P^{(t)}=A^t$.

\subsection*{Proof of Proposition \ref{prop:update_t_g}}

We induct on $k$. Any initial condition is 0-feasible. For any $i$, $\prod_{j=1}^N (g_j^{(0)})^{P_{ji}^{(0)}} = g_i^{(0)} \in L^1_*$ with norm 1, so the formula holds. This proves the base case $k=0$.

Assume that the proposition holds for $k$. For a $(k+1)$-feasible initial condition, the updated beliefs given by equation \eqref{eq:updateg}
$$g_i^{(k+1)}(\theta) = \frac{\prod_{j \in N(i)}{g_j^{(k)}(\theta)}}
{\int_{\theta' \in \Theta} f_*(\theta') \prod_{j \in N(i)}{g_j^{(k)}(\theta')}}$$
are well-defined.
By the induction hypothesis, this equals
\begin{equation}
\label{eq:updateformula}
g_i^{(k+1)}(\theta) =
\frac{\prod_{j \in N(i)} \prod_{\ell=1}^{N} g_\ell^{(0)}(\theta)^{P_{\ell j}^{(k)}} }{\int_{\theta' \in \Theta} f_*(\theta') \prod_{j \in N(i)} \prod_{\ell=1}^{N} g_\ell^{(0)}(\theta')^{P_{\ell j}^{(k)}} } = 
\frac{\prod_{\ell=1}^{N} g_\ell^{(0)}(\theta)^{P_{\ell i}^{(k+1)}} }{\int_{\theta' \in \Theta} f_*(\theta') \prod_{\ell=1}^{N} g_\ell^{(0)}(\theta')^{P_{\ell i}^{(k+1)}} },
\end{equation}
where the second equality holds because the exponent of $g_\ell^{(0)}$ is $\sum_{j \in N(i)} P_{\ell j}^{(k)}=P_{\ell i}^{(k+1)}$ by Lemma \ref{lem:path_recurrence}.
So the formula holds. The denominator of the right-hand expression shows that $\prod_{j=1}^N (g_j^{(0)})^{P_{ji}^{(k+1)}} \in L^1_*$ with nonzero norm.

Conversely, suppose that $\prod_{j=1}^N (g_j^{(0)})^{P_{ji}^{(t)}} \in L^1_*$ with nonzero norm for all $t\leq k+1$. By the induction hypothesis, the initial condition is $k$-feasible. The right-hand expression of equation \eqref{eq:updateformula} is well-defined, so by the same argument as above, it can be written as the expression for $g_i^{(k+1)}(\theta)$ in the update rule. Therefore $g_i^{(k+1)}$ is well-defined and the initial condition is $(k+1)$-feasible.

\subsection*{Proof of Proposition \ref{prop:underlyinggood}}

Recall that agent $i$'s initial belief is $f_i^{(0)}(\theta)=p(\theta|X_i)$, where $X_i$ is her signal.
Denote by $(X_i^{n_i})_{i=1}^{N}$ the ordered tuple of $n_i$ copies of $X_i$ for $1\leq i \leq N$.
By the marginalizability assumption,
\begin{align*}
p\big( (X_i^{n_i})_{i=1}^{N} \big) &= \int_{\theta \in \Theta} f_*(\theta) p\big( (X_i^{n_i})_{i=1}^{N} | \theta\big) = \int_{\theta \in \Theta} f_*(\theta) \prod_{i=1}^{N} p\paren{X_i| \theta}^{n_i} \\
&= \int_{\theta \in \Theta} f_*(\theta) \prod_{i=1}^{N} \bigg(\frac{f_i^{(0)}(\theta) p(X_i)  }{f_*(\theta)}\bigg)^{n_i} =
\bigg( \prod_{i=1}^{N} p(X_i)^{n_i}\bigg)
\int_{\theta \in \Theta} f_* \prod_{i=1}^{N} (g_i^{(0)})^{n_i}
\end{align*}
is nonzero and finite. 
So $\prod_{i=1}^{N} (g_i^{(0)})^{n_i} \in L_*^{1}$ with nonzero norm.

\subsection*{Proof of Proposition \ref{prop:underlyingfeasible}}

Propositions \ref{prop:update_t_g} and \ref{prop:underlyinggood} imply the first statement. As in Proposition \ref{prop:justifyrule}, we compute
\begin{align*}\label{eqn:updated_belief_repeated_signals}
p\big(\theta|I_i^{(t)}\big) =
p\bigg( \theta \Big| \Big( X_j^{P_{ji}^{(t)}} \Big)_{j=1}^N \bigg)
= \frac{f_*(\theta) \prod_{j=1}^{N} g_j^{(0)}(\theta)^{P_{ji}^{(t)}} }{ 
\int_{\theta' \in \Theta} f_*(\theta') \prod_{j=1}^{N} g_j^{(0)}(\theta')^{P_{ji}^{(t)}}
    },
\end{align*}
which equals $f_i^{(t)}(\theta)$ by Proposition \ref{prop:update_t_g}.

\subsection*{Proof of Proposition \ref{prop:diagramworks}}

The region $\mc{U}\neq \varnothing$ because an underlying scenario clearly exists for any $\Theta$. Examples of initial conditions in other regions are given in the following table, where $c,c_1,c_2$ are normalizing constants.

\begin{center}
\begin{tabular}{| c | c | c | c | c |}
\hline
Region & $f_*(\theta)$ & $f_1^{(0)}(\theta)$ & $f_2^{(0)}(\theta)$ \\ \hline
$\mc{P} \setminus \mc{F}_1$ & $\theta^{-3}$ & $c\theta^{-2}$ & $c\theta^{-2}$ \\ \hline
$\mc{IP} \setminus \mc{F}_1$ & $\theta^{-3}$ for odd $\theta$, $1$ for even $\theta$ & $c\theta^{-2}$ & $c\theta^{-2}$ \\ \hline
$\paren{\mc{F}_k \cap \mc{P}} \setminus \paren{\mc{F}_{k+1}\cap \mc{P}}$ & $\theta^{-(2^{k+1}+1)}$ & $c\theta^{-2^{k+1}}$ & $c\theta^{-2^{k+1}}$ \\
\hline
$\paren{\mc{F}_k \cap \mc{IP}} \setminus \paren{\mc{F}_{k+1} \cap \mc{IP}}$ & $\theta^{-(2^{k+1}+1)}$ for odd $\theta$, $1$ for even $\theta$ & $c\theta^{-2^{k+1}}$ & $c\theta^{-2^{k+1}}$ \\ \hline
$\paren{\mc{F}\cap\mc{P}} \setminus \mc{U}$ & $\theta^{-3}$ & $c_1 \theta^{-2}$ & $c_2 \theta^{-4}$ \\ \hline
$\mc{F}\cap \mc{IP}$ & $1$ & $c\theta^{-2}$ & $c\theta^{-2}$ \\ \hline
\end{tabular}
\end{center}

An initial condition is in $\mc{P}$ if $\int_{\theta} f_*<\infty$ and is in $\mc{IP}$ otherwise.
For any $t\geq 1$, there are $2^{t-1}$ paths from any node $i$ to any node $j$. By Proposition \ref{prop:update_t_g}, an initial condition is in $\mc{F}_k$ if and only if
$$ \int_{\theta} f_* (g_1^{(0)})^{2^{t-1}} (g_2^{(0)})^{2^{t-1}}$$
is nonzero and finite for all $1 \leq t\leq k$, and it is in $\mc{F}$ if and only if this is true for all $t \geq 1$.
These checks are straightforward using the fact that $\sum_{\theta=1}^{\infty} \theta^{-p}$ is finite for $p>1$ and infinite for $p \leq 1$. 
To check that the example for $\paren{\mc{F}\cap\mc{P}} \setminus \mc{U}$ does not have an underlying scenario, use Proposition \ref{prop:underlyinggood} and the fact that
$$\int_{\theta} f_* (g_1^{(0)})^2 = c_1^2 \int_{\theta} \theta^{-1}=\infty.$$

\subsection*{Proof of Proposition \ref{prop:nondeg}}

Pick any node $i$ and pick $k$ such that there is a path from any $j$ to $i$ of length $k$. Let $n_j=P_{ji}^{(k)}>0$. Then Proposition \ref{prop:update_t_g} implies that
$(g_1^{(0)})^{n_1}\dots (g_N^{(0)})^{n_N} \in L^1_*$
with nonzero norm. If the initial condition is degenerate, then this function vanishes almost everywhere and must have zero norm. Therefore the initial condition is nondegenerate.

\subsection*{Proof of Proposition \ref{prop:sufficient_feasibility}}

By Proposition \ref{prop:update_t_g}, it suffices to prove that for any $n_1,\dots,n_N\geq 0$ that do not vanish simultaneously,
$\prod_{i=1}^{N} (g_i^{(0)})^{n_i} \in L_*^{1}$ with nonzero norm. Let $M=n_1+\dots+n_N \geq 1$.
For each $i$, there is $M \leq p\leq \infty$ such that 
$g_i^{(0)} \in L_*^p$. Because $g_i^{(0)} \in L_*^1$ by definition, by Lemma \ref{lem:interpolationLp}, $g_i^{(0)}\in L^M_*$.
Then by Lemma \ref{lem:newholder}, $\prod_{i=1}^{N} (g_i^{(0)})^{n_i/M} \in L_*^M$.
Hence $\prod_{i=1}^{N} (g_i^{(0)})^{n_i} \in L_*^1.$
This function has nonzero norm because the set $\{\theta \in \Theta: g_i^{(0)}(\theta)>0 \text{ for all } i\}$ does not have measure zero.

\subsection*{Proof of Proposition \ref{prop:suff_feas_weakest}}

Pick any $f \in S$ and $1 \leq p <\infty$. We must show that $f \in L_*^p$. If $f$ vanishes almost everywhere, then clearly $f \in L_*^p$. So suppose otherwise. An initial condition with $g_i^{(0)}=f$ is well-defined because $f \in L^1_*$, and it is nondegenerate, so it is feasible by hypothesis.

By Proposition \ref{prop:update_t_g}, the function
$f^{P_{1i}^{(t)}+\dots+P_{Ni}^{(t)}} \in L^1_*$ for any $i$ and $t$.
By Lemma \ref{lem:path_recurrence}, these $P_{ji}^{(t)}$'s are entries of $A^t$, which go to infinity as $t\to\infty$ by Theorem \ref{thm:perron} and Lemma \ref{lem:spectral_radius_greater_than_one}.
So we obtain a sequence $p_t \to \infty$ as $t\to\infty$ such that $f \in L^{p_t}_*$. By Lemma \ref{lem:interpolationLp}, $f \in L^p$.

\subsection*{Proof of Corollary \ref{cor:suffcientfeas}}

By definition, $g_i^{(0)} \in L_*^{1}$, so by Lemma \ref{lem:interpolationLp}, $g_i^{(0)} \in L_*^{p}$ for any $1 \leq p < \infty$. Now apply Proposition \ref{prop:sufficient_feasibility}.

\subsection*{Proof of Proposition \ref{prop:LLp}}

By Proposition \ref{prop:update_t_g}, $\prod_{j=1}^N (g_j^{(0)})^{P_{ji}^{(t)}} \in L^1_*$ for any $i$ and $t$.
Recall that $\sum_i v_i=1$. By Lemma \ref{lem:newholder}, we can take the weighted geometric mean for all $i$ weighted by $v_i$ to get
$$\prod_{j=1}^N (g_j^{(0)})^{\sum_{i=1}^N v_iP_{ji}^{(t)}} \in L^1_*.$$
By Lemma \ref{lem:path_recurrence}, $P_{ji}^{(t)}$ is the $(j,i)$ entry of $A^t$, so the quantity in the exponent is the $j$th entry of $A^tv$. Because $Av=rv$ where $r=\rho(A)$, this quantity is $r^t v_j$.
We conclude that $L \in L^{r^t}_*$ for all $t\geq 0$.
Because $r>1$ by Lemma \ref{lem:spectral_radius_greater_than_one}, Lemma \ref{lem:interpolationLp} implies that $L \in L^p_*$ for all $1 \leq p < \infty$. Its norms are nonzero because the initial condition is nondegenerate (Prop. \ref{prop:nondeg}).

\subsection*{Proof of Lemma \ref{lem:Lzeroornot}}

If $L(\theta)>0$, then $f_i^{(0)}(\theta)>0$ for all $i$. The desired result follows from Proposition \ref{prop:update_t_g}.
Let $T$ be large enough so that $P_{ij}^{(t)}>0$ for all $i,j$ and $t\geq T$.
If $L(\theta)=0$, then $f_j^{(0)}(\theta)=0$ for some $j$.  Then $f_i^{(t)}(\theta)=0$ for all $i$ and $t\geq T$ by Proposition \ref{prop:update_t_g}.

\subsection*{Proof of Proposition \ref{prop:ratio_of_f_asymptotic}}

Note that $f_i^{(t)}(\theta_2)>0$ by Lemma \ref{lem:Lzeroornot}. By Proposition \ref{prop:update_t_g} for $\theta_1$ and $\theta_2$,
$$ \frac{g_i^{(t)}(\theta_1)}{g_i^{(t)}(\theta_2)} = \prod_{j=1}^{N} \Bigg( \frac{g_j^{(0)}(\theta_1)}{g_j^{(0)}(\theta_2)} \Bigg)^{P_{ji}^{(t)}}.$$
By Lemma \ref{lem:path_recurrence}, $P_{ji}^{(t)}$ is the $(j,i)$ entry of $A^t$.
By Theorem \ref{thm:perron}, this quantity equals
$$r^t \left( \frac{(vw^\top)_{ji}}{w^\top v} + \delta_{i,j}^{(t)} \right) = r^t \left( \frac{v_jw_i}{w^\top v} + \delta_{i,j}^{(t)} \right),$$ where $\delta_{i,j}^{(t)} \to 0$ as $t\to \infty$. The result follows by letting $\eps_{i,j}^{(t)} = \delta_{i,j}^{(t)} w^\top v/w_i$.

\subsection*{Proof of Proposition \ref{prop:maxdominate}}

Apply Proposition \ref{prop:ratio_of_f_asymptotic}.
As $t\to\infty$, the expression in the bracket converges to $L(\theta_1)/L(\theta_2)<1$, so this expression is smaller than some $\alpha<1$ for large $t$. By Lemma \ref{lem:spectral_radius_greater_than_one}, the exponent of this expression goes to infinity, so $f_i^{(t)}(\theta_1)/f_i^{(t)}(\theta_2)\to 0$ as $t\to\infty$.

\subsection*{Proof of Lemma \ref{lem:dominatedbound}}

All statements in this paragraph are meant to hold for almost every $x \in S$. For any $0\leq y<\delta_i/2$,
$$A(x) a_i(x)^y=A(x)^{1-y/\delta_i}(A(x) a_i(x)^{\delta_i})^{y/\delta_i}\leq A(x)^{1/2},$$
because $1-y/\delta_i>1/2$.
So if $\delta=\min \delta_i/2$, then
for any $0\leq y<\delta$, $A(x) a_i(x)^y\leq A(x)^{1/2}$ for all $i$.
For any $0 \leq y_i<\delta/N$,
$$\prod_{i=1}^N a_i(x)^{1+y_i}
=\prod_{i=1}^N (A(x) a_i(x)^{Ny_i})^{1/N}\leq A(x)^{1/2}.$$
Let $0<\delta'<1$ be such that
$(1+\delta')/(1-\delta')=1+\delta/N$. Then for any $\abs{y_i}< \delta'$, if $y_j = \min y_i$,
$$\prod_{i=1}^N a_i(x)^{1+y_i}
=\bigg( \prod_{i=1}^N a_i(x)^{(1+y_i)/(1+y_j)}\bigg)^{1+y_j}\leq A(x)^{(1+y_j)/2}\leq A(x)^{(1-\delta')/2},$$
because $1 \leq (1+y_i)/(1+y_j) < 1+\delta/N$.
Finally, because
$\prod_{i=1}^N a_i(x)^{1+y_i} \prod_{i=1}^N a_i(x)^{1-y_i}=A(x)^2$,
for any $\abs{y_i}< \delta'$,
$$A(x)^{(3+\delta')/2}\leq \prod_{i=1}^N a_i(x)^{1+y_i} \leq A(x)^{(1-\delta')/2}.$$

Let $T$ be such that for every $i$ and $t\geq T$, $\big\lvert\eps_i^{(t)}\big\rvert<\delta'$. Then, for $t\geq T$,
$$\int_{x\in S} c(x) A(x)^{r_t(3+\delta')/2} \leq I_t \leq \int_{x\in S} c(x) A(x)^{r_t(1-\delta')/2}.$$
Note that the right-hand expression may be infinite. 
Let
$$f_t(x)=c(x) A(x)^{r_t(3+\delta')/2},\quad
g_t(x)=c(x) A(x)^{r_t(1-\delta')/2}.$$
Then $f_T \in L^1$ and  $f_t, g_t\to c 1_{A(x)=1}$ pointwise a.e. as $t\to\infty$. For large $t$, $r_t(1-\delta')>r_T(3+\delta')$, so $f_t \leq g_t \leq f_T$ pointwise a.e.
By Lebesgue's dominated convergence theorem, $c1_{A(x)=1} \in L^1$ and $\int_{x} f_t, \int_{x} g_t\to \int_{A(x)=1} c(x)$ as $t\to\infty$,
so $I_t$ converges to this same limit.

\subsection*{Proof of Proposition \ref{prop:lessthanapoint}}

Apply Proposition \ref{prop:ratio_of_f_asymptotic} to get the expression into the form
$$\int_{\theta \in A} \frac{f_*(\theta)}{f_*(\theta')}
\bracket{
\prod_{j=1}^{N}
\Bigg( \frac{g_j^{(0)}(\theta)}{g_j^{(0)}(\theta')}\Bigg)^{v_j+\eps_{i,j}^{(t)}}}
^{w_i r^t/w^\top v}.$$
Then apply Lemma \ref{lem:dominatedbound} with $S=A$, $$c(\theta) = \frac{f_*(\theta)}{f_*(\theta')},\quad a_j(\theta) = \Bigg( \frac{g_j^{(0)}(\theta)}{g_j^{(0)}(\theta')}\Bigg)^{v_j}, \quad \eps_j^{(t)}=\frac{\eps_{i,j}^{(t)}}{v_j}, \quad r_t = \frac{w_i r^t}{w^\top v}.$$
The hypotheses of the lemma are readily verified using the hypotheses of the proposition and Lemma \ref{lem:spectral_radius_greater_than_one}.

\subsection*{Proof of Lemma \ref{lem:reversebound}}

By an analogous argument to the one in Lemma \ref{lem:dominatedbound} with reversed inequalities, there are $0<\delta'<1$ and $T$ such that, for $t\geq T$,
$$\int_{x\in S} c(x) A(x)^{r_t(3+\delta')/2} \geq I_t \geq \int_{x\in S} c(x) A(x)^{r_t(1-\delta')/2}.$$
The result is immediate if $A(x)=1$ almost everywhere. Otherwise, the right-hand integrand converges pointwise to a function that is infinite on a set of positive measure. By Fatou's lemma, the right-hand integral converges to infinity. Then $I_t \to \infty$ as $t \to\infty$.

\subsection*{Proof of Theorem \ref{thm:master}}

By Propositions \ref{prop:lessthanapoint} and \ref{prop:morethanapoint},
$$\frac{\int_{\theta\in A}f_i^{(t)}(\theta)}{\int_{\theta\in B}f_i^{(t)}(\theta)} \to \frac{M_1}{M_2},$$
where $M_1$ and $M_2$ are limits in the two respective propositions. Because $B$ has positive measure, $M_2>0$. Our final hypothesis implies that either $M_1=0$ or $M_2=\infty$, so $M_1/M_2=0$. Then our result follows from $\int_{\theta\in B}f_i^{(t)}(\theta) \leq 1$.

\subsection*{Proof of Proposition \ref{prop:nearessup}}

Note that $S$ has positive measure. Apply the Master Theorem \ref{thm:master}. Let $A=\Theta \setminus S$.
In the second case, pick any $\theta' \in T$ and let $B=T$.
Now consider the first case. If there is $\theta_1 \in S$ that minimizes $L$ in $S$,
then because $L$ is not constant a.e. on $S$, $\{\theta \in \Theta: L(\theta)> L(\theta_1)\}$ has positive measure.
By continuity of measure from below, there is $M'>L(\theta_1)$ such that $U=\{\theta \in \Theta: L(\theta)> M'\}$ has positive measure. Let $\theta'=\theta_1$ and $B=U$.
Otherwise, pick a sequence $\theta_i \in S$ such that $L(\theta_i) \to \inf_{\theta\in S} L(\theta)$. Define $S_i=\{\theta \in \Theta: L(\theta)\geq L(\theta_i)\}$.
Then $S=\cup_i S_i$.
By subadditivity of measure, one of $S_i$ has positive measure. There is $j$ such that $L(\theta_j)<L(\theta_i)$. Let $\theta'=\theta_j$  and $B=S_i$.

The result follows by applying the Master Theorem to $A$, $B$ and $\theta'$.
The existence of $\delta_i>0$ follows from the fact that $L(\theta)/L(\theta')\leq \alpha<1$ for all $\theta \in A$ and $g_i^{(0)}$ is bounded a.e. on $A$. The existence of $\delta'_i>0$ is similar.

\subsection*{Details of Example \ref{ex:abgraph}}
The weighted likelihood function is
$L(\theta)=g_x^{(0)}(\theta)^{1/2}g_y^{(0)}(\theta)^{1/2}$.
By Proposition \ref{prop:update_t_g},
$$f_x^{(t)}
=\frac{f_{\theta^*} \big(g_x^{(0)}g_y^{(0)}\big)^{(a+b)^t/2} \big( g_x^{(0)}/g_y^{(0)}\big)^{(a-b)^t/2}
}{\int_{\Theta} f_{\theta^*} \big(g_x^{(0)}g_y^{(0)}\big)^{(a+b)^t/2} \big(g_x^{(0)}/g_y^{(0)}\big)^{(a-b)^t/2}
}.$$

\subsection*{Details of Example \ref{ex:finite}}

$$f_x^{(t)}(0)=\frac{1}{1+((1-\alpha)/\alpha)^{(a-b)^t}}.$$

\subsection*{Proof of Theorem \ref{thm:Rkmain}}

If $\esssup_{|\theta|>r} L(\theta) <M< L_{\esssup}$, then $S_M$ has positive measure and $S_M \setminus B_r(0)$ has measure zero.
For any $\theta \in \theta_{\esssup}$ and $\delta >0$, $B_\delta(\theta) \cap S_M$ has positive measure, so $B_\delta(\theta) \cap B_r(0)$ has positive measure, so $\theta \in \overline{B_r(0)}$.
We conclude that $\theta_{\esssup} \subseteq \overline{B_r(0)}$ is bounded.

By Lemma \ref{lem:piecewisegood}, we can take $(U_i)_{i\geq 1}$ that works for all $g_i^{(0)}$. Assume that $U_1,\dots,U_n$ are the only ones whose closures intersect $\overline{B_r(0)}$. We claim that $\theta_{\esssup}=\cup_{i=1}^n \{\theta \in \overline{U_i}: \widetilde{L}_i(\theta)=L_{\esssup}\}$, where $\widetilde{L}_i$ is the continuous extension of $L$ on $\overline{U_i}$.
If $\theta \in \overline{U_i}$ is such that $\widetilde{L}_i(\theta)=L_{\esssup}$, then for every $M<L_{\esssup}$, there is $\delta'>0$ such that $B_{\delta'}(\theta) \cap U_i \subseteq S_M$ by continuity. So for any $\delta>0$, $B_{\delta}(\theta) \cap S_M \supseteq B_{\delta}(\theta) \cap B_{\delta'}(\theta) \cap U_i$ has positive measure because the latter is a nonempty open set. Hence $\theta \in \theta_{\esssup}$. Conversely, if $\theta \in \theta_{\esssup}$, then for every $M<L_{\esssup}$ and $\delta >0$,  $B_{\delta}(\theta) \cap S_M$ intersects some $U_i$. For $\delta<1$, $B_{\delta}(\theta) \subseteq B_1(\theta)$, so there are finitely many possible choices of $U_i$'s. If we pick sequences $M_j\to L_{\esssup}$ and $\delta_j \to 0$, some choice must repeat infinitely many times. Thus
$\theta \in \overline{U_i}$ for some $i$. 
Because $\theta \in \overline{B_r(0)}$, we have $1\leq i\leq n$.
Since we let $M\to L_{\esssup}$, $\widetilde{L}_i(\theta) \geq L_{\esssup}$. If $\widetilde{L}_i(\theta) > L_{\esssup}$, then there is a nonempty open set on which $L>L_{\esssup}$, a contradiction.
So $\widetilde{L}_i(\theta) = L_{\esssup}$, proving the claim. It follows from the claim that $\theta_{\esssup}$ is closed, so because it is bounded, it is compact.

Proposition \ref{prop:nearessup} implies that for any $M'<M<L_{\esssup}$, as $t\to\infty$,
$\int_{\theta \in S_M} f_i^{(t)}(\theta) \to 1$.
To prove the theorem, it suffices to show that there is $M'<M<L_{\esssup}$ such that $S_M\setminus U$ has measure zero. For large $M$, this is equivalent to $(S_M\cap B_r(0))\setminus U$ having measure zero, which is equivalent to $(S_M \cap U_i \cap B_r(0))\setminus U$ having measure zero for all $1\leq i\leq n$.
Suppose to the contrary that there is an increasing sequence $M_j \to L_{\esssup}$ such that $(S_{M_j} \cap U_i \cap B_r(0))\setminus U$ has positive measure. Pick $\theta_j \in (S_{M_j} \cap U_i \cap B_r(0))\setminus U$. Then $\theta_j$ is a bounded sequence in $U_i$ such that $L(\theta_j) \to L_{\esssup}$. We can pass to a subsequence and assume that $\theta_j \to \theta \in \overline{U_i}$. Then $\theta \in \theta_{\esssup}$, so $\theta_j \in U$ for large $j$, a contradiction.
Therefore the desired integral converges to 1.
Note that the nonemptiness of $\theta_{\esssup}$ is part of the conclusion.

To see that the distribution converges to the point distribution at $\theta$ if $\theta_{\esssup}=\{\theta\}$, take $U=B_{\delta}(\theta)$ for small $\delta >0$.

\subsection*{Proof of Proposition \ref{prop:pois}}

The initial condition is feasible by Corollary \ref{cor:suffcientfeas}. We have
$$L(\theta) \propto \frac{(\lambda^*)^{\theta}}{\theta!}.$$
By direct computation, $L$ has a unique maximum at $\lfloor \lambda^* \rfloor$ if $\lambda^*$ is not an integer, and two maxima $\{\lambda^*-1,\lambda^*\}$ if $\lambda^*$ is an integer. The proposition follows from Theorem \ref{thm:maincountable}.

\subsection*{Proof of Proposition \ref{prop:gaussian}}

The weighted likelihood function
$$L(\theta) \propto \exp \left[ - \frac{1}{2}\sum_{i=1}^{N} v_i \tau_i (\theta- \mu_i)^2 \right]$$
is proportional to a Gaussian with the maximum at $\theta_{\max}$.
The result follows from Corollary \ref{cor:Rkcontinuous}.

\end{shownto}

\end{document}